\newtheoremstyle{mythm}{3pt}{3pt}{}{16pt}{\bfseries}{:}{.5em}{}
\theoremstyle{mythm}
\newtheorem{theorem}{Theorem}
\newtheorem{example}{Example}
\newtheorem{definition}{Definition}
\newtheorem{remark}{Remark}
\newtheorem{proposition}{Proposition}
\newtheorem{corollary}{Corollary}
\newtheorem{lemma}{Lemma}
\definecolor{bluestar}{RGB}{219,238,243}
\definecolor{redstar}{RGB}{242,220,218}
\definecolor{g1}{RGB}{215,240,242}
\definecolor{g2}{RGB}{242,220,218}
\definecolor{g3}{RGB}{215,227,191}
\definecolor{g4}{RGB}{251,215,187}
\begin{document}
\title{Placement Delivery Array for Cache-Aided MIMO Systems}
\author{Yifei Huang, Kai Wan~\IEEEmembership{Member,~IEEE,}  Minquan Cheng,   Jinyan Wang, and~Giuseppe~Caire,~\IEEEmembership{Fellow,~IEEE}
	\thanks{Y. Huang, M. Cheng and J. Wang are with the Key Lab of Education Blockchain and Intelligent Technology, Ministry of Education, and also with the Guangxi Key Lab of Multi-source Information Mining $\&$ Security, Guangxi Normal University, 541004 Guilin, China (e-mail: huangyifei59@163.com, chengqinshi@hotmail.com,  wangjy612@gxnu.edu.cn). Y. Huang is also with the School of Science, Guilin University of Aerospace Technology, 541004 Guilin, China.
	}
	\thanks{K. Wan is with the School of Electronic Information and Communications, Huazhong University of Science and Technology,
		Wuhan 430074, China (e-mail: kai\_wan@hust.edu.cn).}
	\thanks{G. Caire is with the Electrical Engineering and Computer Science
			Department, Technische Universität Berlin, 10587 Berlin, Germany (e-mail:
			caire@tu-berlin.de).}
}
\maketitle 

\begin{abstract}
We consider a $(G,L,K,M,N)$ cache-aided multiple-input multiple-output (MIMO) network, where a server equipped with $L$ antennas and a library of $N$ equal-size files communicates with $K$ users, each equipped with $G$ antennas and a cache of size $M$ files, over a wireless interference channel. Each user requests an arbitrary file from the library. The goal is to design coded caching schemes that simultaneously achieve the maximum sum degrees of freedom (sum-DoF) and low subpacketization. In this paper, we first introduce a unified combinatorial structure, termed the MIMO placement delivery array (MIMO-PDA), which characterizes uncoded placement and one-shot zero-forcing delivery. By analyzing the combinatorial properties of MIMO-PDAs, we derive a sum-DoF upper bound of $\min\{KG,  Gt+G\lceil L/G \rceil\}$, where $t=KM/N$, which coincides with the optimal DoF characterization in prior work by Tehrani \emph{et al.}. Based on this upper bound,  we present two novel constructions of MIMO-PDAs that achieve the maximum sum-DoF. The first construction achieves linear subpacketization under stringent parameter constraints, while the second achieves ordered exponential subpacketization under substantially milder constraints. Theoretical analysis and numerical comparisons  demonstrate that the second construction exponentially reduces subpacketization compared to existing schemes while preserving the maximum sum-DoF.
\end{abstract}

\begin{IEEEkeywords}
	MIMO code caching, MIMO placement delivery array, sum-DoF, subpacketization
\end{IEEEkeywords}

\section{Introduction}
\label{sec: Int}
The explosive growth of multimedia services and emerging applications, such as immersive viewing and extended reality, has resulted in an unprecedented increase in mobile data traffic. This trend imposes significant stress on existing wireless infrastructures. To mitigate this challenge, \emph{coded caching}, first introduced by Maddah-Ali and Niesen~\cite{MN}, has emerged as a promising approach that exploits the idle storage at end-user devices and applies coding techniques to reduce delivery bottlenecks.

\subsection{Maddah-Ali and Niesen Coded Caching}
The Maddah-Ali and Niesen (MN) coded caching problem considers a $(K,M,N)$ caching system, where a single-antenna server maintains a library of $N$ equal-size files and communicates with $K$ users over an error-free shared link. Each user has a cache of size $M$ files, with the memory ratio defined as $\gamma = M/N.$ An $F$-division coded caching scheme consists of two phases: \emph{placement} and \emph{delivery}. In the placement phase, each file is divided into $F$ equal-size packets and certain packets are prefetched at the users, independently of their future demands. If packets are stored directly, the strategy is termed \emph{uncoded placement}; otherwise, it is called \emph{coded placement}. The parameter $F$ is referred to as the \emph{subpacketization level}. In the delivery phase, after each user requests one file, the server broadcasts coded packets such that every user can recover its desired file. The transmission load $R$ of a coded caching scheme is defined as the maximal normalized transmission amount among all the requests in the delivery phase. The coded caching gain of a coded caching scheme is defined as $g=(K(1-\gamma))/R$ where $g$ represents the number of users simultaneously served at each transmission and $K(1-\gamma)$ is the load of the conventional uncoded caching scheme. The objective of the MN coded caching problem is to minimize
the load $R$, i.e., maximize the coded caching gain $g$.

When $K\gamma$ is an integer, Maddah-Ali and Niesen~\cite{MN} proposed the well-known MN scheme, which achieves $R = (K(1-\gamma))/g,$
where $g = 1+ K\gamma$. It was later shown in~\cite{WDP,WTDPA} that the MN scheme is optimal under uncoded placement for $K \leq N$. For general parameters $(K,N)$, Yu \emph{et al.}~\cite{YMA} derived a converse on the transmission load and designed an optimal scheme under uncoded placement by eliminating redundancy in the delivery process. Some works have also explored coded placement to further reduce the load~\cite{NKS}.

Despite its optimality in load, the MN scheme suffers from prohibitively large subpacketization, given by $F = \binom{K}{K\gamma},$
which grows exponentially in $K$. Since the subpacketization directly reflects implementation complexity, reducing $F$ while retaining a reasonable load has been the focus of extensive research~\cite{KJAJA,JCLCG,YCTC,SZG,CJYT,CJTY,CJWY,CWZW,LA,YTCC,ASK}. In particular, Yan \emph{et al.}~\cite{YCTC} introduced a powerful combinatorial structure, called the \emph{placement delivery array} (PDA), that simultaneously captures uncoded placement and one-shot\footnote{Each required data packet is transmitted exactly once, either in coded or uncoded form.} delivery. Subsequent work~\cite{SDLT} demonstrated that several earlier results~\cite{KJAJA,SZG,CWZW,LA,YTCC,ASK} can be unified under the PDA framework, establishing PDA as a fundamental tool for studying coded caching with uncoded placement and reduced subpacketization under one-shot delivery.

\subsection{Multi-Antenna Coded Caching}

Following the seminal work of Maddah-Ali and Niesen~\cite{MN}, coded caching has been studied under a wide range of network topologies, including device-to-device (D2D) networks~\cite{JCM}, hierarchical structures~\cite{KNMD,JWTLCEL}, multi-server linear networks~\cite{SMK}, and wireless interference channels such as multiple-input single-output (MISO)~\cite{NMA} and multiple-input multiple-output (MIMO)~\cite{CTXL}.

The $(L,K,M,N)$ MISO coded caching system was first introduced in~\cite{NMA}, where the server is equipped with $L$ antennas. Unlike the classical MN model with a single antenna, the design objective here is to maximize the \emph{sum degrees of freedom (sum-DoF)}, i.e., the average number of users simultaneously served per time slot. Notably, when $L=1$, the sum-DoF reduces to the coded caching gain $g$ in the MN problem. Under uncoded placement and one-shot zero-forcing delivery, Naderializadeh \emph{et al.}~\cite{NMA} established the upper bound
$\text{sum-DoF} \leq \min\{K,  K\gamma + L\},$
and proposed a scheme achieving this bound with subpacketization
$F = {K \choose K\gamma}\frac{(K\gamma)!(K-K\gamma-1)!}{(K-K\gamma-L)!},
$
which is even larger than that of the MN scheme. For instance, when $(K\gamma+L)$ is divisible by $ K\gamma+1$, Mohajer \emph{et al.} \cite{MB} reduced it to ${K\choose K\gamma}$; when $K/L$ and $ (K\gamma)/L$ are integers, Lampiris \emph{et al.} \cite{LE} reduced it to $F= {K/L\choose (K\gamma)/L}$ by grouping method;  when $K\gamma\leq L$, Salehi \emph{et al.} \cite{SPSET}  constructed a class of schemes with linear subpacketization levels. 

To further explore the relationship between subpacketization and sum-DoF, Namboodiri \emph{et al.}~\cite{KES} and Yang \emph{et al.}~\cite{YWCC}  introduced the \emph{multi-antenna placement delivery array} (MAPDA), which generalizes the PDA framework to MISO systems with uncoded placement and one-shot zero-forcing delivery. They showed that existing schemes in~\cite{NMA,SCK,MB,ST,SPSET} can all be represented by suitable MAPDAs. By analyzing the associated combinatorial constraints, the same sum-DoF upper bound as in~\cite{NMA} was recovered, along with new constructions achieving the maximum sum-DoF with strictly lower subpacketization.

Coded caching in MIMO broadcast systems was first studied in~\cite{CT,CTXL}, focusing on three transmitters and three receivers. Later, Salehi \emph{et al.}~\cite{JHA} formulated a general $(G,L,K,M,N)$ MIMO coded caching system, where each transmitter has $L$ antennas and each user has $G$ antennas. When $L$ is divisible by $G$, they proposed a scheme with
$\text{sum-DoF} = G K \gamma + L,$
by decomposing the system into $G$ parallel MISO setups, enabling simultaneous transmission of multiple multicast codewords. When $L$ is not divisible by $G$, a grouping method yields
$\text{sum-DoF} = G K \gamma + \lfloor L/G \rfloor G .$
More recently, Tehrani \emph{et al.}~\cite{MMA} formulated the maximization of the sum-DoF in MIMO coded caching as an optimization problem. Based on the MN scheme, they proposed a scheme with
$\text{sum-DoF} = G K \gamma + G \lceil L/G \rceil,$
with subpacketization
$F = {K \choose K\gamma}{K-K\gamma-1 \choose \lfloor L/G \rfloor}G,$
which strictly improves upon the bound in~\cite{JHA}.

\subsection{Contributions}\label{Model}
In this paper, we develop MIMO coded caching schemes within a unified combinatorial framework, inspired by the PDA for the shared-link model and the MAPDA for the cache-aided MISO model. Our goal is to achieve the optimal sum-DoF with subpacketization as small as possible  under uncoded placement and one-shot zero-forcing delivery when $N \geq K$. The main contributions are summarized below.

$\bullet$ \textbf{MIMO placement delivery array (MIMO-PDA):}  
We introduce the \emph{MIMO-PDA}, a new combinatorial structure that simultaneously characterizes uncoded cache placement and one-shot linear delivery in cache-aided MIMO systems. Unlike the MISO setting, where each user decodes only a single stream, each MIMO user has at most $G$ degrees of freedom for decoding. This imposes more complicated requirements on the precoding matrix, which must ensure interference-free transmission of multiple packets per user at each time slot. By applying the Schwartz–Zippel lemma~\cite{DL}, we show that the MIMO-PDA guarantees the existence of such precoding matrices for any achievable sum-DoF. Moreover, the MIMO-PDA naturally generalizes existing structures: it reduces to the classical PDA when $G=L=1$, and to the MAPDA when $G=1$.

$\bullet$ \textbf{Upper bound on the sum-DoF:}  
From the combinatorial properties of the MIMO-PDA, we derive the  following upper bound on the sum-DoF
$g \leq \min\{KG,  GK\gamma + G\lceil L/G \rceil\}$. 
Furthermore, it shows that a scheme can achieve the upper bound  
if and only if each served user can decode exactly $G$ desired packets in every time slot. 

$\bullet$ \textbf{Two classes of optimal MIMO-PDAs:}  
We construct two classes of MIMO-PDAs that achieve the upper bound on the sum-DoF while substantially reducing the subpacketization level. 
\begin{itemize}
    \item[-] \emph{Cyclic construction:} When the constraint $K\gamma + \lceil L/G \rceil = K$ holds, a cyclic cache placement strategy with carefully assigned integers yields MIMO-PDAs with \emph{linear} subpacketization. To the best of our knowledge, this is the first class of MIMO coded caching schemes with linear subpacketization when $G$ does not divide $L$. In order relax the constraint of this construction, the following construction which is the main result of this paper is obtained. 
    \item[-] \emph{Hybrid construction:} Using the Baranyai's Theorem \cite{Baranyai1975} and the perfect matching \cite{BJM}, we obtain our hybrid construction of  MIMO-PDA which achieves the upper bound  on the sum-DoF while significantly reduce the subpacketization. Theoretical comparisons show that the hybrid scheme achieves an exponential reduction factor of $2^{-(m-1)K_1 H(\gamma)}$ in subpacketization compared to the TST scheme in \cite{MMA} where $K=mK_1$ is the number of users for any positive integers $K_1$ and $m$, $\gamma$ the memory ratio, and $H(\cdot)$ is the entropy function.  
\end{itemize}

The remainder of this paper is organized as follows. Section II presents the system model. Section III introduces the structure of MIMO-PDA. Section IV discusses the main results and provides detailed descriptions of the proposed scheme. Section V concludes the paper, with proofs provided in the Appendices.
 
\subsection{Notations}
In this paper, we adopt the following notation. Bold capital letters, bold lowercase letters, and calligraphic letters denote arrays, vectors, and sets, respectively. For any array (or matrix) $\mathbf{P}$, $\mathbf{P}(i,j)$ denotes the entry in the $i$-th row and $j$-th column. If $a$ is not divisible by $q$, $\langle a\rangle_{q}$ denotes the least non-negative residue of $a$ modulo $q$; otherwise, $\langle a\rangle_{q}:=q$.  For any number $x$, $\lfloor x\rfloor$ denotes the largest integer not greater than $x$, $\lceil x \rceil$ denotes the smallest integer not less than $x$. $|\mathcal{K}|$ denotes the
cardinality of the set $\mathcal{K}$, and $\mathcal{K}\setminus \mathcal{T}$ is the set of elements in $\mathcal{K}$ that are not in $\mathcal{T}$. For any positive integers $a$, $b$, $t$ with $a<b$ and $t\leq b $, let $[a:b]=\{a,a+1,\ldots,b\}$, especially $[1:b]$ is denoted by $[b]$, and 
${[b]\choose t}$ is the collection of all $t$-subsets of $[b]$. We use $a|b$ to denote that $a$ divides $b$, and $a\nmid b$ to denote that $a$ does not divide $b$.  

\section{System Model }\label{sec-model}
We consider a cache-aided multiple-input multiple-output (MIMO) broadcast system comprising one server (transmitter) with $L$ antennas and $K$ users, each equipped with $G$ antennas, as illustrated in Fig.~\ref{fig-model}. The server stores a library of $N$ independent files, denoted by $\mathcal{W}\triangleq\{W_n:n\in[N]\}$, each containing $B$ i.i.d. bits. User $k\in[K]$ is equipped with a local cache that can store $MB$ bits, where $0\le M\le N$. We adopt the shorthand $\tau:=\lceil \frac{L}{G}\rceil$, $t:=\frac{KM}{N}$ and $\lfloor L/G\rfloor:=\lfloor\frac{L}{G}\rfloor$. Clearly when $L$ is divisible by $G$, we have $\tau=\lfloor L/G\rfloor$.
\begin{figure}
	\centering
	\includegraphics[width=0.8\linewidth]{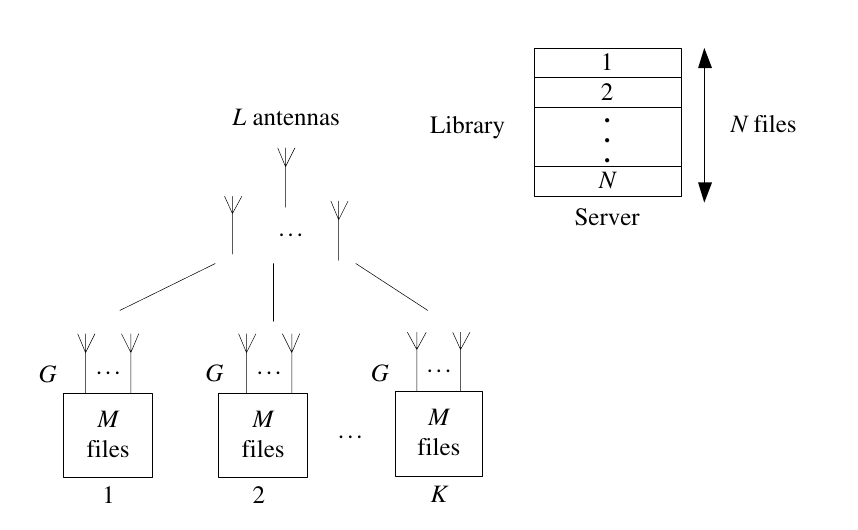}
	\caption{Multiple-input-multiple-output broadcast channel (MIMO) system of one server with $L$ antennas and $K$ users with $G$ antennas.}
	\label{fig-model}
\end{figure} 
The channel transition matrix  between the server and user $k\in[K]$ is denoted by $\mathbf{H}_{k}\in\mathbb{C}^{G\times L}$, which is completely known to the server and all users. We assume that 
the elements of channel transition matrices are uniformly i.i.d.   drawn according to some absolute continuous distribution. 
set of elements occurring is close   to $0$.

An $F$-division $(G, L, K, M, N)$  coded caching scheme consists of the following two phases.

$\bullet$ {\bf Placement phase:} The server first divides each file into $F$ non-overlapping packets with equal size, i.e., $\mathbf{w}_n=(\mathbf{w}_{n,f})_{f\in[F]}$ where $n\in[N]$, and puts some $MF$ packets into each user's memory. Let $\mathcal{Z}_k$ denote the packets cached by user $k\in[K]$. This placement strategy is called uncoded placement. In this phase, the server has no knowledge of the user's demands.

$\bullet$ {\bf Delivery phase:} Assume that each user $k\in[K]$ requests an arbitrary file $\mathbf{w}_{d_k}$ from the library where $d_k\in[N]$. Let $\mathbf{d}\triangleq (d_1,d_2,\ldots,d_K)$ represent the demand vector. Then the server transmits some coded packets by $L$ antennas to the users such that each user can decode the requested file by using its $G$ antennas and cached packets. Specifically, assume that there are $S$ blocks. At block $s$, the following packets in 
\begin{eqnarray}
	\label{eq-requested-packets}
	\mathcal{W}^{(s)}=\{\mathbf{w}_{d_{ k_{1}},f_{1,1}},\ldots,\mathbf{w}_{d_{ k_{1}},f_{1,g_{ 1}}},\ldots,\mathbf{w}_{d_{ k_{r}},f_{r,1}},\ldots,\mathbf{w}_{d_{ k_{r}},f_{{r},g_{r}}}\}
\end{eqnarray} are transmitted to the users in $\mathcal{K}^{(s)}=\{ k_{1}, k_{2},\ldots , k_{r}\}$ by the server in the following way.

The server first uses Gaussian channel coding with rate $B/\widetilde{B}=\text{log}P+o(\text{log}P)$ (bit per complex symbol) to encode each packet at each block as $\widetilde{\mathbf{w}}_{n,f}\triangleq\psi(\mathbf{w}_{n,f})$. By assuming $P$ is large enough, it can be seen that each coded packet carries one Degree-of-Freedom (DoF).\footnote{Let us first consider a point-to-point communication systems with Gaussian channel, $Y=X+Z$, where $X$ is the transmitted signal with power constraint $P,Z \sim CN(0, 1)$ is the additive independent noise, and $Y$ is the received signal. By the random Gaussian encoding, the transmission rate $R$ (i.e., the number of information bits which can be successfully transmitted per channel use) is $\text{log}P+o(\text{log}P)$. The DoF is defined as $\lim_{P \to \infty} \frac{R}{\text{log}P}$. Hence, in the above point-to-point channel, the DoF is $1$.} The whole communication process contains $S$ blocks, each of which consists of $\tilde{B}$ complex symbols (i.e., $\tilde{B}$ time slots). Then at block $s$, the server can obtain the following coded packets composed of the packets in \eqref{eq-requested-packets},
\begin{eqnarray}
\label{eq-requested-coded-packets}
\widetilde{\mathcal{W}}^{(s)}=\{\widetilde{\mathbf{w}}_{d_{ k_{1}},f_{1,1}},\ldots,\widetilde{\mathbf{w}}_{d_{ k_{1}},f_{1,g_{ 1}}},\ldots,\widetilde{\mathbf{w}}_{d_{ k_{r}},f_{r,1}},\ldots,\widetilde{\mathbf{w}}_{d_{ k_{r}},f_{{r},g_{r}}}\}.
\end{eqnarray}
Then each antenna $l\in[L]$ of the server sends the following linear combination
\begin{eqnarray}
\label{eq-x_l}
\mathbf{x}^{(s)}_l=\sum_{i\in[r]}\sum_{j\in [g_i]}v^{(s)}_{l,(i,j)}\psi(\mathbf{w}_{d_{k_i},f_{i,j}})
=\sum_{i\in[r]}\sum_{j\in [g_i]}v^{(s)}_{l,(i,j)}\widetilde{\mathbf{w}}_{d_{k_i},f_{i,j}}
\end{eqnarray}where each $v^{(s)}_{l,(i,j)}$ is a scalar complex coefficient in the precoding matrix to be designed.
So the signals transmitted by the $L$ antennas can be written as follows.
\begin{eqnarray}
\label{eq-coding-caching}
\mathbf{X}=
 \left(
  \begin{array}{c}
\mathbf{x}_1\\
\mathbf{x}_2\\
\vdots\\
\mathbf{x}_L
\end{array}
\right)
:=
\mathbf{V}^{(s)}
\left(
\begin{array}{c}
\widetilde{\mathbf{w}}_{k_1,f_{1,1}} \\
\vdots\\
\widetilde{\mathbf{w}}_{k_r,f_{r,g_{r}}}
\end{array}
\right)
\end{eqnarray}Here $\mathbf{V}^{(s)}$ is the precoding matrix with the size $L\times \sum_{i\in [r]}g_i$. The user $k\in\mathcal{K}_s$ can receive the following transmitted signal with the help of its $G$ antennas.
\begin{eqnarray}\label{eq-y-G}
\mathbf{Y}_{k}&=\mathbf{H}^{(s)}_{k}
\mathbf{V}^{(s)}
\left(
\begin{array}{c}
\widetilde{\mathbf{w}}_{k_1,f_{1,1}} \\
\vdots\\
\widetilde{\mathbf{w}}_{k_r,f_{r,g_{r}}}
\end{array}
\right)+\epsilon_{k}
=\mathbf{R}^{(s)}_k
\left(
\begin{array}{c}
\widetilde{\mathbf{w}}_{k_1,f_{1,1}} \\
\vdots\\
\widetilde{\mathbf{w}}_{k_r,f_{r,g_{r}}}
\end{array}
\right)
+\epsilon_{k},
\end{eqnarray}
where $\epsilon_k\sim \mathcal{CN}(0,1)$ where $k\in \mathcal{K}_s$ represents the noise vector of user $k$ at each block. We assume that $P$ is large enough, the noise can be ignored. So in the following we will omit the Gaussian channel coding. 

In this paper, we consider one-shot schemes for content delivery and decoding of $G$ packets with $G$ antennas per user and aim to maximize the sum-DoF over the linear one-shot delivery as in \cite{NMA}. More precisely, since each coded packet carries one DoF, a sum-DoF $g_{s}=\sum_{i\in[r]}g_{i}$ can be achieved in block $s$. Thus throughout the $S$ blocks, the sum-DoF can be written as $\sum_{s\in[S]}g_{s}/S$ which represents the average number of users that can be simultaneously served across all the blocks. A sum-DoF $g$ is said to be achievable if there exists a two-phase coded caching scheme in each block, where the sum-DoF of the whole system for each possible demand vector is at least $g$. Our objective is to find the maximum (or supremum) of all achievable sum-DoFs. While achieving the maximum sum-DoF, the subpacketization of the proposed scheme should be as low as possible.

\section{Multiple-input Multiple-output Placement Delivery Array}
\label{sect-MIMO-PDA}
In this section we introduce the multiple-input multiple-output placement delivery array (MIMO-PDA) which can be used to generate a MIMO coded caching scheme. 
\begin{definition}[MIMO-PDA]\rm
	\label{def-MIMO-PDA}
	For any positive integers $G$, $L$, $K$, $F$, $Z$, and $S$, an $F\times K$ array $\mathbf{P}$ composed of  $``*"$  and $[S]$ is called $(G,L,K,F,Z,S)$ multiple-input multiple-output placement delivery array (MIMO-PDA) if it satisfies the following conditions:
	\begin{itemize}
		\item [C$1$.] The symbol $``*"$ appears $Z$ times in each column;
		\item [C$2$.] Each integer occurs at least once in the array;
		\item[C$3$.] Each integer $s\in[S]$ appears at most $G$ times in each column;
		\item[C$4$.] For any integer $s\in[S]$, let $\mathbf{P}^{(s)}=(\mathbf{P}^{(s)}(f_i,k_j))_{i\in[r'],j\in[r]}$ be the subarray of $\mathbf{P}$ including the rows and columns containing $s$, where $r'$ and $r$ are nonnegative integers. Define $\mathcal{P}^{(s)}_{f_i}=\{j\in [r] \mid \mathbf{P}^{(s)}(f_i,k_j)\in[S]\}$ for each $i \in [r']$, representing the set of columns of $\mathbf{P}^{(s)}$ whose entry at row $f_i$ is not $``*"$. Then the following two conditions should hold: 
		\begin{itemize}
			\item[a)] The number of integer entries in each row of $\mathbf{P}^{(s)}$ is less than or equal to $\lceil\frac{L}{G}\rceil$, i.e., $|\mathcal{P}^{(s)}_{f}|\leq \lceil\frac{L}{G}\rceil$ for each $f \in [r']$;
			\item[b)] Let $\rho=\langle L\rangle_G$. For any $\rho+1$ different integers $f_1$, $f_2$, $\ldots$, $f_{\rho+1}\in [r']$ satisfying $\mathbf{P}^{(s)}(f_1,k)=\cdots=\mathbf{P}^{(s)}(f_{\rho+1},k)=s$ for some $k$,  there exist at least different two integers $i,i'\in [\rho+1]$ such that $\mathcal{P}^{(s)}_{f_{i}}\neq \mathcal{P}^{(s)}_{f_{i'}}$ always holds.
		\end{itemize}
	\end{itemize}
	\hfill $\square$
\end{definition}

In Definition \ref{def-MIMO-PDA}, if $G$ divides $L$ then $\langle L\rangle_G=G$, and condition C$4$-a is automatically satisfied by condition C$3$ (since an integer can appear in a column at most $G$ times). 
When $L=G=1$, Definition~\ref{def-MIMO-PDA} specializes to the classical PDA in~\cite{YCTC}; when $G=1$, it coincides with the MA-PDA in~\cite{YWCC}. Hence, MIMO-PDA strictly generalizes both PDA and MA-PDA.  Given a $(G,L,K,F,Z,S)$ MIMO-PDA $\mathbf{P}$, columns to represent the user indexes and rows represent the packet indexes. If $p_{j,k}=*$, then user $k$ caches the $j$-th packet of every file.  Let us then explain the conditions:

$\bullet$ Conditions C$1$ and C$2$ are as the original PDA conditions, where condition C$1$ ensures that all users have equal cache size with memory ratio $\gamma=\frac{Z}{F}$, and condition C$2$ ensures that each integer represents a multicast coded packet in one time slot; the occurrence number of each integer represents the multicast gain in transmitting these corresponding packets together.   

$\bullet$  Condition C$3$ ensures that each user recovers at most $G$ intended packets in one time slot, since the number of receive antennas per user is $G$. 

$\bullet$ Condition C$4$ which ensures   zero-forcing the interferences, is the most non-trivial condition in the proposed structure. First we explain condition C$4$-a by focusing on one $\mathbf{P}^{(s)}$ for some $s\in [S]$. The number of integer entries in each row of $\mathbf{P}^{(s)}$ is assumed to be $b$; so each packet in this row is an interference to $b-1$ users each of which has $G$ receive antennas. We aim to cancel the interference of this packet by using zero-forcing method at the transmitter for the total $(b-1)G$ antennas of these $b-1$ users. Hence, we need $L\geq (b-1)G+1$; in other words, $b\leq \lceil\frac{L}{G}\rceil$. Then we explain condition C$4$-b. 
 $\rho=\langle L\rangle_G= L- (\lceil\frac{L}{G}\rceil-1) G$ 
represents the number of linearly independent vectors which can zero-force one packet at $\lceil\frac{L}{G}\rceil-1$ users. So if we cancel $\rho+1$ packets (desired by one user) at the same $\lceil\frac{L}{G}\rceil-1$ users, these $\rho+1$ packets cannot be transmitted  with linearly independent beamforming vectors, which contradicts the decodability; thus  condition C$4$-b should hold.

Later (in the proof of Theorem~\ref{th-Fundamental}) we will prove that once the above conditions are satisfied, we can obtain an achievable scheme with high probability.

Let $g_s$ be the number of occurrences of $s\in[S]$ in $\mathbf{P}$.
 The achieved sum degrees of freedom (sum-DoF) is $\frac{\sum_{s\in [S]}g_s}{S}$. Consequently, we obtain an $F$-division $(G,L,K,M,N)$ MIMO coded caching scheme with memory size $M=\frac{ZN}{F}$ as formalized below.
\begin{theorem}\rm
\label{th-Fundamental}Given a $g$-$(G,L,K,F,Z,S)$ MIMO-PDA, there  exists an $F$-division $(G,L,K,M,N)$ coded caching scheme for MIMO network with the memory ratio $\gamma=\frac{Z}{F}$ and sum-DoF $g$.
\hfill $\square$
\end{theorem}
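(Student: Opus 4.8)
The plan is to convert the combinatorial MIMO-PDA $\mathbf{P}$ into an explicit two-phase coded caching scheme, and then verify that each served user can zero-force all interference and decode its $G$ intended packets in each block. For the placement phase, split each file $W_n$ into $F$ packets $\mathbf{w}_{n,1},\dots,\mathbf{w}_{n,F}$ indexed by the rows of $\mathbf{P}$, and let user $k$ cache $\mathcal{Z}_k=\{\mathbf{w}_{n,f} : \mathbf{P}(f,k)=* ,\ n\in[N]\}$. By condition C$1$ each column has exactly $Z$ stars, so $|\mathcal{Z}_k|=ZN$ packets, giving memory $M=ZN/F$, i.e.\ memory ratio $\gamma=Z/F$, as claimed. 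For the delivery phase, for each demand vector $\mathbf{d}=(d_1,\dots,d_K)$ and each integer $s\in[S]$ we form one block: $\mathbf{P}^{(s)}$ (rows and columns of $\mathbf{P}$ containing $s$) identifies which packets are transmitted together. For every cell $\mathbf{P}^{(s)}(f,k)=s$ the packet $\mathbf{w}_{d_k,f}$ is requested by user $k$ and must be delivered; the set of all such packets over this block is the set $\mathcal{W}^{(s)}$ of \eqref{eq-requested-packets}, and the server transmits $\mathbf{X}=\mathbf{V}^{(s)}(\widetilde{\mathbf w}_{\dots})^{\mathsf T}$ as in \eqref{eq-coding-caching}. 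Since each integer occurs in $\mathbf{P}$ exactly $g_s$ times (C$2$ guarantees $g_s\ge 1$), the number of packets delivered in block $s$ equals $g_s$ and, because each coded packet carries one DoF, the sum-DoF over the $S$ blocks is $\sum_{s\in[S]}g_s/S=g$, matching the claimed value; so it only remains to show a valid precoder $\mathbf{V}^{(s)}$ exists.

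The core of the argument is the existence of $\mathbf{V}^{(s)}$ making every intended packet decodable at its target user. Fix $s$ and work inside $\mathbf{P}^{(s)}$ with user set $\mathcal K^{(s)}=\{k_1,\dots,k_r\}$ and row set $\{f_1,\dots,f_{r'}\}$. Each column of $\mathbf{V}^{(s)}$ corresponds to a transmitted packet $\mathbf{w}_{d_{k},f}$ (a cell with value $s$), and I assign it a beamforming vector $\mathbf{v}_{(k,f)}\in\mathbb{C}^L$. The requirement is: for every user $k_j$, the $G\times L$ matrix $\mathbf{H}^{(s)}_{k_j}$ times the collection of beamformers must (i) send zero energy along $\mathbf{v}_{(k',f)}$ whenever user $k_j$ does \emph{not} cache row $f$ and $k'\ne k_j$ — i.e.\ $f$ is a row where $\mathbf{P}^{(s)}(f,k_j)=s$ too but the packet belongs to another user — and more precisely must null every \emph{interfering} packet, namely a transmitted packet $\mathbf{w}_{d_{k'},f}$ with $k'\ne k_j$ such that $\mathbf{P}(f,k_j)\ne *$; and (ii) keep the $G$ packets intended for $k_j$ (the $\le G$ cells equal to $s$ in column $k_j$, by C$3$) linearly independent at $k_j$'s receiver. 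The zero-forcing direction for an interfering packet $\mathbf{w}_{d_{k'},f}$: the set $\mathcal{P}^{(s)}_{f}$ of columns with an integer entry in row $f$ has size $b=|\mathcal{P}^{(s)}_{f}|$, so this packet interferes at the $b-1$ other users in $\mathcal{P}^{(s)}_f$, imposing $(b-1)G$ linear constraints on $\mathbf{v}_{(k',f)}\in\mathbb{C}^L$; condition C$4$-a gives $b\le\lceil L/G\rceil$, hence $(b-1)G\le L-1$, so the null space of these constraints has dimension $\ge L-(b-1)G\ge \rho$ where $\rho=\langle L\rangle_G$, leaving room to pick $\mathbf{v}_{(k',f)}$. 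The subtle part is when multiple intended packets of a single user share the target user's antennas: if $\rho+1$ rows all carry value $s$ in some column $k$, condition C$4$-b ensures not all of them have the same interference footprint $\mathcal{P}^{(s)}_{\cdot}$, so they are not forced into the same $\rho$-dimensional subspace, which is exactly what decodability of all $G$ streams at $k$ needs.

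To make the existence rigorous rather than dimension-counting heuristics, I would set up the decodability conditions as polynomial (in fact determinant) equations in the entries of $\mathbf{V}^{(s)}$ and invoke the Schwartz–Zippel lemma \cite{DL}, exactly as the excerpt anticipates. Concretely: parametrize each beamformer $\mathbf{v}_{(k',f)}$ as a generic vector in the intersection of the required null spaces (which is a fixed subspace determined by the channel matrices, of dimension $\ge\rho$), with free coefficients as indeterminates. The condition that, for each user $k_j$, the effective channel matrix mapping its $G$ intended streams to its $G$ receive dimensions is invertible becomes ``a certain polynomial $Q_{k_j}$ in the free coefficients is nonzero.'' The condition C$4$ is used precisely to show $Q_{k_j}\not\equiv 0$ as a formal polynomial: one exhibits one choice of coefficients (e.g.\ a combinatorially ``generic'' selection respecting C$4$-b) for which the matrix is nonsingular — here C$4$-a guarantees the null spaces are nonempty and C$4$-b guarantees the $\le G$ intended beamformers at $k_j$ can be chosen to have full rank image under $\mathbf{H}^{(s)}_{k_j}$. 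Then the product $\prod_{j} Q_{k_j}$ is a nonzero polynomial; since the channel entries are drawn from an absolutely continuous distribution, Schwartz–Zippel (or just the fact that the zero set of a nonzero polynomial has Lebesgue measure zero) gives that a random/generic choice of $\mathbf{V}^{(s)}$ works for all users simultaneously with probability one. Doing this independently for each $s\in[S]$ completes the construction of the scheme. The main obstacle I anticipate is step-(ii)/C$4$-b: carefully writing the effective-channel matrix at a user receiving several intended packets through overlapping null-space constraints, and proving that condition C$4$-b is exactly the combinatorial hypothesis under which a nonsingular choice exists — this is where the ``$\rho+1$ rows with distinct footprints'' condition must be translated into linear independence of the corresponding beamformers' images, and it is the place where a naive dimension count is not enough and one must use the structure of C$4$-b.
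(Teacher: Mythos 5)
Your proposal is correct and follows essentially the same route as the paper: uncoded placement dictated by the stars, per-integer blocks with zero-forcing beamformers chosen in the null space of the stacked channels of unintended users (C$4$-a giving the dimension count $(b-1)G\le L-\rho$), and a Schwartz--Zippel/genericity argument, with C$4$-b invoked precisely to rule out the degenerate case where $\rho+1$ intended beamformers are confined to a common $\rho$-dimensional null space. The only cosmetic difference is that you propose to treat the free null-space coefficients of $\mathbf{V}^{(s)}$ as the indeterminates, whereas the paper's Appendix~\ref{app-proof-rank} fixes unit-vector precoders and exhibits admissible channel realizations to show the relevant polynomial in the channel entries is not identically zero; both are the same genericity argument and yield the same conclusion.
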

\begin{proof}

Given a $(G,L,K,F,Z,S)$ MIMO-PDA $\mathbf{P}$, we can obtain an $F$-division $(G,L,K,M,N)$ MIMO coded caching scheme with memory size $M=\frac{ZN}{F}$ as follows.

$\bullet$ {\bf Placement phase:}  Each file $\mathbf{w}_n$ is divided into $F$ packets of equal size, i.e., $\mathbf{w}_{n}=(\mathbf{w}_{n,f})_{f\in [F]}$. For each $k\in[K]$, user $k$ caches the packets indicated by the stars in the $k$-th column of $\mathbf{P}$,
\begin{eqnarray}
\label{eq-caching-content}
\mathcal{Z}_k=\{\mathbf{w}_{n,f}\ \mid\ p_{f,k}=*,\ f\in [F],\ n\in [N]\}.
\end{eqnarray}
Then each user caches $M=ZN/F$ files.

$\bullet$ {\bf Delivery phase:} Given a request vector ${\bf d}$, assume that the integer $s\in[S]$ appears at entries
$$\mathbf{P}^{(s)}(f_{1,1},k_1), \ldots,\mathbf{P}^{(s)}(f_{1,g_1},k_1),\mathbf{P}^{(s)}(f_{2,1},k_2), \ldots, \mathbf{P}^{(s)}(f_{r,g_r},k_r).$$
By Condition C$3$ in Definition~\ref{def-MIMO-PDA}, $g_i\leq G$ for each $i\in[r]$. Without loss of generality, order the columns so that $k_1<k_2<\cdots<k_r$. From \eqref{eq-caching-content}, user $k_i$ does not cache its requested packets $\mathbf{w}_{d_{k_i},f_{i,1}}$, $\ldots$, $\mathbf{w}_{d_{k_i},f_{i,g_{i}}}$ because $\mathbf{P}(f_{i,1},k_{i})=\ldots=\mathbf{P}(f_{i,g_i},k_{i})\neq *$. At block $s$, define the set of served users and the stacked vector of their requested packets as
\begin{equation}
\label{eq-packet-user-time-s}
\mathcal{K}_s=\{k_1, k_2, \ldots, k_r\},\ \
\mathbf{W}^{(s)}=\left(
\begin{array}{c}
\mathbf{w}_{d_{k_1},f_{1,1}} \\
\vdots\\
\mathbf{w}_{d_{k_r},f_{r,g_{r}}}
\end{array}
\right).
\end{equation}
According to \eqref{eq-coding-caching}, the server transmits $\mathbf{w}_{d_{k_i},f_{i,1}},\ldots,\mathbf{w}_{d_{k_i},f_{i,g_i}}$ to user $k_i$ for each $i\in[r]$. Consequently, user $k_i$ observes the $G$-dimensional signal ${\bf y}_{k_i}$ described in \eqref{eq-y-G}. We now establish that, under Conditions C$3$ and C$4$ of Definition~\ref{def-MIMO-PDA}, there exists a precoding matrix  $\mathbf{V}^{(s)}$ 
 enabling every $k_i\in\mathcal{K}_s$ to decode its $g_i$ desired packets $\mathbf{w}_{d_{k_i},f_{i,1}},\ldots,\mathbf{w}_{d_{k_i},f_{i,g_{i}}}$. 
 
For convenience, use $f_{i,j}$ and $k_i$ where $i\in[r]$ and $j\in[g_i]$, to denote the row and column labels of $\mathbf{P}^{(s)}$, respectively. For each row label $f_{i,j}$, denote the set of users who does not cache $\mathbf{w}_{d_{k_i},f_{i,j}}$ by  
\begin{eqnarray}
\label{eq-not-cached}
\mathcal{C}_{i,j}=\{k_{i'}\in \mathcal{K}_s\ \mid \mathbf{P}(f_{i,j},k_{i'})\in [S],\ i'\in [r]\}.
\end{eqnarray}Clearly $k_i\in \mathcal{C}_{i,j}$. Let $l_{i,j}=|\mathcal{C}_{i,j}|$. By C$4$-b, $l_{i,j}\leq\lceil\frac{L}{G}\rceil$. The set $\mathcal{C}_{i,j}\setminus \{k_i\}$ has $l_{i,j}-1$ elements; denote it by $\mathcal{C}_{i,j}'=\{k_{i,j,1},k_{i,j,2},\ldots,k_{i,j,l_{i,j}-1}\}$.

For each $l\in[l_{i,j}-1]$, let $\mathbf{H}^{(s)}_{k_{i,j,l}}$ be channel matrix, which has $G$ rows and $L$ columns, from the server to user $k_{i,j,l}$. Stack these to form
\begin{align}
\label{eq-channel-M}
\bar{\mathbf{H}}^{(s)}_{i,j}=\left(
\begin{matrix}
\mathbf{H}^{(s)}_{k_{i,j,1}}\\
\mathbf{H}^{(s)}_{k_{i,j,2}} \\
\vdots \\
\mathbf{H}^{(s)}_{k_{i,j,l_{i,j}-1}} \\
\end{matrix}
\right),
\end{align} which has $L_i= (l_{i,j}-1)G\le \frac{L-\rho}{G}\times G=L-\rho$ rows, where $\rho=L-\lfloor\frac{L-1}{G}\rfloor G$ (Definition~\ref{def-MIMO-PDA}). By our hypothesis that the entries of each channel matrix are i.i.d. under a continuous distribution, $\bar{\mathbf{H}}^{(s)}_{i,j}$ is full row rank with high probability, i.e., $\text{rank}(\bar{\mathbf{H}}^{(s)}_{i,j})=L_i$ since $L_i\leq L$.

For simplicity, the columns of $\mathbf{V}^{(s)}$ are indexed by $(i,j)$, where $(i,j)$ corresponds to the required packet $\mathbf{w}_{d_{k_i},f_{i,j}}$; that is, the column labels are $(1,1)$, $(1,2)$, $\ldots$, $(1,g_1)$, $\ldots$, $(r,g_r)$. To ensure packet $\mathbf{w}_{d_{k_i},f_{i,j}}$ is nulled at $k_{i,l}$, thereby  canceling interference at unintended receivers, we impose
\begin{align}
\label{eq-unique-vector}
\bar{\mathbf{H}}^{(s)}_{i,j}\mathbf{v}^{(s)}_{i,j}
=\left(
\begin{array}{c}
	0 \\
	0 \\
	\vdots \\
	0 \\
\end{array}
\right)_{(L-\rho)\times 1}
\end{align}where $\mathbf{v}^{(s)}_{i,j}$ is the $(i,j)$-th column of $\mathbf{V}^{(s)}$. By linear algebra, there always exists a nonzero column vector $\mathbf{v}^{(s)}_{i,j}$. Setting $\mathbf{V}^{(s)}_{i}=(\mathbf{v}^{(s)}_{i,1}, \mathbf{v}^{(s)}_{i,2},\ldots,\mathbf{v}^{(s)}_{i,g_i})$, the pre-matrix can be represented as $\mathbf{V}^{(s)}=(\mathbf{V}^{(s)}_{1},\mathbf{V}^{(s)}_{2},\ldots,\mathbf{V}^{(s)}_{r})$. 

In the sequel, we show that user $k_i$ can decode all the packets intended for it within $\mathbf{W}^{(s)}$ using the  pre-matrix $\mathbf{V}^{(s)}$ constructed from \eqref{eq-unique-vector}. We claim that the rank of $\mathbf{V}^{(s)}_i$ is $g_i$ with high probability. The detailed proof is included in Appendix \ref{app-proof-rank}.  Then from \eqref{eq-y-G} user $k_i$ can decode all of its requested packets from $\mathbf{W}^{(s)}$ by the received signal ${\bf y}_{k_i}$ since $\mathbf{H}^{(s)}_{k_i}$ is a full row rank matrix. In addition, each user has the same number of uncached packets per file, specifically $F - Z$. So the  sum-DoF in the whole procedure is $K(F-Z)/S$.
\end{proof}

Let us take the following example to further illustrate our statement in Theorem \ref{th-Fundamental}.
\begin{example}\label{ex-1}
The following array $\mathbf{P}$ is a $g$-$(G,L,K,F,Z,S)=6$-$(2,3,4,8,1,4)$ MIMO-PDA,
\begin{eqnarray}
\label{ex-MATPDA}
\mathbf{P}=\left(\begin{array}{cccc}
*&1 &1&3 \\
1&*&1&2\\
1 &1&*&2\\
3 &2&2&*\\
*&4&3&4\\
4&*&2&4\\
3&2&*&3\\
4&4&3&*
\end{array}\right).
\end{eqnarray}
Notice that each column has $Z=2$ stars and each integer appears at most $G=2$ times in each column. This implies that C$1$ and C$3$ hold. When $s=1$, we have the subarray $\mathbf{X} $ which includes the rows and columns containing $1$ as follows
$$\mathbf{X} = \left(\begin{array}{ccc}
* & 1 & 1  \\
1 & * & 1  \\
1 & 1 & *  \\
\end{array}\right).$$
Clearly each row of $\mathbf{X}$ contains $\lceil\frac{L}{G}\rceil= 2$ integer entries. So C$4$ holds. Now let us use $\mathbf{P}$ in \eqref{ex-MATPDA} to generate a $(G=2,L=3,K=4,M=1,N=4)$ MIMO coded caching scheme as follows.
\begin{itemize}
\item \textbf{Placement phase:} By the proof of Theorem \ref{th-Fundamental}, each file is divided into $8$ packets, i.e., $\mathbf{w}_n=(\mathbf{w}_{n,1},\mathbf{w}_{n,2},\ldots,\mathbf{w}_{n,8})$ for each $n\in [4]$. Then all the users cache the following packets.
\begin{align*}
\mathcal{Z}_{1} &= \{\mathbf{w}_{i,1},\mathbf{w}_{i,5} \mid i\in[4]\}, \ \
\mathcal{Z}_{2} = \{\mathbf{w}_{i,2},\mathbf{w}_{i,6} \mid i\in[4]\}, \\
\mathcal{Z}_{3} &= \{\mathbf{w}_{i,3},\mathbf{w}_{i,7} \mid i\in[4]\}, \ \
\mathcal{Z}_{4} = \{\mathbf{w}_{i,4},\mathbf{w}_{i,8} \mid i\in[4]\}.
\end{align*}Clearly each user caches packets of size $\frac{2N}{8}=1$ file.
\item \textbf{Delivery phase:} Assume that the request vector is $\mathbf{d} =(1,2,3,4)$. There are $S=4$ blocks in the whole communication process. We first consider the communication process in block
$s=1$, and other blocks can be treated similarly. Since $\mathbf{P}(2,1)=\mathbf{P}(3,1)=\mathbf{P}(1,2)=\mathbf{P}(3,2)=\mathbf{P}(1,3)=\mathbf{P}(2,3)=1$, the served users and their required packets are
\begin{align*}
\mathcal{K}_{1}=\{ 1,2,3\}, \mathbf{W}^{(1)}=(\mathbf{w}_{1,2}, \mathbf{w}_{1,3},\mathbf{w}_{2,1}, \mathbf{w}_{2,3}, \mathbf{w}_{3,1},
\mathbf{w}_{3,3})^{T}
\end{align*}
respectively. Note that users $1$, $2$ and $3$ can recover the packet pairs  $(\mathbf{w}_{2,1},\mathbf{w}_{3,1})$, $(\mathbf{w}_{1,2},\mathbf{w}_{3,2})$ and $(\mathbf{w}_{1,3},\mathbf{w}_{2,3})$ respectively. And they require the packet pairs
$(\mathbf{w}_{1,2},\mathbf{w}_{1,3})$, $(\mathbf{w}_{2,1},\mathbf{w}_{2,3})$ and $(\mathbf{w}_{3,1},\mathbf{w}_{3,2})$ respectively.  Furthermore, $\mathbf{w}_{1,2}, \mathbf{w}_{1,3}, \mathbf{w}_{2,1}, \mathbf{w}_{2,3}, \mathbf{w}_{3,1}$ and $\mathbf{w}_{3,3}$ are
the interferences of users $\mathcal{C}_{1,1}'(1)=3, \mathcal{C}_{1,2}'(1)=2, \mathcal{C}_{2,1}'(1)=3, \mathcal{C}_{2,2}'(1)=1, \mathcal{C}_{3,1}'(1)=2$ and $\mathcal{C}_{3,2}'(1)=1$ respectively. We have $\bar{\mathbf{H}}_{1,1}^{(1)}=\bar{\mathbf{H}}_{2,1}^{(1)}=\mathbf{H}_{3}^{(1)}$, $\bar{\mathbf{H}}_{1,2}^{(1)}=\bar{\mathbf{H}}_{1,3}^{(1)}=\mathbf{H}_{2}^{(1)}$ and $\bar{\mathbf{H}}_{2,2}^{(1)}=\bar{\mathbf{H}}_{3,2}^{(1)}=\mathbf{H}_{1}^{(1)}$.
Let $\mathbf{v}_{1,1}^{(1)},\mathbf{v}_{1,2}^{(1)},\mathbf{v}_{2,1}^{(1)},\mathbf{v}_{2,2}^{(1)},\mathbf{v}_{3,1}^{(1)}$, and $\mathbf{v}_{3,2}^{(1)}$ be the right null space solution vectors of $\mathbf{H}_{3}^{(1)}, \mathbf{H}_{2}^{(1)}, \mathbf{H}_{3}^{(1)}, \mathbf{H}_{1}^{(1)}, \mathbf{H}_{2}^{(1)}$ and $\mathbf{H}_{1}^{(1)}$, respectively. That is,  $\mathbf{H}_{3}^{(1)}\mathbf{v}_{(1,1)}^{(1)}=\mathbf{H}_{2}^{(1)}\mathbf{v}_{(1,2)}^{(1)}=\mathbf{H}_{3}^{(1)}\mathbf{v}_{(2,1)}^{(1)}=\mathbf{H}_{1}^{(1)}\mathbf{v}_{(2,2)}^{(1)}=\mathbf{H}_{2}^{(1)}\mathbf{v}_{(3,1)}^{(1)}=\mathbf{H}_{1}^{(1)}\mathbf{v}_{(3,2)}^{(1)}={\bf 0}$. For instance, we assume that
\begin{align*}
\mathbf{H}_{1}^{(1)}=\left(\begin{array}{ccc}
1 & 1 & 1 \\
1 & 2 & 2
\end{array}\right),\ \ \
\mathbf{H}_{2}^{(1)}=\left(\begin{array}{ccc}
1 & 4 & 3 \\
1 & 8 & 4
\end{array}\right),\ \ \
\mathbf{H}_{3}^{(1)}=\left(\begin{array}{ccc}
1 & 16 & 5 \\
1 & 32 & 6
\end{array}\right).
\end{align*}
Then we have
\begin{align*}
&\mathbf{v}_{1,1}^{(1)}=\left(\begin{array}{c}
64  \\
1  \\
-16
\end{array}\right),
\mathbf{v}_{1,2}^{(1)}=\left(\begin{array}{c}
8  \\
1  \\
-4
\end{array}\right),
\mathbf{v}_{2,1}^{(1)}=\left(\begin{array}{c}
-64  \\
-1  \\
16
\end{array}\right),\\
&\mathbf{v}_{2,2}^{(1)}=\left(\begin{array}{c}
0  \\
1  \\
-1
\end{array}\right),
\mathbf{v}_{3,1}^{(1)}=\left(\begin{array}{c}
-8  \\
-1  \\
4
\end{array}\right),
\mathbf{v}_{3,2}^{(1)}=\left(\begin{array}{c}
0  \\
-1  \\
1
\end{array}\right),
\end{align*}
as their right null space solution vectors. User $1$ can receive the following signal.
\begin{equation}
	\begin{aligned}
		{\bf y}_{1}(1)=&\mathbf{H}_{1}^{(1)}(\mathbf{v}_{1,1}^{(1)},\mathbf{v}_{1,2}^{(1)},\mathbf{v}_{2,1}^{(1)},\mathbf{v}_{2,2}^{(1)},\mathbf{v}_{3,1}^{(1)},  \mathbf{v}_{3,2}^{(1)})(\mathbf{w}_{1,2}, \mathbf{w}_{1,3}, \mathbf{w}_{2,1}, \mathbf{w}_{2,3}, \mathbf{w}_{3,1},\mathbf{w}_{3,3})^{T}\\
		=&\mathbf{H}_{1}^{(1)}\mathbf{v}_{1,1}^{(1)}\mathbf{w}_{1,2}+\mathbf{H}_{1}^{(1)}\mathbf{v}_{1,2}^{(1)}\mathbf{w}_{1,3}+\mathbf{H}_{1}^{(1)}\mathbf{v}_{2,1}^{(1)}\mathbf{w}_{2,1}+\mathbf{H}_{1}^{(1)}\mathbf{v}_{2,2}^{(1)}\mathbf{w}_{2,3} \\ 
		&+\mathbf{H}_{1}^{(1)}\mathbf{v}_{3,1}^{(1)}\mathbf{w}_{3,1}+\mathbf{H}_{1}^{(1)}\mathbf{v}_{3,2}^{(1)}\mathbf{w}_{3,2}\\
		=&\mathbf{H}_{1}^{(1)}\mathbf{v}_{1,1}^{(1)}\mathbf{w}_{1,2}+\mathbf{H}_{1}^{(1)}\mathbf{v}_{1,2}^{(1)}\mathbf{w}_{1,3}+(\mathbf{H}_{1}^{(1)}\mathbf{v}_{2,1}^{(1)}\mathbf{w}_{2,1}+\mathbf{H}_{1}^{(1)}\mathbf{v}_{3,1}^{(1)}\mathbf{w}_{3,1})\\
		=&\left(\begin{array}{c}
			49  \\
			34
		\end{array}\right)\mathbf{w}_{1,2}+\left(\begin{array}{c}
			5  \\
			2
		\end{array}\right)\mathbf{w}_{1,3}+\left(\left(\begin{array}{c}
			-49  \\
			-34
		\end{array}\right)\mathbf{w}_{2,1}+\left(\begin{array}{c}
			-5  \\
			-2
		\end{array}\right)\mathbf{w}_{3,1}.
		\label{delivery}\right)
	\end{aligned}
\end{equation}

Note that $(\mathbf{H}_{1}^{(1)}\mathbf{v}_{2,1}^{(1)}\mathbf{w}_{2,1}+\mathbf{H}_{1}^{(1)}\mathbf{v}_{3,1}^{(1)}\mathbf{w}_{3,1})$ can be cancelled by the cached content of user $1$, and Rank$(\mathbf{v}_{1,1}^{(1)},\mathbf{v}_{1,2}^{(1)})=2$. So user $1$ can decode the required packets $\mathbf{w}_{1,2}$ and $\mathbf{w}_{1,3}$. Similarly, users $2, 3$ can also decode their requested packets respectively. We can see that the server transmits $6$ packets to the users in each block. So the sum-DoF of the caching scheme is $6=2+4=\frac{GKZ}{F}+G\lceil\frac{L}{G}\rceil$.
\end{itemize}
\hfill $\square$
\end{example}

It is worth noting that the $(G,L,K,M,N)$ TST MIMO coded caching scheme with $G\leq {t+ \lceil L/G\rceil-1\choose t}$ in \cite{MMA} can be  represented by the following  $(G$, $L$, $K$, ${K\choose t}{K-t-1\choose  \lceil L/G\rceil-1}G$, ${K-1\choose t-1}{K-t-1\choose  \lceil L/G\rceil-1}G$, ${K\choose t+ \lceil L/G\rceil}{t+ \lceil L/G\rceil-1\choose t})$ TST MIMO-PDA 
\begin{align*}
 \mathbf{A}=(\mathbf{A}((\mathcal{T},\mathcal{R},l), k))_{\mathcal{T}\in {[K]\choose t},\mathcal{R}\in {[K-t-1]\choose  \lceil L/G\rceil-1},l\in[G],k\in [K]}
\end{align*} where for any $l\in[G]$, $\mathcal{R}\in {[K-t-1]\choose  \lceil L/G\rceil-1}$, $\mathcal{T}\in {[K]\choose t}$, and $k\in [K]$, the entry
\begin{align}
\label{eq-MN-array}
\mathbf{A}((\mathcal{T},\mathcal{R},l), k)=\left\{
\begin{array}{cc}
* & \ \ \ \text{if}  \ \ k\in \mathcal{T}\\
(\mathcal{S},\lceil\frac{o(\mathcal{S})}{G}\rceil) & \text{otherwise}
\end{array}
\right.
\end{align}
Here $\mathcal{S}=\mathcal{T}\cup ([K]\setminus(\{k\}\cup \mathcal{T}))|_{\mathcal{R}}\cup \{k\}$ and $o(\mathcal{S})$ is the occurrence order of $\mathcal{S}$ in the column $k$ from up to down. Let us take the following example to illustrate the above construction. In this paper, we represent each subset as a string for short. For instance, the set $\{1,2,3,4\}$ is written as $1234$. In addition, all the vectors are arranged in lexicographical order.
\begin{example}
\label{ex-2-3-4-2-MIMO}
When $G=2$, $L=3$, $K=4$ and $t=2$, we have the row labels
\begin{align*}
\mathcal{F}=\{&(12,34,1),(13,23,1),(14,23,1),(23,14,1),(24,12,1),(34,12,1),\\
&(12,34,2),(13,24,2),(14,23,2),(23,14,2),(24,13,2),(34,12,2)\}
\end{align*} and columns $[4]$. From \eqref{eq-MN-array} we have the following $(2,3,4,12,6,3)$ TST MIMO-PDA.
\begin{align}
	\label{eq-MIMO-PDA-K=4}
	\mathbf{A} = 
	\begin{blockarray}{ccccc}
		1 & 2 & 3 & 4 \\ 
		\begin{block}{(cccc)c} 
			*          &*        &1234,1       & 1234,1 & 12,34,1 \\
			*          &1234,1    &*           &1234,1 & 13,24,1 \\
			*          &1234,1    &1234,1     &* & 14,23,1 \\
			1234,1     &*         &*         &1234,2  & 23,14,1 \\
			1234,1     &*         &1234,2        & *    & 24,13,1 \\
			1234,2     &1234,2    &*           & * & 34,12,1 \\
			*         &*          &1234,2   &1234,2 & 12,34,2 \\
			*         &1234,2    &*         &1234,3 & 13,24,2 \\
			*         &1234,3    &1234,3    & * & 14,23,2 \\
			1234,2     &*        &*        &1234,3& 23,14,2 \\
			1234,3    &*         &1234,3     & * & 24,13,2 \\
			1234,3    &1234,3     &*     & * & 34,12,2 \\
		\end{block}
	\end{blockarray}
\end{align}  
By Theorem \ref{th-Fundamental}, we have a $(G=2,L=3,K=4,M,N)$ coded caching scheme with the memory ratio $\gamma=\frac{1}{2}$, subpacketization $F=12$ and sum DoF $g=8=G(t+\lceil\frac{L}{G}\rceil)=2(2+\lceil\frac{3}{2}\rceil)$.
\end{example}

\begin{remark}[The MN PDA]
	\label{remark-TST-PDA}When $G=L=1$, we have $\lfloor L/G\rfloor=0$ which implies that ${K-t-1\choose \lceil L/G\rceil-1}G=1$ and ${K\choose t+ \lceil L/G\rceil}{t+ \lceil L/G\rceil\choose t}={K\choose t+1}$. Then the TST MIMO-PDA is exactly the well known  $(K, {K\choose t}, {K-1\choose t-1}, {K\choose t+1})$ MN PDA \cite{MN,YCTC}.
\end{remark}
By Theorem \ref{th-Fundamental}, the MIMO coded caching scheme can be obtained by constructing an appropriate MIMO-PDA, and the sum-DoF can be studied by counting the number of times each integer occurs in a MIMO-PDA. By counting the frequency of the used stars in a MIMO-PDA,  we can derive the maximum sum-DoF, the proof of which is included in Appendix \ref{proof-DoF}.

\begin{theorem}\label{th-DoF}
	The sum-DoF of a $(G,L,K,M,N)$ MIMO coded caching scheme with memory ratio $\gamma=\frac{Z}{F}$ realized by  any $(G,L,K,F,Z,S)$ MIMO-PDA  is no more than $\frac{GKZ}{F}+G\lceil L/G\rceil$.
	\hfill $\square$
\end{theorem}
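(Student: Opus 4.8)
The plan is to bound the total number of integer entries in the MIMO-PDA $\mathbf{P}$, since the sum-DoF is $\frac{1}{S}\sum_{s\in[S]}g_s$ where $g_s$ is the number of occurrences of $s$, and $\sum_{s\in[S]}g_s = KF - \sum_{k\in[K]}(\text{number of stars in column }k) = KF - KZ = K(F-Z)$, i.e. the total count of non-star entries. Thus $\frac{1}{S}\sum_s g_s = \frac{K(F-Z)}{S}$, and it suffices to show $S \ge \frac{K(F-Z)}{GKZ/F + G\lceil L/G\rceil}$, equivalently $S\bigl(GKZ/F + G\tau\bigr)\ge K(F-Z)$ where $\tau=\lceil L/G\rceil$. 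So I would aim to prove a lower bound on $S$, or equivalently an upper bound on the average occurrence number $\frac{1}{S}\sum_s g_s$.

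The key step is to upper-bound $g_s$ for each fixed $s\in[S]$ using condition C4. Fix $s$ and look at $\mathbf{P}^{(s)}$ with its $r'$ rows and $r$ columns. By C4-a every row of $\mathbf{P}^{(s)}$ contains at most $\tau$ integer entries, so $g_s \le \tau r'$. On the other hand, by C3 each column of $\mathbf{P}^{(s)}$ contains at most $G$ copies of $s$, so $g_s \le G r$. Now I want to relate $r'$ (the number of rows used by $s$) to something that averages nicely over $s$. The idea mirrors the single-antenna PDA argument: each row $f$ of $\mathbf{P}$ has exactly $K-Z$ integer entries (since it has $Z$... wait, that's per column) — actually I should count per row. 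Hmm: in column $k$ there are $Z$ stars among $F$ entries; summing, there are $KZ$ stars total and $K(F-Z)$ integer entries total, spread over $F$ rows, so on average a row has $K(F-Z)/F$ integer entries, but not uniformly. Instead I would argue: each integer entry of $\mathbf{P}$ lies in exactly one row, and for integer $s$ the entries equal to $s$ occupy $r'=r'(s)$ rows; a given row $f$ can be "used" by an integer $s$ only if $s$ appears in row $f$, and the number of distinct integers appearing in row $f$ is at most (number of integer entries in row $f$), but a single integer can appear up to $\tau$ times in a row (by C4-a), so the number of distinct integers in row $f$ is at least $\lceil (\text{integer entries in row }f)/\tau\rceil$. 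Summing $\sum_s r'(s) = \sum_f (\text{number of distinct integers in row }f) \ge \sum_f \lceil (\text{integer entries in row }f)/\tau\rceil \ge \frac{1}{\tau}\sum_f(\text{integer entries in row }f) = \frac{K(F-Z)}{\tau}$. Combining with $g_s\le \tau r'(s)$ gives only $\sum_s g_s \le \tau \sum_s r'(s)$, which is circular; I need the reverse direction, so the useful inequality is $\sum_s r'(s) \ge \frac{K(F-Z)}{\tau}$ together with a per-$s$ upper bound on $g_s$ in terms of $r'(s)$ that does not reintroduce $\tau$ multiplicatively in the wrong way.

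The correct route, I expect, is: for each $s$, $g_s \le G r(s)$ where $r(s)$ is the number of columns containing $s$ (by C3). Then $\sum_s g_s \le G\sum_s r(s)$, and $\sum_s r(s)$ counts, for each column $k$, the number of distinct integers appearing in column $k$; call this $\delta_k$. The column $k$ has $F-Z$ integer entries and each integer appears at most $G$ times there (C3), so $\delta_k \ge (F-Z)/G$, but again that is a lower bound on $\delta_k$, not an upper bound. An upper bound on $\delta_k$ is simply $\delta_k \le F-Z$ (trivially), giving $\sum_s g_s \le GK(F-Z)$ and sum-DoF $\le GK$, which is one half of the min. For the other half $GKZ/F + G\tau$ I would instead double count pairs (row $f$, integer $s$) with $s\in$ row $f$: set $N_1 = \sum_s r'(s) = \sum_f d_f$ where $d_f$ is the number of distinct integers in row $f$, and bound $g_s$ from above by noting that in $\mathbf{P}^{(s)}$, the number of integer entries is at most $\tau$ per row but also, crucially, C4-b forces that among rows where $s$ sits in a common column, the "pattern sets" $\mathcal{P}^{(s)}_f$ take at least two distinct values among any $\rho+1$ of them — this caps how concentrated $s$ can be and is what pins down the $\tau$ versus $\lfloor L/G\rfloor$ distinction. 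The main obstacle, and the part I would spend the most care on, is exactly this: extracting from C4-a and C4-b a clean per-$s$ inequality of the form $g_s \le$ (something) so that averaging over $s\in[S]$ yields $\frac{1}{S}\sum_s g_s \le GKZ/F + G\tau$; I anticipate the argument fixes a column $k^*$ achieving the maximum column-multiplicity, counts the $Z$ stars and $F-Z$ non-stars in it, uses C3 to get at least $(F-Z)/G$ distinct integers there, and then for each such integer $s$ bounds the extra rows it occupies in *other* columns via C4-a by $\tau$, leading to $S \cdot (\text{avg }g_s) $ being squeezed between the star count and the $\tau$-spreading bound — the bookkeeping to make the $\lceil L/G\rceil$ term (rather than $\lfloor L/G\rfloor$) emerge is where C4-b must be invoked, and getting that edge case right is the crux.
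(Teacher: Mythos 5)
Your reductions at the start are fine (sum-DoF $=K(F-Z)/S$, so it suffices to show $S\ge \frac{K(F-Z)F}{GKZ+FG\tau}$), but the proposal never actually establishes this bound: every concrete counting you carry out either goes circular (the row-based bound $g_s\le \tau r'(s)$ against $\sum_s r'(s)\ge K(F-Z)/\tau$) or collapses to the trivial half $\text{sum-DoF}\le GK$ (the column-based bound $g_s\le G\,r(s)$), and the ``correct route'' you then describe is an unexecuted guess whose central step --- ``a clean per-$s$ inequality so that averaging over $s$ yields $GKZ/F+G\tau$'' --- is precisely what is missing. Indeed no purely per-$s$ inequality can work, because the target bound couples the local zero-forcing parameter $\tau=\lceil L/G\rceil$ with the \emph{global} memory ratio $Z/F$: for a single $s$, nothing in C3/C4 prevents $g_s$ from exceeding $GK\gamma+G\tau$ if its rows happen to be star-rich inside $\mathbf{P}^{(s)}$; only the total star budget $KZ$ constrains this.

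The paper's argument supplies exactly the missing global mechanism: it double counts \emph{star usages}. For each occurrence of $s$ in row $f$ of $\mathbf{P}^{(s)}$, the $r_s-e_{f}$ stars in that row of the subarray are ``used'' (they are what lets the other served users cancel that packet from cache), so the total usage is $\sum_s r_s g^{(s)}-\sum_s\sum e_{i,j}$; on the other hand a row of $\mathbf{P}$ with $e'_f$ integer entries can have its stars used at most $e'_f(K-e'_f)$ times. Comparing the two counts, then applying C3 in the form $r_s\ge g^{(s)}/G$, C4-a in the form $\sum e_{i,j}\le g^{(s)}\tau$, and convexity (Cauchy--Schwarz) to both $\sum_s (g^{(s)})^2\ge n^2/S$ and $\sum_f (e'_f)^2\ge n^2/F$ with $n=K(F-Z)$, gives $\frac{n^2}{GS}+\frac{n^2}{F}\le n\tau+nK$, which is the needed lower bound on $S$. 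Note also that your diagnosis of the crux is off: the $\lceil L/G\rceil$ (rather than $\lfloor L/G\rfloor$) term comes straight from C4-a, which already caps each row of $\mathbf{P}^{(s)}$ at $\lceil L/G\rceil$ integer entries; condition C4-b is needed for achievability (existence of the precoders) and plays no role in this converse.
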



 In fact, the authors in \cite{MMA} also obtained the following result on the bound of sum-DoF.
\begin{lemma}[\cite{MMA}]
	\label{le-existing-DoF}
	Under the constraints of uncoded cache placement and one-shot linear delivery, the sum-DoF of an $(G,L,K,M,N)$ MIMO coded caching scheme is given by solving the following optimization problem:
	\begin{align}
		\label{eq-optimal}
		&\text{sum-DoF}_{\max}(\beta^*, \Omega^*) = \max_{\beta, \Omega} \Omega \beta,\nonumber\\
		\text{s.t.}\ \ \ \  &\beta \leq \min \left( G, \frac{L{\Omega-1\choose t}}{1+(\Omega-t-1){\Omega-1\choose t}} \right),
	\end{align}where $\Omega\leq K$ and $\beta\leq G$ represent  the number of the served users and the number of packets decoded by each user at each block respectively, and $\beta^*$ and $\Omega^*$ represent the optimal parameters chosen to achieve $\text{sum-DoF}_{\max}$.
	\hfill $\square$
\end{lemma}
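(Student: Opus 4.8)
The plan is to prove Lemma~\ref{le-existing-DoF} by first reducing to symmetric schemes, then isolating the feasibility constraint on $(\beta,\Omega)$ as the combinatorial core, and finally matching an explicit construction (achievability) against a transmit-dimension count (converse). First I would argue that, for the purpose of computing the optimal sum-DoF, it suffices to restrict to \emph{symmetric} one-shot linear schemes: by averaging over the $N!$ file relabelings and, when the demand is distinct, over user permutations, any achievable scheme can be symmetrized without decreasing its sum-DoF, so that each packet is cached by exactly $t=KM/N$ users, each transmission serves a fixed number $\Omega\le K$ of users, and each served user decodes a fixed number $\beta\le G$ of desired packets. Under this reduction the delivered DoF in a transmission block is exactly $\Omega\beta$, and the problem becomes: characterize the admissible pairs $(\beta,\Omega)$ and maximize $\Omega\beta$.

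The heart of the argument—and the step I expect to be the main obstacle—is deriving the constraint $\beta\bigl(1+(\Omega-t-1)\binom{\Omega-1}{t}\bigr)\le L\binom{\Omega-1}{t}$, equivalently the displayed bound in \eqref{eq-optimal}. I would fix one serving group of $\Omega$ users and analyze one-shot zero-forcing exactly as in the proof of Theorem~\ref{th-Fundamental}: a stream intended for user $u$ and not cached by $a$ group users (including $u$) must have its beamformer in the intersection of the null spaces of the $a-1$ stacked interfered channels, a space of dimension $L-(a-1)G$, while decodability at $u$ forces $\beta\le G$ and the $\beta$ effective channels to be linearly independent (generic by Schwartz–Zippel). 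Because each packet is cached by $t$ users, within the group one has $a\ge\Omega-t$, and admissibility requires $a\le\lceil L/G\rceil$. The binomial factor $\binom{\Omega-1}{t}$ then enters as the number of distinct cache patterns (i.e.\ $t$-subsets of the other $\Omega-1$ group members) that a user's deliverable packets can exhibit; carrying out the transmit-dimension budget over the group—total available dimension $L$ per channel use against the streams of the tagged user ($1$ term) plus the $(\Omega-t-1)$ interfering positions each replicated over the $\binom{\Omega-1}{t}$ patterns—yields the stated inequality after accounting for the cached interference that is cancelled \emph{for free} at the receivers. The delicate bookkeeping is precisely this separation of cancelled (cached) versus zero-forced (uncached) interference and the correct weighting by $\binom{\Omega-1}{t}$.

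For achievability I would invoke the explicit MN-type array in \eqref{eq-MN-array}: it serves groups $\mathcal{S}$ of size $\Omega=t+\lceil L/G\rceil$ with each served user decoding $\beta=G$ packets, and Theorem~\ref{th-Fundamental} guarantees, via the Schwartz–Zippel lemma, the existence of precoding vectors satisfying all the null-space constraints of the previous paragraph with the required rank-$G$ decodability at each user. A short check shows this pair $(\beta,\Omega)=(G,\,t+\lceil L/G\rceil)$ meets the feasibility constraint (with the role of $\rho=\langle L\rangle_G$ making $L-(\lceil L/G\rceil-1)G=\rho\ge 1$ the available beamforming dimension per stream), so it is a valid point of the program attaining objective value $G\bigl(t+\lceil L/G\rceil\bigr)$.

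Finally, for the converse I would observe that the per-block DoF of any symmetric admissible scheme is $\Omega\beta$ with $(\beta,\Omega)$ necessarily satisfying the derived constraint, hence the sum-DoF is at most $\max_{\beta,\Omega}\Omega\beta$ over the feasible region; combined with achievability this gives equality, establishing the characterization. I would close by verifying that the optimum of the program equals $\min\{KG,\;Gt+G\lceil L/G\rceil\}$—the interior trade-off peaks at $\Omega^{*}=t+\lceil L/G\rceil$, $\beta^{*}=G$, while $\beta\le G$ caps the value at $KG$—so the optimization characterization is consistent with the upper bound of Theorem~\ref{th-DoF}. The two places where care is needed are the symmetrization step (to make $\Omega$ and $\beta$ well defined constants) and the dimension count producing the exact fractional constraint; the rest is routine.
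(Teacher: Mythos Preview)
The paper does not prove this lemma at all: it is quoted verbatim from \cite{MMA} (Tehrani \emph{et al.}) and stated without argument, the trailing $\square$ being the paper's end-of-statement marker rather than a QED. There is therefore no ``paper's own proof'' to compare your proposal against; the authors invoke the result only to point out that when $\beta=G$ and $G\le\binom{t+\lceil L/G\rceil-1}{t}$ the optimum of \eqref{eq-optimal} equals the upper bound $Gt+G\lceil L/G\rceil$ of Theorem~\ref{th-DoF}, which they prove independently by the star-counting argument in Appendix~\ref{proof-DoF}.

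As a side remark on your sketch itself: the symmetrization and achievability pieces are reasonable and track the paper's own machinery (Theorem~\ref{th-Fundamental} and the array \eqref{eq-MN-array}), but the step you flag as delicate---deriving the exact fractional constraint with the $\binom{\Omega-1}{t}$ weighting---is not actually justified by the ``transmit-dimension budget'' description you give. The quantity $1+(\Omega-t-1)\binom{\Omega-1}{t}$ does not arise from a single-user dimension count of the form ``$1$ stream plus $(\Omega-t-1)$ interfering positions replicated over $\binom{\Omega-1}{t}$ patterns''; in \cite{MMA} it comes from a global feasibility/counting argument over all streams and all cache configurations in a serving group, balancing the total number of beamforming degrees of freedom $L\binom{\Omega-1}{t}$ against the number of zero-forcing constraints generated by uncached interference across the entire group. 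Your outline would need that global count made explicit to be a proof rather than a heuristic.
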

It is worth noting that when $\beta=G$ and $G\leq {t+ \lceil L/G\rceil-1\choose t}$, the maximum sum-DoF is $\frac{GKZ}{F}+G\tau$ in \eqref{eq-optimal} which is the exactly value in Theorem \ref{th-DoF}. In addition, we can check that the TST MIMO-PDA has the maximum sum-DoF by Theorem \ref{th-DoF}. This fact has also been showed in \cite{TSJA}.

For the convenience, in this paper a MIMO-PDA is said to be optimal if it achieves the maximum sum-DoF derived in Theorem \ref{th-DoF}. In addition, a scheme generalized by a MIMO-PDA is referred to as a MIMO-PDA scheme, and a MIMO-PDA scheme is considered optimal if it achieves the maximum sum-DoF derived in Theorem \ref{th-DoF}. Clearly, the TST scheme is optimal. In the following, we will show that the constraint of TST MIMO-PDA constructed in \cite{TSJA} can be further relaxed. First, the following concept of a MIMO-PDA is useful.
\begin{definition}[Consistency number]\rm
\label{def-consisten}
Given a MIMO-PDA $\mathbf{P}=(\mathbf{P}(f,k))_{f\in[F],k\in[K]}$ containing $S$ different integers, the \emph{consistency number} of $\mathbf{P}$, denoted $\mu$, is the largest nonnegative integer such that for every $s\in[S]$ and every column $k\in[K]$, there exist at most $\mu$ distinct rows $f_{i_1},\dots,f_{i_\mu}$ of $\mathbf{P}^{(s)}$ satisfying $\mathbf{P}^{(s)}(f_{i_1},k)= \cdots =\mathbf{P}^{(s)}(f_{i_\mu},k)=s$ and $\mathcal{P}^{(s)}_{f_{i_1}} = \cdots = \mathcal{P}^{(s)}_{f_{i_\mu}}$. 
\end{definition} 

By Definition \ref{def-consisten}, we can see that the consistency number of a MIMO-PDA with parameters $G$ and $L$ is less than or equal to $\langle L\rangle_G$. 
We now consider the consistency number of a TST MIMO-PDA $\mathbf{A}$. We know that each subset $\mathcal{S}$ in \eqref{eq-MN-array} contains exactly  $t+\lceil L/G\rceil$ elements, and the subarray $\mathbf{A}^{(\mathcal{S})}$, consisting of rows and columns that include $\mathcal{S}$, has ${t+\lceil L/G\rceil\choose t}G$ rows and $t+\lceil L/G\rceil$ columns. In each column of $\mathbf{A}^{(\mathcal{S})}$, there are exactly $\binom{K-t-1}{\lfloor L/G\rfloor}G$ non-star entries, and each such entry contains the subset $\mathcal{S}$. Moreover, fix a column of $\mathbf{A}^{(\mathcal{S})}$. For every row corresponding to one of the $\binom{K-t-1}{\lfloor L/G\rfloor}G$ non-star entries in this column, consider the set of column indices in $\mathbf{A}^{(\mathcal{S})}$ where a non-star entry appears in that same row. This yields $\binom{K-t-1}{\lfloor L/G\rfloor}G$ such column-index sets. Among these, exactly $\binom{K-t-1}{\lfloor L/G\rfloor}$ are distinct. Based on the second coordinate $\lceil\frac{o(\mathcal{S})}{G}\rceil$ of an non-star entry defined in \eqref{eq-MN-array}, the consistency number can be derived as follows. 
\begin{proposition}
\label{pro-rho}
The consistency number of a TST MIMO PDA with parameters $K$, $t$, $G$ and $L$ is $\mu =\lceil G/{t+\lfloor L/G\rfloor\choose t}\rceil$.
\hfill $\square$ 
\end{proposition}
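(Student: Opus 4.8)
The plan is to read the consistency number off directly from the explicit formula \eqref{eq-MN-array}. Fix an integer $s=(\mathcal{S},j)$ of the TST MIMO-PDA $\mathbf{A}$, where $\mathcal{S}\in\binom{[K]}{t+\lceil L/G\rceil}$, and fix a column $k$; write $m:=\binom{t+\lfloor L/G\rfloor}{t}$. First I would note that, by \eqref{eq-MN-array}, a non-star entry sitting in column $k$ always has first coordinate a subset containing $k$, so $s$ occurs only in columns $k\in\mathcal{S}$; and a row $(\mathcal{T},\mathcal{R},l)$ has first coordinate $\mathcal{S}$ in column $k$ precisely when $\mathcal{T}$ is a $t$-subset of $\mathcal{S}\setminus\{k\}$ and $\mathcal{R}$ is the unique index set with $([K]\setminus(\{k\}\cup\mathcal{T}))|_{\mathcal{R}}=\mathcal{S}\setminus(\mathcal{T}\cup\{k\})$. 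There being $m$ such $\mathcal{T}$, the subset $\mathcal{S}$ occurs exactly $mG$ times in each column $k\in\mathcal{S}$, and, since the second coordinates of these entries exhaust $\{1,\dots,m\}\ni j$, the integer $s$ appears in every column $k\in\mathcal{S}$, so the column set of $\mathbf{A}^{(s)}$ is exactly $\mathcal{S}$.

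Next I would follow the occurrence order down column $k$. With the row ordering fixed in the construction (rows grouped by $l\in[G]$, and within each group listed lexicographically in the subset coordinates, as in Example~\ref{ex-2-3-4-2-MIMO}), the $mG$ occurrences of $\mathcal{S}$ in column $k$ appear so that their $\mathcal{T}$-coordinate runs through the $m$ admissible $t$-subsets of $\mathcal{S}\setminus\{k\}$ in lexicographic order and then repeats, once for each value of $l$. Thus the $o$-th occurrence of $\mathcal{S}$ in column $k$ lies in the row whose $\mathcal{T}$-coordinate is the $\langle o\rangle_m$-th such subset, with $l=\lceil o/m\rceil$, and by \eqref{eq-MN-array} its second coordinate is $\lceil o/G\rceil$. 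Consequently the rows carrying the specific integer $s=(\mathcal{S},j)$ in column $k$ are exactly the $G$ rows indexed by $o\in\{(j-1)G+1,\dots,jG\}$.

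Then I would convert ``same support pattern'' into ``same $\mathcal{T}$''. Any row $f=(\mathcal{T},\mathcal{R},l)$ containing $s$ has $\mathcal{T}\subseteq\mathcal{S}$, and since the columns of $\mathbf{A}^{(s)}$ are precisely $\mathcal{S}$, the non-star positions of $f$ inside $\mathbf{A}^{(s)}$ form exactly $\mathcal{S}\setminus\mathcal{T}$; hence $\mathcal{P}^{(s)}_{f}=\mathcal{P}^{(s)}_{f'}$ if and only if $f$ and $f'$ have the same $\mathcal{T}$. Combining this with the previous paragraph, the largest number of rows that simultaneously carry $s$ in column $k$ and share a support pattern equals the largest number of integers of the block $\{(j-1)G+1,\dots,jG\}$ lying in one residue class modulo $m$. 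Among any $G$ consecutive integers this maximum is $\lceil G/m\rceil$, independently of $j$, $k$ and $\mathcal{S}$; therefore the consistency number of $\mathbf{A}$ is $\mu=\lceil G/m\rceil=\lceil G/\binom{t+\lfloor L/G\rfloor}{t}\rceil$.

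The step I expect to be the main obstacle is the bookkeeping in the second paragraph. The value of $\mu$ hinges on the fact that, scanning column $k$, the $\mathcal{T}$-coordinate of consecutive occurrences of $\mathcal{S}$ advances with period $m$ rather than remaining constant over each length-$G$ block; it is exactly this that produces $\lceil G/m\rceil$ instead of $G$. So one has to pin down the ordering convention of \eqref{eq-MN-array} (the grouping by $l$) and the resulting map $o\mapsto(\mathcal{T}\text{-coordinate},\,l,\,\lceil o/G\rceil)$, and check that all $m$ admissible $\mathcal{T}$ are actually attained. The remaining points --- that the column set of $\mathbf{A}^{(s)}$ equals $\mathcal{S}$, and the ``same support $\Leftrightarrow$ same $\mathcal{T}$'' equivalence --- are routine once \eqref{eq-MN-array} is unwound.
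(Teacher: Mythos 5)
Your argument is correct and follows essentially the same route as the paper's justification given in the paragraph preceding Proposition~\ref{pro-rho}: within each column $k\in\mathcal{S}$ the subset $\mathcal{S}$ occurs $mG$ times with $m=\binom{t+\lfloor L/G\rfloor}{t}$, the support of a row inside $\mathbf{A}^{(s)}$ is determined by its $\mathcal{T}$-coordinate, and the integer $(\mathcal{S},j)$ occupies a window of $G$ consecutive occurrence orders, yielding $\mu=\lceil G/m\rceil$. If anything, you make explicit the row-ordering convention ($l$ outermost, lexicographic within each $l$-block, as in Example~\ref{ex-2-3-4-2-MIMO}) on which the period-$m$ cycling of the $\mathcal{T}$-coordinate—and hence the value $\lceil G/m\rceil$ rather than $G$—actually depends, a bookkeeping point the paper leaves implicit.
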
Clearly, $\lceil G/\gamma\rceil\leq G$ always holds. Hence, the condition for TST MIMO-PDA $G \leq \binom{t + \lfloor L/G \rfloor}{t}$ in \cite{TSJA} can be relaxed to $\lceil G/{t+\lfloor L/G\rfloor\choose t}\rceil \leq\langle L\rangle_G$. Then we can relax the constraint of the TST MIMO-PDA constructed in \cite{TSJA} as follows. 
\begin{theorem}[\cite{TSJA}]\rm
\label{th-TST}
For any positive integers $K$, $t$, $G$ and $L$, if $\mu=\lceil G/{t+\lfloor L/G\rfloor\choose t}\rceil \leq\langle L\rangle_G$, there exists an optimal $(G$, $L$, $K$, ${K\choose t}{K-t-1\choose  \lceil L/G\rceil-1}G$, ${K-1\choose t-1}{K-t-1\choose  \lceil L/G\rceil-1}G$, ${K\choose t+ \lceil L/G\rceil}{t+ \lceil L/G\rceil-1\choose t})$ TST MIMO-PDA  with consistency number $\mu$.
\end{theorem}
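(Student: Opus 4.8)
The plan is to show that the array $\mathbf{A}$ defined in \eqref{eq-MN-array} is a $(G,L,K,F,Z,S)$ MIMO-PDA with $F=\binom{K}{t}\binom{K-t-1}{\lceil L/G\rceil-1}G$, $Z=\binom{K-1}{t-1}\binom{K-t-1}{\lceil L/G\rceil-1}G$, $S=\binom{K}{t+\lceil L/G\rceil}\binom{t+\lceil L/G\rceil-1}{t}$, that its consistency number equals $\mu$, and that the scheme it induces via Theorem \ref{th-Fundamental} meets the sum-DoF upper bound of Theorem \ref{th-DoF}. First I would verify conditions C$1$--C$3$ of Definition \ref{def-MIMO-PDA} by direct counting. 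In column $k$ the entry $\mathbf{A}((\mathcal{T},\mathcal{R},l),k)$ equals $*$ precisely when $k\in\mathcal{T}$, which happens for $\binom{K-1}{t-1}\binom{K-t-1}{\lceil L/G\rceil-1}G=Z$ index triples, giving C$1$. For C$2$ and C$3$, I would fix a $(t+\lceil L/G\rceil)$-set $\mathcal{S}$ and a column $k\in\mathcal{S}$, and observe that $\mathcal{S}$ arises as the first coordinate in column $k$ exactly through the triples $(\mathcal{T},\mathcal{R},l)$ with $\mathcal{T}\subseteq\mathcal{S}\setminus\{k\}$, $|\mathcal{T}|=t$ (the set $\mathcal{R}$ being then uniquely forced so that $([K]\setminus(\{k\}\cup\mathcal{T}))|_{\mathcal{R}}=\mathcal{S}\setminus(\{k\}\cup\mathcal{T})$), and $l\in[G]$; under the lexicographic ordering of the row labels these $\binom{t+\lceil L/G\rceil-1}{t}G$ occurrences split into $\binom{t+\lceil L/G\rceil-1}{t}$ consecutive blocks of $G$ rows, one per choice of $\mathcal{T}$, so the second coordinate $\lceil o(\mathcal{S})/G\rceil$ is constant on each block and the integer $(\mathcal{S},j)$ appears exactly $G$ times in column $k$ and not at all in columns outside $\mathcal{S}$. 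This yields C$3$ with equality, and summing over $k\in\mathcal{S}$ shows every integer occurs $(t+\lceil L/G\rceil)G$ times, hence C$2$.

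Next I would check C$4$. For C$4$-a, the subarray $\mathbf{A}^{(\mathcal{S},j)}$ has column set exactly $\mathcal{S}$, and every one of its rows $(\mathcal{T},\mathcal{R},l)$ satisfies $\mathcal{T}\subseteq\mathcal{S}$; hence its non-star entries among these columns are precisely those indexed by $\mathcal{S}\setminus\mathcal{T}$, so $|\mathcal{P}^{(\mathcal{S},j)}_{(\mathcal{T},\mathcal{R},l)}|=|\mathcal{S}|-t=\lceil L/G\rceil$ and C$4$-a holds (indeed with equality), with $\mathcal{P}^{(\mathcal{S},j)}_{(\mathcal{T},\mathcal{R},l)}=\mathcal{S}\setminus\mathcal{T}$. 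The only condition whose validity hinges on the parameters $G,L,t$ is C$4$-b, and this is exactly where the relaxation enters: by Definition \ref{def-consisten}, C$4$-b --- which demands that among any $\langle L\rangle_G+1$ rows of a subarray $\mathbf{A}^{(s)}$ carrying $s$ in a common column at least two have distinct $\mathcal{P}$-sets --- holds if and only if the consistency number of $\mathbf{A}$ does not exceed $\langle L\rangle_G$. By Proposition \ref{pro-rho} the consistency number of $\mathbf{A}$ is $\mu=\lceil G/\binom{t+\lfloor L/G\rfloor}{t}\rceil$, so the hypothesis $\mu\le\langle L\rangle_G$ is precisely what gives C$4$-b. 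Combined with C$1$--C$3$ and C$4$-a, this establishes that $\mathbf{A}$ is a $(G,L,K,F,Z,S)$ MIMO-PDA with consistency number $\mu$.

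It remains to prove optimality. By the counting step, each of the $S$ integers appears $(t+\lceil L/G\rceil)G$ times in $\mathbf{A}$, so the scheme from Theorem \ref{th-Fundamental} achieves sum-DoF $\frac{1}{S}\sum_{s\in[S]}g_s=(t+\lceil L/G\rceil)G=Gt+G\lceil L/G\rceil$. Since $\frac{Z}{F}=\frac{\binom{K-1}{t-1}}{\binom{K}{t}}=\frac{t}{K}$, the upper bound of Theorem \ref{th-DoF} equals $\frac{GKZ}{F}+G\lceil L/G\rceil=Gt+G\lceil L/G\rceil$, which matches, so the MIMO-PDA is optimal. I expect the main obstacle to be the bookkeeping in the counting step: verifying that $\mathcal{R}$ is uniquely determined by $(\mathcal{S},k,\mathcal{T})$, and that the lexicographic order on the triples $(\mathcal{T},\mathcal{R},l)$ really groups the occurrences of each set $\mathcal{S}$ within a column into blocks of exactly $G$ consecutive rows so that the integer $(\mathcal{S},j)$ picks out $G$ copies. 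Once this is in place, everything else reduces either to the already-established Proposition \ref{pro-rho} or to the equivalence between C$4$-b and the consistency-number bound, which is immediate from Definition \ref{def-consisten}.
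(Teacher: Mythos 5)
Your overall route is the same as the paper's implicit justification of Theorem~\ref{th-TST}: the paper does not reprove the construction from \cite{TSJA} either, but argues exactly as you do that the array \eqref{eq-MN-array} satisfies C$1$--C$3$ and C$4$-a by counting, that C$4$-b is equivalent to the consistency number being at most $\langle L\rangle_G$ (Definition~\ref{def-consisten}), that Proposition~\ref{pro-rho} supplies this consistency number, and that optimality follows from Theorem~\ref{th-DoF} because every integer occurs $(t+\lceil L/G\rceil)G$ times and $Z/F=t/K$.

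However, the step you yourself flag as the main obstacle is wrong as stated. You claim that within a column $k\in\mathcal{S}$ the $\binom{t+\lceil L/G\rceil-1}{t}G$ occurrences of $\mathcal{S}$ split, under the row ordering, into consecutive blocks of $G$ rows \emph{one per choice of $\mathcal{T}$}, with the second coordinate $\lceil o(\mathcal{S})/G\rceil$ constant on each such block. That is not how the TST array is ordered --- in \eqref{eq-MIMO-PDA-K=4} the two rows carrying $(1234,1)$ in column $1$ are $(23,14,1)$ and $(24,12,1)$, i.e.\ different $\mathcal{T}$'s --- and it cannot be so if the theorem is to hold: if all $G$ rows carrying a given integer in a column shared the same $\mathcal{T}$, they would share the same set $\mathcal{P}^{(s)}_{f}=\mathcal{S}\setminus\mathcal{T}$, the consistency number would equal $G$ rather than $\lceil G/\binom{t+\lfloor L/G\rfloor}{t}\rceil$, contradicting Proposition~\ref{pro-rho} on which your C$4$-b step rests, and the hypothesis $\mu\le\langle L\rangle_G$ would no longer yield C$4$-b (one would instead need $G\le\langle L\rangle_G$). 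The fact you actually need for C$2$/C$3$ --- that each integer $(\mathcal{S},j)$ appears exactly $G$ times in every column $k\in\mathcal{S}$ and nowhere else --- requires no block structure at all: the counter $o(\mathcal{S})$ by definition partitions the $\binom{t+\lceil L/G\rceil-1}{t}G$ occurrences into groups of exactly $G$, whatever the row order. The point of the TST ordering (with $l$ varying slowest) is the opposite of what you wrote: the $G$ occurrences of $(\mathcal{S},j)$ in a column are spread over distinct $\mathcal{T}$'s, and this spreading is precisely what makes the consistency number $\lceil G/\binom{t+\lfloor L/G\rfloor}{t}\rceil$ and lets the relaxed hypothesis deliver C$4$-b. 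If you delete the ``one per choice of $\mathcal{T}$'' claim and replace it by this observation, the remainder of your argument (uniqueness of $\mathcal{R}$ given $(\mathcal{S},k,\mathcal{T})$, C$4$-a with $\mathcal{P}^{(\mathcal{S},j)}_{(\mathcal{T},\mathcal{R},l)}=\mathcal{S}\setminus\mathcal{T}$, C$4$-b via Definition~\ref{def-consisten} and Proposition~\ref{pro-rho}, and the sum-DoF check against Theorem~\ref{th-DoF}) is correct and coincides with the paper's reasoning.
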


Finally, we should point out that subpacketization of a TST MIMO-PDA increases exponentially with the number of users. In the rest of this paper, we will construct three classes of new optimal MIMO-PDAs with lower subpacketization.

\section{Optimal MIMO-PDA Schemes with Lower Subpacketization}
\label{sec-const}
In this section, we first propose a cyclic construction with linear subpacketization under a strong parameter constraint. To relax this constrain, we introduce a hybrid construction with lower subpacketization. Some theoretical and numerical comparisons on the subpacketization are obtained.   
\subsection{Optimal MIMO-PDAs}
\label{sub-three-optimal}
By Definition \ref{def-MIMO-PDA}, the key to constructing the MIMO-PDA lies in Condition C$4$. That is, for any integer $s \in [S]$, each row of the subarray $\mathbf{P}^{(s)}$ contains at most $\tau=\lceil\frac{L}{G}\rceil$ integer entries. Furthermore, no more than $\rho$ rows in $\mathbf{P}^{(s)}$ share the same set of integer entry position set. Based on this investigation, we obtain our main results. 

Using the star-cyclic placement, we obtain the following result, whose proof is included in Appendix \ref{app-proof-latin}. 
\begin{theorem}[Square MIMO-PDA]\label{th-latin}
 For any positive integers $K$, $G$, $L$, and $t$ satisfying $K\leq \tau+t$ and $\langle L\rangle_G\leq K-t$ where $\tau=\lceil L/G\rceil$, there exists an optimal $(G,L,K,F=GK,Z=Gt,S=K-t)$ MIMO-PDA, which realizes a $(G,L,K,M,N)$ MIMO coded caching scheme with the memory ratio $\gamma=t/K$ and the subpacketization $F=G K$.
\hfill $\square$
\end{theorem}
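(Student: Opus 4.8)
The plan is to give an explicit construction of an $F \times K$ array $\mathbf{P}$ with $F = GK$ rows, indexed by pairs $(f, \ell)$ with $f \in [K]$ and $\ell \in [G]$, using a star-cyclic placement. First I would define the star positions: for a fixed parameter $t$, user $k$ should cache $Z = Gt$ packets; the natural cyclic rule is to put a $``*"$ in entry $\big((f,\ell), k\big)$ exactly when $f \in \{k, k+1, \dots, k+t-1\}$ (indices modulo $K$, using the $\langle\cdot\rangle_K$ convention of the paper), for every $\ell \in [G]$. This immediately gives Condition C$1$, since each column has exactly $Gt$ stars, and the memory ratio is $\gamma = Z/F = t/K$. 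The non-star entries in row-block $f$ then occur precisely in the $K - t$ columns $k$ with $f \notin \{k, \dots, k+t-1\}$, i.e., $K - t$ columns per row of each $G$-fold block; this is where the integers from $[S]$ with $S = K - t$ must be filled in.

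Next I would specify the integer labeling so that Conditions C$2$, C$3$, C$4$ hold and the sum-DoF is maximal. Because $K \leq \tau + t$, we have $K - t \leq \tau = \lceil L/G\rceil$, so Condition C$4$-a is automatic: each row of any subarray $\mathbf{P}^{(s)}$ has at most $K - t \leq \tau$ integer entries. The labeling should be chosen so that each integer $s \in [K-t]$ appears $G$ times in each of the $K$ columns that contain it — concretely, one assigns integers to the $K - t$ non-star slots of each row-block via a cyclic Latin-square-type pattern in the $f$-index, replicated identically across all $G$ values of $\ell$. Then each integer occurs $GK$ times total, giving sum-DoF $\frac{\sum_s g_s}{S} = \frac{(K-t)\cdot GK}{K-t} $... more carefully, one checks the count of each integer is $G \cdot K_s$ where $K_s$ is the number of columns containing $s$; by the cyclic symmetry every integer appears in the same number of columns, and summing over $s$ recovers $K(F-Z) = K\cdot G(K-t)$ total integer entries, so the achieved sum-DoF equals $K(F-Z)/S = GK(K-t)/(K-t) = GK$. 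Wait — I should instead verify it matches $\frac{GKZ}{F} + G\tau = Gt + G\tau$; optimality per Theorem~\ref{th-DoF} then requires showing each served user decodes exactly $G$ packets per block and that the number of columns per integer equals $t + \tau$ (truncated appropriately), which the cyclic design must be tuned to achieve.

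The main obstacle, and the heart of the proof, is Condition C$4$-b: I must show that for $\rho = \langle L \rangle_G$ fixed rows of $\mathbf{P}^{(s)}$ all carrying $s$ in some column $k$, not all of them share the same integer-position set $\mathcal{P}^{(s)}_f$. This is exactly where the hypothesis $\langle L\rangle_G \leq K - t$ is used: the $G$ replicated copies (over $\ell \in [G]$) of a given $f$-row all have the *same* position set, so among rows sharing column $k$ with label $s$, identical position sets come in groups of size at most $G$; but $\rho \le K-t$ together with the cyclic offsetting of the integer pattern across distinct $f$-values ensures that any $\rho+1$ such rows must include two with genuinely different $f$-indices, hence — by the design of the cyclic labeling — different position sets. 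I would formalize this by writing down, for a fixed $s$ and $k$, the explicit list of rows $(f,\ell)$ with $\mathbf{P}\big((f,\ell),k\big) = s$, showing there are exactly $G \cdot (\text{something} \le \lceil G/\rho\rceil$-bounded$)$ of them with each distinct $f$ contributing a distinct $\mathcal{P}^{(s)}_f$, and then a counting/pigeonhole argument closes C$4$-b. Finally, invoking Theorem~\ref{th-Fundamental} converts the verified MIMO-PDA into the claimed coded caching scheme, and Theorem~\ref{th-DoF} certifies optimality once the integer-occurrence count is confirmed.
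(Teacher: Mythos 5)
There is a genuine gap, and it sits exactly at the step you yourself identify as the heart of the proof, namely Condition C$4$-b. You propose to assign the integers by a cyclic Latin-square pattern in the $f$-index, ``replicated identically across all $G$ values of $\ell$.'' Under that rule, for a fixed integer $s$ and a fixed column $k$, the $G$ entries of column $k$ equal to $s$ are precisely the rows $(f,1),(f,2),\dots,(f,G)$ for a \emph{single} $f$, and these $G$ rows all have the same star set, hence the same set $\mathcal{P}^{(s)}_{f}$. Consequently the consistency number of your array is $G$, and C$4$-b is violated whenever $\rho=\langle L\rangle_G<G$, i.e.\ whenever $G\nmid L$ — which is the regime the theorem is really about. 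Your claimed rescue, that ``$\rho\le K-t$ together with the cyclic offsetting of the integer pattern across distinct $f$-values ensures that any $\rho+1$ such rows must include two with genuinely different $f$-indices,'' is inconsistent with your own labeling: with identical replication over $\ell$ there is no offsetting across the copies, and one can simply pick $\rho+1\le G$ of the rows $(f,1),\dots,(f,G)$, all sharing the same position set. The hypothesis $\langle L\rangle_G\le K-t$ cannot repair this; the labeling itself must change. The paper's construction avoids the problem by \emph{not} repeating the pattern across the $G$ vertical copies: it numbers the non-star entries of each column from top to bottom and assigns the integer $\lceil\lambda/G\rceil$ to the $\lambda$-th one, so the $G$ occurrences of any integer in a column are $G$ consecutive non-star entries that (when $G\le K-t$) come from $G$ distinct cyclic rows with pairwise distinct star sets; the consistency number is then $1$ and C$4$-b holds for every $\rho\ge 1$. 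Some such staggering of the integers across the replicas is the missing idea in your plan.

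Two smaller points. First, your sum-DoF verification is needlessly self-doubting: with $S=K-t$ and each integer occurring $G$ times in every column, the achieved sum-DoF is $K(F-Z)/S=GK$, and since the hypothesis $K\le t+\tau$ gives $GK\le Gt+G\tau$, this already equals $\min\{GK,\,Gt+G\lceil L/G\rceil\}$, i.e.\ the bound of Theorem~\ref{th-DoF}; no further ``tuning'' is needed once the occurrence count is established. Second, your verification of C$1$ and C$4$-a (each row of $\mathbf{P}^{(s)}$ has at most $K-t\le\tau$ integer entries) matches the paper and is fine; the construction up to the star placement is essentially the paper's, transposed in bookkeeping only.
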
 

The subpacketization level of the scheme presented in Theorem~\ref{th-latin} scales linearly with the number of users. However, this construction is subject to a stringent limitation, namely that $K \leq \tau + t$. In the following, we focus on studying the case $K> \tau + t$. We introduce a grouping method anda matching in a bipartite graph to construct new optimal MIMO-PDAs. Then we can obtain the following result whose proof is included in Appendix \ref{proof-grouping}. 
\begin{theorem}[Grouping MIMO-PDA]\label{th-grouping}
	Given an optimal $(G,L_1,K_1,F,Z,S)$ MIMO-PDA with the consistency number $\mu$, there exists an optimal $(G,L,K=mK_1,F,Z,S)$ MIMO-PDA where $m\lceil\frac{L_1}{G}\rceil=\lceil\frac{L}{G}\rceil$ for any integer $L\geq L_1$ satisfying $\langle L\rangle_G\geq \mu$.\hfill $\square$ 
\end{theorem}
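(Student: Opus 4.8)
The plan is to take an optimal $(G,L_1,K_1,F,Z,S)$ MIMO-PDA $\mathbf{P}_1$ with consistency number $\mu$, and build a $K = mK_1$-column array by horizontally concatenating $m$ modified copies of $\mathbf{P}_1$, one per group of $K_1$ users, while keeping the same $F$ rows. First I would fix the correspondence $m\lceil L_1/G\rceil = \lceil L/G\rceil$, i.e. $\tau = m\tau_1$ where $\tau_1 = \lceil L_1/G\rceil$; since $\mathbf{P}_1$ is optimal, by Theorem~\ref{th-DoF} and the discussion after Lemma~\ref{le-existing-DoF} each integer $s$ of $\mathbf{P}_1$ induces a subarray $\mathbf{P}_1^{(s)}$ in which every served user decodes exactly $G$ packets and every row has exactly $\tau_1$ integer entries. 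The key combinatorial step is a relabelling of the $S$ integers across the $m$ group-copies so that an integer appearing in copy $j$ and an integer appearing in copy $j'$ (for $j\neq j'$) are merged into a single integer of the new array precisely when their associated "active-column patterns" can be combined without violating C$3$ (at most $G$ per column) and C$4$-a (at most $\tau = m\tau_1$ integer entries per row of the merged subarray). This is exactly where a perfect matching in a bipartite graph enters: one forms, for each pair of consecutive groups, a bipartite graph whose two sides are the integer-classes that need to be paired up, with edges given by compatible patterns, and invokes a regularity/Hall-type argument (the perfect matching of \cite{BJM}, as announced in the contributions) to obtain a consistent global merging. Iterating the matching across all $m$ groups produces the final set $S$ of integers for $\mathbf{P}$.

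Next I would verify the MIMO-PDA axioms of Definition~\ref{def-MIMO-PDA} for the constructed array $\mathbf{P}$. Condition C$1$ is immediate: each column still has $Z$ stars since each group-copy is (a relabelling of) $\mathbf{P}_1$, so the memory ratio is unchanged at $Z/F$. Condition C$2$ holds by construction of the merged integer set. For C$3$: within a single group-copy an integer appears at most $G$ times per column because $\mathbf{P}_1$ satisfies C$3$; the merging step only combines integers living in *different* groups, hence in different columns, so no column sees more than $G$ copies of any merged integer. For C$4$-a: a row of the merged subarray $\mathbf{P}^{(s)}$ receives $\tau_1$ integer entries from each of the (at most $m$) groups contributing to $s$, for a total of at most $m\tau_1 = \tau = \lceil L/G\rceil$, which is exactly the bound. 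Condition C$4$-b is the delicate one: I would need to show that among any $\rho+1 = \langle L\rangle_G + 1$ rows all carrying $s$ in some column $k$, at least two have distinct integer-position sets $\mathcal{P}^{(s)}_{f}$. Here column $k$ lies in a unique group $j$, so the rows carrying $s$ in column $k$ are exactly the rows carrying the pre-image integer in $\mathbf{P}_1$ in the corresponding column; the hypothesis $\mu \le \langle L\rangle_G$ together with $\langle L\rangle_G \ge \mu$ (the stated constraint) guarantees that at most $\mu$ of them share a pattern *within the group*, and the cross-group parts of the patterns differ between groups, so $\rho+1 > \mu$ rows cannot all be pattern-identical. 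I would need to argue carefully that the pattern in $\mathbf{P}^{(s)}$ includes the cross-group columns and that these distinguish rows coming from the merging; the consistency-number bookkeeping of Definition~\ref{def-consisten} is the right tool.

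Finally I would check optimality of $\mathbf{P}$. The sum-DoF equals $\big(\sum_{s}g_s\big)/S$; since each integer of $\mathbf{P}_1$ contributed $G\cdot(\text{occurrences})$ and each merged integer now collects the contributions of up to $m$ groups, a direct count gives $K(F-Z)/S = GKZ/F + G\tau$ provided $S$ is unchanged — which the matching construction is designed to ensure — matching the bound of Theorem~\ref{th-DoF} with $K = mK_1$ and $\tau = m\tau_1$. The main obstacle I anticipate is the matching/relabelling step: one must show that the compatibility bipartite graph admits the required perfect matching *simultaneously* for all integers and all group pairs, i.e. that the local pattern-merging choices can be made globally consistent without inflating $S$; this is precisely why Baranyai's theorem and the perfect-matching lemma are invoked, and getting the regularity hypothesis of that lemma to hold from the optimality and consistency-number properties of $\mathbf{P}_1$ will be the technical heart of the argument. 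A secondary subtlety is ensuring C$4$-b survives when $m\tau_1$ is attained with equality, since then the $\rho$-row slack is tight and the cross-group distinctness must be exploited fully.
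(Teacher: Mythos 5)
Your overall skeleton (horizontal concatenation of $m$ copies with the same $F$, $Z$, $S$; C$1$--C$3$ immediate; C$4$-a via $m\tau_1=\lceil L/G\rceil$; C$4$-b via the consistency number together with $\mu\le\langle L\rangle_G$; sum-DoF count $K(F-Z)/S=GKZ/F+G\lceil L/G\rceil$) is the right one, but the step you flag as ``the technical heart'' --- merging integer classes across the $m$ group-copies by a perfect matching in a compatibility bipartite graph --- is both unnecessary and, as described, broken. The paper's construction does no relabelling at all: it replicates the base array $m$ times horizontally \emph{keeping the same integer labels}, so that the integer $s$ of copy $j$ is identified with the integer $s$ of copy $j'$ trivially. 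Then $\mathbf{P}^{(s)}$ consists of exactly the rows of $\mathbf{Q}^{(s)}$ and $m$ copies of its columns, and --- using the fact that optimality of $\mathbf{Q}$ forces every row of $\mathbf{Q}^{(s)}$ to have exactly $\tau_1$ integer entries and $t_1$ stars (an observation the paper proves and that you essentially assume) --- every row of $\mathbf{P}^{(s)}$ has exactly $m\tau_1=\lceil L/G\rceil$ integer entries, and rows sharing a position set in $\mathbf{P}^{(s)}$ are exactly those sharing it in $\mathbf{Q}^{(s)}$, so at most $\mu\le\langle L\rangle_G$ of them coincide and C$4$-b follows. Baranyai's theorem and the perfect-matching lemma of \cite{BJM} are used in this paper only for the hybrid construction (Theorem~\ref{th-MAPDA-hypergraph}), not here.

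The concrete gap in your version is the C$4$-a claim for a non-identity merging: if the merged integer collects a preimage $s_j$ in copy $j$ and a \emph{different} preimage $s_{j'}$ in copy $j'$, then a row carrying $s_j$ need not carry $s_{j'}$, and nothing in Definition~\ref{def-MIMO-PDA} bounds the number of integer entries of that row inside the copy-$j'$ columns of $\mathbf{Q}^{(s_{j'})}$ (C$4$-a of $\mathbf{Q}$ only constrains rows that actually contain $s_{j'}$). So your assertion that each row of the merged subarray receives at most $\tau_1$ integer entries from each contributing group fails in general, and the existence and global consistency of the matching is left entirely open. Both problems disappear once you realize no merging choice is needed: identify equal integers across copies, which keeps $S$ fixed, makes each occurrence count $m$ times larger, and yields the sum-DoF $G(mt_1+\lceil L/G\rceil)$ matching Theorem~\ref{th-DoF}.
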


For instance, by Theorem \ref{th-TST} and Theorem \ref{th-grouping} the following new optimal schemes can be obtained.
\begin{corollary}[GTST MIMO-PDA]
	\label{cor-TST-grouping}For any positive integers $K_1$, $t_1$, $G$, $m$, $L_1$ and $L$ satisfying that  $\mu=\lceil G/{t_1+ \lceil L/G\rceil-1\choose t_1}\rceil \leq\langle L\rangle_G$ and $m\lceil \frac{L_1}{G}\rceil=\lceil\frac{L}{G}\rceil$, there exists an optimal $(G$, $L$, $K=mK_1$, $F={K_1\choose t_1}{K_1-t_1-1\choose \lceil L/G\rceil-1}G$, $Z={K_1-1\choose t_1-1}{K_1-t_1-1\choose \lceil L/G\rceil-1}G$, $S={K_1\choose t_1+ \lceil L/G\rceil}{t_1+ \lceil L/G\rceil-1\choose t_1})$ MIMO-PDA which generalizes a $F$-division $(G,L,K,M,N)$ MIMO coded caching scheme with the memory ratio $M/N=t_1/K_1$ and sum-DoF $\min\{Gt+G\lceil \frac{L}{G}\rceil,GK\}$.  
	\hfill $\square$  
\end{corollary}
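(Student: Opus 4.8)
The statement follows by composing Theorem~\ref{th-TST} with Theorem~\ref{th-grouping}. The plan is three steps: (1) instantiate Theorem~\ref{th-TST} to build an optimal TST MIMO-PDA on $K_1$ users with a reduced antenna count $L_1$; (2) apply the grouping operation of Theorem~\ref{th-grouping} to lift this array to $K=mK_1$ users with the target antenna count $L$, leaving $(F,Z,S)$ untouched; (3) read off the memory ratio and the sum-DoF from the array parameters via Theorem~\ref{th-Fundamental} and Theorem~\ref{th-DoF}.

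For step (1) I would choose $L_1$ to be a positive integer with $m\lceil L_1/G\rceil=\lceil L/G\rceil$ and $\langle L_1\rangle_G=\langle L\rangle_G$; such an $L_1$ exists because the relation $m\lceil L_1/G\rceil=\lceil L/G\rceil$ already forces $m\mid\lceil L/G\rceil$, and among the $G$ integers sharing a prescribed value of $\lceil\,\cdot\,/G\rceil$ one can pick the one whose residue modulo $G$ equals $\langle L\rangle_G$ (one also checks $L_1\le L$, as required by Theorem~\ref{th-grouping}, with equality iff $m=1$). With this choice the feasibility hypothesis of Theorem~\ref{th-TST} for the quadruple $(G,L_1,K_1,t_1)$ is precisely the corollary's assumption $\mu\le\langle L\rangle_G$. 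Theorem~\ref{th-TST} then produces an optimal TST MIMO-PDA $\mathbf{A}$ with parameters $(G,L_1,K_1,\binom{K_1}{t_1}\binom{K_1-t_1-1}{\lceil L_1/G\rceil-1}G,\binom{K_1-1}{t_1-1}\binom{K_1-t_1-1}{\lceil L_1/G\rceil-1}G,\binom{K_1}{t_1+\lceil L_1/G\rceil}\binom{t_1+\lceil L_1/G\rceil-1}{t_1})$ and, by Proposition~\ref{pro-rho}, consistency number equal to this same $\mu$.

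For step (2) I would invoke Theorem~\ref{th-grouping} on $\mathbf{A}$ with grouping factor $m$ and target antenna count $L$: its two hypotheses, $m\lceil L_1/G\rceil=\lceil L/G\rceil$ and $\langle L\rangle_G\ge\mu$, hold by the choice of $L_1$ and by the corollary's assumption. The output is an optimal $(G,L,mK_1,F,Z,S)$ MIMO-PDA with the same $F,Z,S$ as $\mathbf{A}$. For step (3), Theorem~\ref{th-Fundamental} turns it into an $F$-division $(G,L,K,M,N)$ MIMO scheme with memory ratio $M/N=Z/F$; the binomial factor common to $Z$ and $F$ cancels, giving $M/N=\binom{K_1-1}{t_1-1}/\binom{K_1}{t_1}=t_1/K_1$, hence $t=KM/N=mt_1$ and $GKZ/F=G\,t$. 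Since the array is optimal in the sense of Theorem~\ref{th-DoF} and one can never serve more than $K$ users with $G$ DoF each, the realized sum-DoF is $\min\{GKZ/F+G\lceil L/G\rceil,\,GK\}=\min\{Gt+G\lceil L/G\rceil,\,GK\}$, as claimed.

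The substantive part is the index bookkeeping in step (1): showing that an $L_1$ with both required properties exists for every admissible triple $(L,m,G)$, and that with it the TST-feasibility inequality of Theorem~\ref{th-TST} and the consistency-number inequality demanded by Theorem~\ref{th-grouping} both reduce to the single hypothesis $\mu\le\langle L\rangle_G$ printed in the corollary; the cases $G\mid L$ and $G\nmid L$ should be checked separately, since then $\lfloor L_1/G\rfloor$ equals $\lceil L_1/G\rceil$ or $\lceil L_1/G\rceil-1$ respectively. Everything downstream is routine substitution of binomial identities, which I would not expand in detail.
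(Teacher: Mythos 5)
Your overall route is exactly the paper's: the corollary is stated there with no separate proof, being obtained by feeding the optimal TST MIMO-PDA of Theorem~\ref{th-TST} (whose consistency number comes from Proposition~\ref{pro-rho}) into the grouping operation of Theorem~\ref{th-grouping}, and then reading off the scheme and its optimality via Theorems~\ref{th-Fundamental} and~\ref{th-DoF}; your steps (1)--(3) are that same composition with the bookkeeping made explicit, and your observation that only $\lceil L_1/G\rceil=\lceil L/G\rceil/m$ matters (so a convenient $L_1$ may be substituted for the given one) is legitimate.

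The one place you overclaim is the assertion that, after choosing $L_1$ with $\langle L_1\rangle_G=\langle L\rangle_G$, the feasibility condition of Theorem~\ref{th-TST} at $(G,L_1,K_1,t_1)$ ``is precisely'' the printed hypothesis $\lceil G/\binom{t_1+\lceil L/G\rceil-1}{t_1}\rceil\le\langle L\rangle_G$. It is not: Theorem~\ref{th-TST} and Proposition~\ref{pro-rho} involve $\binom{t_1+\lfloor L_1/G\rfloor}{t_1}$, whereas the corollary's binomial is built from $L$, i.e.\ from $\lceil L/G\rceil-1=m\lceil L_1/G\rceil-1$, which is strictly larger whenever $m\ge 2$; hence the printed hypothesis is strictly weaker than what your step (1) needs. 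Concretely, take $G=4$, $t_1=1$, $\lceil L_1/G\rceil=2$, $m=3$, $L=21$ (so $\lceil L/G\rceil=6$ and $\langle L\rangle_G=1$): the printed hypothesis reads $\lceil 4/6\rceil=1\le 1$ and holds, yet the base consistency number is $\lceil 4/\binom{2}{1}\rceil=2$ for your choice $L_1=5$ (and $\lceil 4/\binom{3}{1}\rceil=2$ for $G\mid L_1$), so for every admissible $L_1$ either Theorem~\ref{th-TST} at $L_1$ or the requirement $\langle L\rangle_G\ge\mu$ of Theorem~\ref{th-grouping} fails, and the composition does not go through. This mismatch is really an imprecision of the corollary as printed---its $(F,Z,S)$ are likewise written with $\lceil L/G\rceil$, although the grouping of Theorem~\ref{th-grouping} preserves the base array's $(F,Z,S)$, which involve $\lceil L_1/G\rceil$---so your argument is the intended one once $L$ is read as $L_1$ (equivalently $\tau_1$) in those expressions; but your claim that the two hypotheses coincide glosses over exactly this point rather than resolving it. A cleaner choice in step (1) is $L_1=G\lceil L/G\rceil/m$, i.e.\ $G\mid L_1$: then $\langle L_1\rangle_G=G$ makes Theorem~\ref{th-TST}'s own condition automatic, and the only surviving requirement is $\mu\le\langle L\rangle_G$ with $\mu$ the base consistency number from Proposition~\ref{pro-rho}, which is precisely what Theorem~\ref{th-grouping} uses.
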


In the following subsection, we demonstrate that the MIMO-PDA in Corollary \ref{cor-TST-grouping} achieves a significant reduction in subpacketization compared to the TST MIMO-PDA under identical parameters. However, Corollary \ref{cor-TST-grouping} imposes a strong constraint: $m\lceil\tfrac{L_1}{G}\rceil = \lceil\tfrac{L}{G}\rceil$. This implies that a new optimal MIMO-PDA with parameter $L$ can be constructed only if an optimal TST MIMO-PDA with parameter $L_1$ is available such that $\lceil\tfrac{L}{G}\rceil$ is divisible by $\lceil\tfrac{L_1}{G}\rceil$. Building upon an extended grouping method, and by utilizing the uniform complete hypergraph factorization Lemma \cite{BZ} in conjunction with the perfect matching Lemma from \cite{BJM}, we can relax this constraint.

The main construction of the MIMO-PDA is that we first construct two MIMO-PDAs $\mathbf{X}$ and $\mathbf{Y}$ with $K$ users such that  $\mathbf{X}$ and $\mathbf{Y}$ have sum-DoF $g_1=mG(t_1+\lceil L_1/G\rceil)$ and 
$g_2=G(\lceil L/G\rceil-m\lceil L_1/G\rceil)$ respectively. We obtain the desired matrix $\mathbf{P}$ by first vertically concatenating $\mathbf{X}$ and $\mathbf{Y}$, and then adjusting the integers in $\mathbf{X}$ and $\mathbf{Y}$ using the perfect matching Lemma in \cite{BJM}. Clearly the obtained MIMO-PDA $\mathbf{P}$ has the sum-DoF $g=g_1+g_2=G(mt_1+\lceil L/G\rceil)$ which achieves the upper bound on the sum-DoF. During the adjustment of the integer entries, it must be ensured that rows containing the same integer in $\mathbf{X}$ and $\mathbf{Y}$ respectively have different star positions. This is the key point to compute the consistency number which is ensure the Condition C4-b) of MIMO-PDA. In addition, in order to make the subpacketization as small as possible in our constructing method, we use the hypergraph factorization Lemma \cite{BZ} is used to construct the MIMO-PDAs $\mathbf{X}$ and $\mathbf{Y}$. That is the following result, whose proof is provided in Section \ref{proof-regular}.
\begin{theorem}[Hybrid MIMO-PDA]\label{th-MAPDA-hypergraph}	
For any positive integers $G$, $L_1$, and $L$, let $\tau_1=\lceil L_1/G\rceil$, $\tau=\lceil L/G\rceil$, $m=\lfloor \frac{\tau}{\tau_1}\rfloor\geq 1$, and $\tau_2=\tau-m\tau_1$. When $\tau_2>0$, for any positive integers $K_1$ and $t_1$ satisfying $\tau_1|\{t_1$, $K_1\}$, $t_1+\tau_1<K_1$, $\tau_2|(t_1+\tau_1)$, and $2G\leq {t_1+\tau_1-1\choose t_1 }$, there exists an optimal $(G$, $L$, $mK_1$, $F$, $Z$, $S)$ MIMO-PDA with
\begin{align}
F&=\frac{G\Lambda_{1}\Lambda_{3}}{\gcd(G,\Lambda_{2})}\left(\frac{m(t_1+\tau_1)}{\tau_2}+1 \right){K_1\choose t_1},\nonumber\\
Z&=\frac{G\Lambda_{1}\Lambda_{3}}{\gcd(G,\Lambda_{2})}\left(\frac{m(t_1+\tau_1)}{\tau_2}+1 \right){K_1-1\choose t_1-1},\nonumber\\ 
S&=\frac{m\Lambda_{2}\Lambda_{3}}{\gcd(G,\Lambda_{2})}\frac{t_1+\tau_1}{\tau_2}{K_1\choose t_1+ \tau_1}
\label{eq-parameters}
\end{align}where $\Lambda_{1}={K_1-t_1-1\choose  \tau_1-1}$,  $\Lambda_{2}={t_1+\tau_1-1\choose  \tau_1-1}$, and $\Lambda_{3}={t_1+\tau_1-1\choose  \tau_2-1}$.
\hfill $\square$
\end{theorem}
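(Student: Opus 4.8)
\textbf{Proof plan for Theorem~\ref{th-MAPDA-hypergraph}.}
The plan is to build the desired MIMO-PDA $\mathbf{P}$ by vertically concatenating two carefully-chosen MIMO-PDAs $\mathbf{X}$ and $\mathbf{Y}$ on the same set of $K=mK_1$ columns, exactly as sketched in the paragraph preceding the statement. The array $\mathbf{X}$ should carry sum-DoF $g_1=mG(t_1+\tau_1)$ and the array $\mathbf{Y}$ should carry sum-DoF $g_2=G\tau_2$, so that the concatenation has sum-DoF $g=g_1+g_2=G(mt_1+\tau)$, matching the upper bound of Theorem~\ref{th-DoF} once we check $Z/F=t_1/K_1$ so that $\gamma=t_1/K_1$ and $GK\gamma+G\tau = Gmt_1\cdot(K_1/(mK_1))\cdot m + \ldots$ — more precisely $GK\gamma = G m K_1 (t_1/K_1) = Gmt_1$, giving $g=Gmt_1+G\tau$. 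For $\mathbf{X}$ I would take the GTST MIMO-PDA of Corollary~\ref{cor-TST-grouping} with parameters $(G,L_1,mK_1,\ldots)$, whose existence is guaranteed by the divisibility hypotheses $\tau_1\mid\{t_1,K_1\}$ and $t_1+\tau_1<K_1$ and $2G\le\binom{t_1+\tau_1-1}{t_1}$ (this last inequality implies the consistency-number condition $\mu=\lceil G/\binom{t_1+\lfloor L_1/G\rfloor}{t_1}\rceil\le\langle L_1\rangle_G$ needed in Theorem~\ref{th-TST}). For $\mathbf{Y}$ I would construct a MIMO-PDA with $\tau_2$ integer entries per row and the right column weights; the divisibility $\tau_2\mid(t_1+\tau_1)$ is what makes the uniform complete hypergraph factorization (Lemma~\cite{BZ}, i.e.\ Baranyai's theorem) applicable to split the $\binom{t_1+\tau_1}{\tau_1}$-element structure into resolution classes of size-$\tau_2$ blocks.

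The second step is the \emph{integer re-labelling}, which is the heart of the argument. After concatenating $\mathbf{X}$ on top of $\mathbf{Y}$, the two blocks of rows share no integers yet; we must merge integers of $\mathbf{X}$ with integers of $\mathbf{Y}$ so that (i) each merged integer still appears at most $G$ times per column (C3), (ii) each row of each $\mathbf{P}^{(s)}$ still has at most $\tau$ integer entries (C4-a) — here an $\mathbf{X}$-row contributes $\le m\tau_1$ and a $\mathbf{Y}$-row contributes $\le\tau_2$, so the sum is $\le m\tau_1+\tau_2=\tau$, which is exactly why $\mathbf{X}$ must use $L_1$ and $\mathbf{Y}$ must absorb the residual $\tau_2$; and (iii) the consistency number of $\mathbf{P}$ is at most $\langle L\rangle_G$, which is C4-b. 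Condition (iii) is the delicate one: rows of $\mathbf{X}$ carrying a common integer $s$ that also have \emph{identical} star-position sets, when paired under the re-labelling with rows of $\mathbf{Y}$, must be matched to $\mathbf{Y}$-rows with \emph{distinct} star-position sets, so that after merging the number of rows of $\mathbf{P}^{(s)}$ in a fixed column with identical $\mathcal{P}^{(s)}_f$ does not exceed the allowed $\langle L\rangle_G$. I would encode the pairing requirement as a perfect matching problem in a suitable bipartite (or regular multipartite) graph whose one side is the set of $\mathbf{X}$-rows in a column sharing an integer and whose other side is the set of $\mathbf{Y}$-rows, with edges joining rows whose star-positions differ appropriately; the perfect matching Lemma of~\cite{BJM} supplies the required system of disjoint pairings once one verifies the degree/regularity condition, and this is where the counting constants $\Lambda_1={K_1-t_1-1\choose\tau_1-1}$, $\Lambda_2={t_1+\tau_1-1\choose\tau_1-1}$, $\Lambda_3={t_1+\tau_1-1\choose\tau_2-1}$ and the factor $\gcd(G,\Lambda_2)$ enter — they are precisely the block-multiplicities forced on $\mathbf{X}$ and $\mathbf{Y}$ so that the two sides of the matching graph have compatible sizes and are balanced-regular.

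The third step is bookkeeping: once the re-labelled array $\mathbf{P}$ is in hand, compute $F$, $Z$, $S$ by counting rows of each type (the $\mathbf{X}$-block has $\frac{G\Lambda_1\Lambda_3}{\gcd(G,\Lambda_2)}\cdot\frac{m(t_1+\tau_1)}{\tau_2}\binom{K_1}{t_1}$ rows and the $\mathbf{Y}$-block has $\frac{G\Lambda_1\Lambda_3}{\gcd(G,\Lambda_2)}\binom{K_1}{t_1}$ rows, summing to the stated $F$; $Z$ is obtained by replacing $\binom{K_1}{t_1}$ with $\binom{K_1-1}{t_1-1}$ since each column has a fixed fraction $t_1/K_1$ of stars; $S$ counts the merged integers), and verify conditions C1--C4 of Definition~\ref{def-MIMO-PDA} one by one using the properties inherited from $\mathbf{X}$, $\mathbf{Y}$ and the matching. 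Finally, invoking Theorem~\ref{th-Fundamental} turns $\mathbf{P}$ into the claimed $(G,L,mK_1,M,N)$ MIMO coded caching scheme, and invoking Theorem~\ref{th-DoF} together with the computed $Z/F=t_1/K_1$ confirms that $g=G(mt_1+\tau)=\min\{GK\gamma+G\tau, GK\}$ so that $\mathbf{P}$ is optimal. The main obstacle I anticipate is step two: ensuring that the perfect matching respects \emph{simultaneously} the per-column constraints for \emph{every} integer $s$ and every column $k$ — i.e.\ that the local matchings can be chosen globally consistently — which is exactly why a resolution-type decomposition (Baranyai) of the $\mathbf{X}$-side is needed before the~\cite{BJM} matching is applied, rather than an arbitrary pairing.
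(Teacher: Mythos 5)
Your high-level architecture coincides with the paper's: an array $\mathbf{X}$ carrying sum-DoF $mG(t_1+\tau_1)$, an array $\mathbf{Y}$ carrying $G\tau_2$, vertical concatenation, Baranyai-type resolutions, a perfect-matching step to merge integers, and bookkeeping that indeed reproduces the stated $F$, $Z$, $S$ (your row counts for the two blocks are correct, since $\binom{t_1+\tau_1}{\tau_2}=\frac{t_1+\tau_1}{\tau_2}\Lambda_3$). However, there is a genuine gap at the step you yourself flag as the main obstacle, and it is not a technicality. First, you take $\mathbf{X}$ to be the GTST MIMO-PDA built from the original TST array \eqref{eq-MN-array}, whose second coordinate is the occurrence order $\lceil o(\mathcal{S})/G\rceil$. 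The paper does not do this: it first constructs a \emph{new} base array $\mathbf{B}$ (Appendix~\ref{proof-regular}, Step~1) in which the second coordinate $d$ of each non-star entry is assigned through Baranyai parallel classes $\mathcal{F}_{\mathcal{S},d}$ of the $(t_1+\tau_1)$-set $\mathcal{S}$ itself (this is where $\tau_1\mid(t_1+\tau_1)$ and the factor $\gcd(G,\Lambda_2)$ actually enter). This re-labelling makes the star-position sets of rows sharing a given integer align with whole parallel classes, so that after an $\mathbf{X}$-integer $(\mathcal{S},a,\cdot)$ absorbs a $\mathbf{Y}$-integer $(\mathcal{S}',a',\cdot)$ the simple arithmetic condition $G\le\langle(a-a')G\rangle_{\Lambda_2}\le\Lambda_2-G$ forces the parallel-class index sets $\mathcal{D}$ and $\mathcal{D}'$ to be disjoint (Appendix~\ref{app-empty}), hence all rows of the merged subarray have distinct star positions and C$4$-b holds with $\mu=1$. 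With the occurrence-order labelling you propose, there is no such handle, and your plan offers no substitute mechanism for verifying C$4$-b after merging; this is precisely the non-trivial part of the theorem.

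Second, your matching is set up at the wrong granularity. The paper's bipartite graph has as vertices the \emph{integer symbols} of $\mathbf{X}$ and of $\mathbf{Y}$ (one global renaming per $\mathbf{Y}$-symbol), with edges requiring column-disjointness ($|\mathcal{S}\cup\mathcal{S}'|=t_1+\tau_1+\tau_2$ and $\mathcal{S}'\setminus\mathcal{S}=\mathcal{U}_{\mathcal{S}',y_1}(y_2)$) together with the label-separation condition above; both sides are shown to be regular with equal degree, so the perfect-matching lemma of \cite{BJM} applies and the global-consistency worry you raise disappears by construction. Your per-column, row-level pairing does not obviously globalize, which you acknowledge but do not resolve. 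Relatedly, the hypothesis $2G\le\binom{t_1+\tau_1-1}{t_1}=\Lambda_2$ is not mainly about the GTST consistency number, as you suggest: its essential role is to make the window $[G,\Lambda_2-G]$ nonempty so that each vertex has positive degree $\big(\tfrac{\Lambda_2-2G}{\gcd(\Lambda_2,G)}+1\big)$ times the combinatorial factors, i.e., it is what makes the matching graph nonempty and regular. Finally, the specific replication multiplicities ($m\binom{t_1+\tau_1}{\tau_2}$ vertical copies of $\mathbf{Q}$ for $\mathbf{X}$; $\Lambda_3$ vertical and $m$ horizontal copies for $\mathbf{Y}$) are exactly what equalize $|\mathcal{X}|=|\mathcal{Y}|$ and the two degrees; you gesture at this but would need to carry out the count to close the argument.
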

\begin{remark}[Explanation of the case $G=1$]
\label{remark-special-case-1}
The TST MIMO-PDA construction achieves a consistency number of $1$ only under the condition $G \leq \binom{t_1+\tau_1-1}{t_1}$. In contrast, our hybrid construction is not expressible in the simple form $\mu = \big\lceil G / \binom{t + \lfloor L/G\rfloor}{t} \big\rceil$ used in Theorem~\ref{th-TST}. To obtain an MIMO-PDA applicable for arbitrary $L$, we provide in Theorem~\ref{th-MAPDA-hypergraph} a design with $\mu = 1$. In this setting, the construction requires the constraints $t_1+\tau_1 < K_1$ and $2G \leq \binom{t_1+\tau_1-1}{t_1}$, together with the divisibility conditions $\tau_1 \mid {t_1, K_1}$ and $\tau_2 \mid (t_1+\tau_1)$.

When $G = 1$, the MIMO-PDA reduces to an MA-PDA, for which the consistency number is irrelevant. Applying the same proof technique yields an optimal MA-PDA that retains only $t_1+\tau_1 < K_1$, thereby eliminating all other previous restrictions: $\tau_1 \mid {t_1, K_1}$, $\tau_2 \mid (t_1+\tau_1)$, and $2G \leq \binom{t_1+\tau_1-1}{t_1}$. In addition, when $m=\tau/\tau_1$ in Theorem \ref{th-MAPDA-hypergraph} the obtained optimal MIMO-PDA is exactly the optimal MAPDA in \cite{LEEP}; when $m<\tau/\tau_1$, the obtain optimal MIMO-PDA is exactly the optimal MAPDA in \cite{YWCC}.\hfill $\square$
\end{remark}

\begin{remark}[The extension of the limitation $\tau_2 \mid (t_1+\tau_1)$]
Recall that we obtain the MIMO-PDA in Theorem~\ref{th-MAPDA-hypergraph} by combining two MIMO-PDAs: $\mathbf{X}$ with sum-DoF $g_1$ and $\mathbf{Y}$ with sum-DoF $g_2$. Each element in $\mathbf{Y}$ is obtained by splitting the columns occupied by the same element in a base TST MIMO-PDA across $t_1+\tau_1$ columns into $g_2$ parts, which originally required the condition $\tau_2 \mid (t_1+\tau_1)$.

Since $\mathbf{Y}$ can be viewed as consisting of $m$ horizontally replicated TST MIMO-PDAs, it is possible to split the $g_2$ parts from columns corresponding to identical vectors across these $m$ replicas. This perspective relaxes the original requirement $\tau_2 \mid (t_1+\tau_1)$. However, designing such a splitting while ensuring the final MIMO-PDA $\mathbf{P}$ remains optimal—particularly in applying the perfect matching lemma from \cite{BJM}—introduces significant technical challenges.\hfill $\square$
\end{remark}
\subsection{Performance Analyses for Subpacketization} 
\label{subsec-performance}
The main contribution of Theorem \ref{th-MAPDA-hypergraph} is to reduce the subpacketzation for
the case $\tau+t<K$ and $G\nmid L$ where $\tau=\lceil L/G\rceil$, while achieving the maximum sum-DoF. In the following, we compare the performance of the hybrid scheme in Theorem \ref{th-MAPDA-hypergraph} with the TST scheme in Theorem~\ref{th-TST}. Theoretical and numerical comparisons show that the hybrid scheme in Theorem~\ref{th-MAPDA-hypergraph} achieves a remarkable exponential reduction in subpacketization compared to the TST scheme in Theorem~\ref{th-TST}.  

Let $\tau = \lceil L/G \rceil$. Theorem~\ref{th-TST} provides a TST-MIMO PDA with subpacketization $F_{\text{TST}} = G \binom{K}{t} \binom{K-t-1}{\tau-1}$ and cache ratio $M_{\text{TST}}/N = t/K$, where $t$ is a positive integer satisfying $t < K_1 - \tau$. Theorem~\ref{th-MAPDA-hypergraph} provides a hybrid MIMO-PDA with the subpacketization $F_{\text{Hybrid}}= \frac{G\Lambda_{1}\Lambda_{3}}{\gcd(G,\Lambda_{2})}(\frac{m(t_1+\tau_1)}{\tau_2}+1){K_1\choose t_1}$ and cache ratio $\frac{M_{\text{Hybrid}}}{N}=\frac{t_1}{K_1}$, where $\tau_1 = \lceil L_1/G \rceil$, $m = \lfloor \tau/\tau_1 \rfloor$, $\tau_2 = \tau - m\tau_1$,  $\Lambda_{1}={K_1-t_1-1\choose  \tau_1-1}$,  $\Lambda_{2}={t_1+\tau_1-1\choose  \tau_1-1}$, and $\Lambda_{3}={t_1+\tau_1-1\choose  \tau_2-1}$. To match the total user count $K$ and the cache ratio, we set $K = mK_1$ and $t = m t_1$, where $t_1+\tau_1 <K_1$. Then, the subpacketization ratio can be written as follows. 
\begin{align}
\frac{F_{\text{Hybrid}}}{F_{\text{TST}}}=&\frac{\frac{G\Lambda_{1}\Lambda_{3}}{\gcd(G,\Lambda_{2})}(\frac{m(t_1+\tau_1)}{\tau_2}+1){K_1\choose t_1}}{G \binom{K}{t} \binom{K-t-1}{\tau-1}}=
 \frac{m \binom{t_1+\tau_1}{\tau_2} + \binom{t_1+\tau_1-1}{\tau_2-1}}{\gcd(G,\Lambda_{2})} \cdot \frac{\binom{K_1}{t_1}}{\binom{mK_1}{mt_1}} \cdot \frac{\binom{K_1-t_1-1}{\tau_1-1}}{\binom{mK_1-mt_1-1}{\tau-1}}\nonumber\\
\approx& \frac{m \binom{t_1+\tau_1}{\tau_2} + \binom{t_1+\tau_1-1}{\tau_2-1}}{\gcd(G,\Lambda_{2})} \cdot \frac{\sqrt{m}}{2^{(m-1)K_1 H(\gamma)}} \cdot \frac{\binom{K_1-t_1-1}{\tau_1-1}}{\binom{K-t-1}{\tau-1}}\ \ \ \ \ \ \ \ \ (K_1\rightarrow \infty)\nonumber\\
\approx &\frac{m \binom{t_1+\tau_1}{\tau_2} + \binom{t_1+\tau_1-1}{\tau_2-1}}{\gcd(G,\Lambda_{2})} \cdot \frac{\sqrt{m}}{2^{(m-1)K_1 H(\gamma)}} \cdot\frac{(K_1(1-\gamma))^{\tau_1-1}/(\tau_1-1)!}{(mK_1(1-\gamma))^{\tau-1}/(\tau-1)!} \nonumber \\ 
=&\frac{m \binom{t_1+\tau_1}{\tau_2} + \binom{t_1+\tau_1-1}{\tau_2-1}}{\gcd(G,\Lambda_{2})} \cdot \frac{\sqrt{m}}{2^{(m-1)K_1 H(\gamma)}} \cdot \frac{(\tau-1)!}{(\tau_1-1)!} \cdot m^{-(\tau-1)} \cdot (K_1(1-\gamma))^{-(\tau-\tau_1)}\nonumber\\
\approx&O\left( 2^{-(m-1)K_1 H(\gamma)} \cdot K_1^{-((m-1)\tau_1 + \tau_2)} \right).	\label{eq:R_ratio}
\end{align}Here $H(\gamma) = -\gamma\log_2\gamma - (1-\gamma)\log_2(1-\gamma)$. The exponential term $2^{-(m-1)K_1 H(\gamma)}$ dominates for large $K_1$, indicating that the hybrid scheme achieves an exponential reduction in subpacketization compared to the TST scheme. The reduction becomes more significant with larger grouping parameter $m$ and larger per-group user count $K_1$. 

The superior performance of the hybrid scheme stems from its hierarchical structure. First, dividing $K$ users into $m$ groups of $K_1$ users reduces the combinatorial complexity from $\binom{K}{t}$ to $\binom{K_1}{t_1}$. Second, choosing $\tau_1 < \tau$ reduces the exponent in the second binomial factor, further lowering subpacketization. Third, while reducing subpacketization, the hybrid scheme maintains the same DoF performance as the TST scheme under the parameter constraints in Theorem~\ref{th-MAPDA-hypergraph}.

Finally,  we present numerical comparisons of subpacketization levels in Fig.~\ref{fig:performance} between the hybrid MIMO-PDA in Theorem \ref{th-MAPDA-hypergraph} and the TST MIMO-PDA in Theorem \ref{th-TST}. In Fig.~\ref{fig:performance}, the blue solid curve with circle markers represents $F_{\text{TST}}$ from Theorem~\ref{th-TST}, while the red dashed curve with square markers represents $F_{\text{Hybrid}}$ from Theorem~\ref{th-MAPDA-hypergraph}. A substantial vertical gap between the two curves is observed, quantitatively demonstrating the advantage of Theorem~\ref{th-MAPDA-hypergraph} in reducing subpacketization. For instance, when $K=120$, we obtain $F_{\text{TST}} \approx 2.1\times 10^{24}$ and $F_{\text{Hybrid}} \approx 3.8\times 10^{12}$, giving $F_{\text{Hybrid}}/F_{\text{TST}} \approx 1.8\times 10^{-12}$—a reduction by twelve orders of magnitude.  
\begin{figure}[h]
	\centering
	\includegraphics[width=0.8\textwidth]{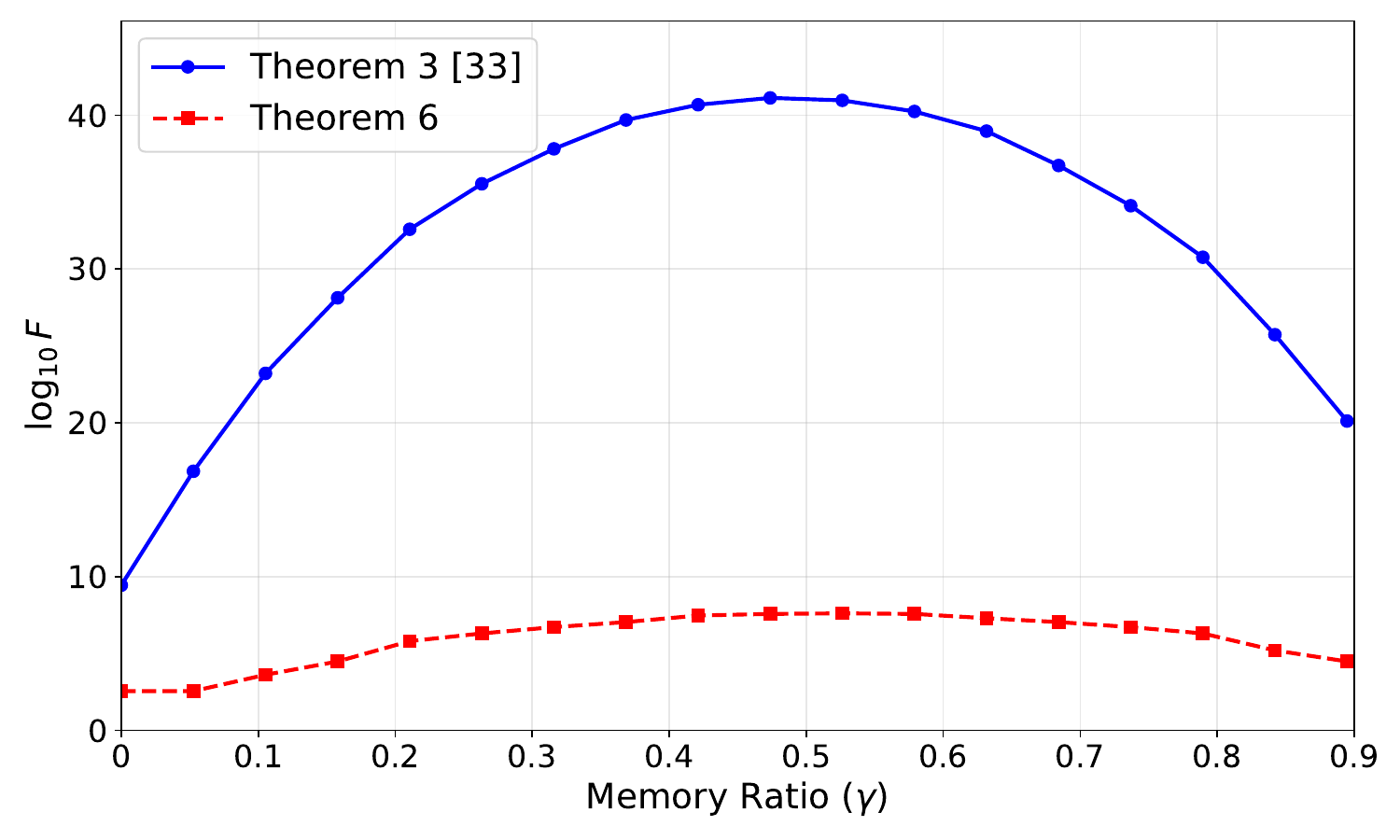}
	\caption{Subpacketization Comparison Between the MIMO-PDAs in Theorems \ref{th-TST} and \ref{th-MAPDA-hypergraph} with $K=120$, $G=3$, and $L=13$.}
	\label{fig:performance}
\end{figure}  

\subsection{Sketch of the Proposed MIMO-PDA in Theorem \ref{th-latin}}
For any positive integers $t$, $K$ and $\tau$ satisfying $t+\tau\geq K$, it is easy to check that a square $\mathbf{B}$ with size $K$ composed by symbols $*$ and null always satisfies the condition C$4$-a) of Definition \ref{def-MIMO-PDA} if its each row has exactly $t$ stars even when it contains the same integer. In addition, if the stars in each row are placed in a cyclic wrap-around topology\footnote{For each integer $i\in[K]$, the stars in the $i$-th row are placed $t$ cyclically from the $i$-th to the $\langle i+t\rangle_{K}$-th position.}, each row has a unique integer position set, i.e., the Condition C$4$-b) of Definition \ref{def-MIMO-PDA} always holds. In this case, each column of $\mathbf{B}$ has exactly $K-t$ null entries.  When $G\leq K-t$, we can replicate $\mathbf{B}$ vertically $G$ times and sequentially fill every $G$ consecutive null entries in each column with the same integer. Clearly each integer occurs in each column exactly $G$ times, i.e, that the Condition C$3$ holds. Let us take the following example to further illustrate our construction idea.

When $G=2$, $L=3$, $K=4$ and $t=2$, as listed in Fig. \ref{fig-flow-1} we first obtain a square with size of $K=4$ and place the stars in each row under a cyclic wrap-around topology to obtain the array $\mathbf{B}$. Then replicate  $\mathbf{B}$ and fill the first two null entries in each column with integer $1$ and the left two null entries in each column with integer $2$ respectively, we have our desired array. We can check that it is a $(2,3,4,8,4,2)$ MIMO-PDA.
\begin{figure}[http!]
\centering
\includegraphics[width=0.7\linewidth]{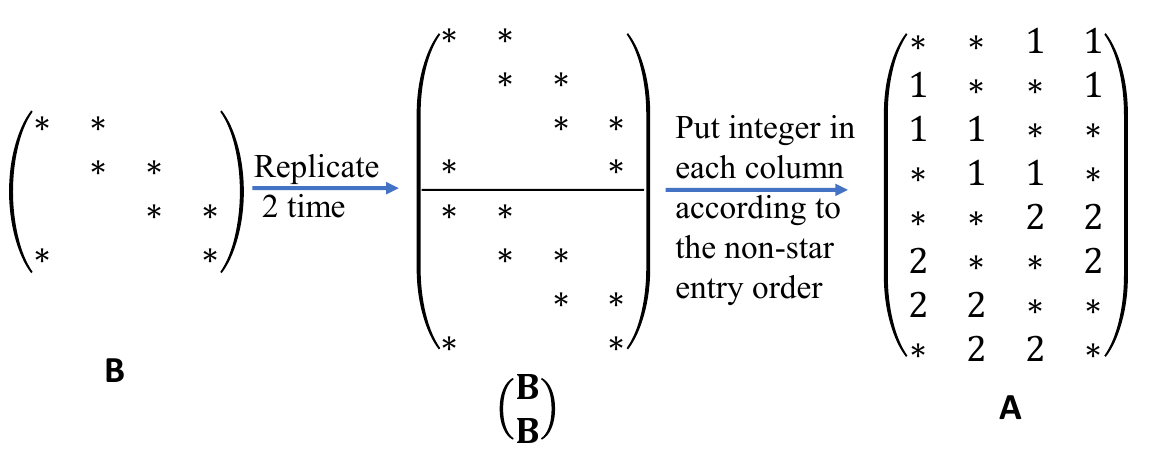}
\caption{The flow of constructing square $(2,3,4,8,4,2)$ MIMO-PDA.}
\label{fig-flow-1}
\end{figure}
By Theorem \ref{th-Fundamental}, we have a $(G=2,L=3,K=4,M,N)$ coded caching scheme with the memory ratio $\gamma=\frac{1}{2}$, subpacketization $F=8$ and the optimal sum DoF $g=8=G(t+\lceil\frac{L}{G}\rceil)=2(2+\lceil\frac{3}{2}\rceil)$.

\subsection{Sketch of MIMO-PDA in Theorem \ref{th-MAPDA-hypergraph}}
\label{subsection-example-general} 
The main idea of our hybrid construction is as follows. Using the Baranyai's Theorem \cite{Baranyai1975}, we first construct a new TST MIMO-PDA $\mathbf{B}$ by reputing a new integer in the second coordinate of each non-star entry of TST MIMO-PDA defined in \eqref{eq-MN-array}. By replicating $\mathbf{B}$ horizontally and vertically with appropriate integer adjustments, we construct two MIMO-PDAs, $\mathbf{X}$ and $\mathbf{Y}$, characterized by the same number of users $K$ and the same integer set $[S]$, but with different sum-DoF values: $g_1 = mG\left(t_1 + \tau_1 \right)$ and $g_2 = G\tau_2$, respectively. Finally, using the perfect matching \cite{BJM} we obtain the desired $g=g_1+g_2=G(mt_1+\tau)$-MIMO-PDA $\mathbf{P}$ by first vertically concatenating $\mathbf{X}$ and $\mathbf{Y}$. Hence, our construction consists of the following four steps. 

The following results and notations are useful in this paper. Let us first introduce a important result in the field of factorization of the hypergraph.
\begin{lemma}[Baranyai's Theorem \cite{Baranyai1975}]
	\label{lem-factor}
	For any set $\mathcal{V}$ of size $v$ and any positive integer $\alpha$ satisfying $\alpha|v$, all the subsets of ${\mathcal{V}\choose \alpha}$ can be partitioned  into $\Lambda={v-1\choose \alpha-1}$ parallels $\mathcal{F}_{\mathcal{V},1}$, $\mathcal{F}_{\mathcal{V},2}$, $\ldots$, $\mathcal{F}_{\mathcal{V},\Lambda}$ of $\mathcal{V}$.
\end{lemma}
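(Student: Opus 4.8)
The plan is to prove Baranyai's theorem by the classical element‑by‑element induction, with the integrality of a transportation (network‑flow) polytope driving the inductive step. Identify $\mathcal{V}$ with $[v]=\{1,\dots,v\}$, set $r=v/\alpha$ and $\Lambda=\binom{v-1}{\alpha-1}$, and note the basic identity $\Lambda r=\binom{v}{\alpha}$, so that the desired decomposition of $\binom{\mathcal{V}}{\alpha}$ into $\Lambda$ parallels (each a partition of $\mathcal{V}$ into $r$ blocks of size $\alpha$) is numerically consistent from the outset.

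\textbf{Inductive invariant.} I would carry, for each $j=0,1,\dots,v$, a list of $\Lambda$ partitions $\pi^{(j)}_1,\dots,\pi^{(j)}_\Lambda$ of $[j]$, each into exactly $r$ blocks that are allowed to be empty, such that for every $S\subseteq[j]$ with $|S|=s$ the number of occurrences of $S$ as a block, summed over all $\Lambda$ partitions, equals $\binom{v-j}{\alpha-s}$. The base case $j=0$ is realized by taking all $\Lambda$ partitions to consist of $r$ empty blocks, since then the empty block occurs $\Lambda r=\binom{v}{\alpha}$ times, as required. At the other end $j=v$ the invariant forces every block of every $\pi^{(v)}_l$ to have size exactly $\alpha$ and forces each $\alpha$‑subset of $[v]$ to occur exactly once, so $\pi^{(v)}_1,\dots,\pi^{(v)}_\Lambda$ are precisely the required parallels $\mathcal{F}_{\mathcal{V},1},\dots,\mathcal{F}_{\mathcal{V},\Lambda}$.

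\textbf{The extension step.} To go from $j$ to $j+1$, I would pass from $\pi^{(j)}_l$ to a partition of $[j+1]$ by inserting $j+1$ into exactly one of its $r$ blocks (a block $S$ becoming $S\cup\{j+1\}$). Pascal's rule $\binom{v-j}{\alpha-s}=\binom{v-j-1}{\alpha-s}+\binom{v-j-1}{\alpha-s-1}$ shows the level‑$(j+1)$ invariant holds precisely when, for every $S$ of size $s$, exactly $\binom{v-j-1}{\alpha-s-1}$ of its $\binom{v-j}{\alpha-s}$ current occurrences are the ones that receive $j+1$. Encoding the insertion choice as a binary array $x_{l,S}$ supported on pairs with $S\in\pi^{(j)}_l$, this is exactly an integer point of the transportation polytope with row sums $\sum_S x_{l,S}=1$ and column sums $\sum_l x_{l,S}=\binom{v-j-1}{\alpha-s-1}$. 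Since the constraint matrix of such a bipartite degree system is totally unimodular and all prescribed margins are integers, the polytope has an integral vertex as soon as it is nonempty, so the whole step reduces to exhibiting one possibly fractional feasible point.

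\textbf{Main obstacle.} The feasibility check is where I expect the real work to lie: a uniform fractional split of every occurrence of $S$ by the fixed ratio $\binom{v-j-1}{\alpha-s-1}/\binom{v-j}{\alpha-s}$ fails in general, because different partitions have different block‑size profiles and so would not get row sum $1$. Following Baranyai, I would instead build an auxiliary capacitated network encoding ``one insertion per partition'' against ``$\binom{v-j-1}{\alpha-s-1}$ upgrades of each $S$'', and verify feasibility by checking the max‑flow min‑cut inequality on every cut using only the inductive invariant and elementary binomial estimates; total unimodularity then rounds a feasible flow to the required integral selection. Iterating the step up to $j=v$ yields the $\Lambda$ parallels, which proves the lemma.
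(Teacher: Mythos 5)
First, note that the paper does not prove this lemma at all: it is quoted verbatim as Baranyai's Theorem with a citation to \cite{Baranyai1975}, so there is no in-paper proof to compare against; your proposal is an attempt to reprove a classical cited result.

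Your outline is the standard Baranyai induction (partitions of $[j]$ into exactly $r=v/\alpha$ possibly empty blocks, with each $S\subseteq[j]$ of size $s$ occurring $\binom{v-j}{\alpha-s}$ times, extended one element at a time via an integral point of a transportation system), and the base case, the endpoint $j=v$, the Pascal-rule bookkeeping, and the total-unimodularity/integral-flow rounding are all correctly identified. The genuine gap is exactly at the step you flag as the ``main obstacle'': you assert that the uniform fractional split fails because different partitions have different block-size profiles, and you then leave feasibility to an unspecified max-flow min-cut verification ``on every cut using elementary binomial estimates,'' which is never carried out. That assertion is false within your own setup, and the unproved feasibility is the heart of the theorem, so as written the induction step is not established. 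In fact the uniform split is feasible: since
\begin{equation*}
\frac{\binom{v-j-1}{\alpha-s-1}}{\binom{v-j}{\alpha-s}}=\frac{\alpha-s}{v-j},
\end{equation*}
assigning this fraction to every occurrence of a block of size $s$ gives, for a partition of $[j]$ into exactly $r$ blocks of sizes $s_1,\dots,s_r$ (empty blocks included, $s_1+\cdots+s_r=j$), the row sum $\sum_{i=1}^{r}\frac{\alpha-s_i}{v-j}=\frac{r\alpha-j}{v-j}=1$ because $r\alpha=v$, while the column sum for each $S$ is $\binom{v-j}{\alpha-s}\cdot\frac{\alpha-s}{v-j}=\binom{v-j-1}{\alpha-s-1}$ by the inductive invariant. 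So the fractional point you dismissed is exactly the feasible point needed, independent of the block-size profile; this one-line computation (which is the actual content of Baranyai's argument) closes the gap, after which your TU/integrality step and the iteration to $j=v$ go through as you describe. Replacing the erroneous claim and the hand-waved cut analysis with this computation would make the proof correct.
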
 

 A graph can be denoted by $\mathcal{G}=(\mathcal{V},\mathcal{E})$, where $\mathcal{V}$ is the set of vertices and $\mathcal{E}$ is the set of edges, and a subset of edges $\mathcal{M}\subseteq \mathcal{E}$ is a matching if no two edges have a common vertex. A bipartite graph, denoted by $\mathcal{G}=(\mathcal{X},\mathcal{Y}, \mathcal{E})$, is a graph whose vertices are divided into two disjoint parts $\mathcal{X}$ and $\mathcal{Y}$ such that every edge in $\mathcal{E}$ connects a vertex in $\mathcal{X}$ to one in $\mathcal{Y}$. For a set $\mathcal{X}'\subseteq \mathcal{X}$, let $N_{\mathcal{G}}(\mathcal{X}')$ denote the set of all vertices in $\mathcal{Y}$ adjacent to some vertices of $\mathcal{X}'$. The degree of a vertex is defined as the number of vertices adjacent to it. If every vertex of $\mathcal{X}$ has the same degree, we  call it the degree of $\mathcal{X}$ and denote it $d(\mathcal{X})$. A matching is called perfect if $|\mathcal{X}|=|\mathcal{Y}|$. 
\begin{lemma}[Perfect matching \cite{BJM}]
	\label{le-hall-theorem}
	Given a bipartite graph $\mathcal{G}=(\mathcal{X},\mathcal{Y},\mathcal{E})$ with $|\mathcal{X}|=|\mathcal{Y}|$, if there exist two positive integers $m$ and $n$ such that $d(\mathcal{X})=m$ and $d(\mathcal{Y})=n$, then there is a saturating matching for $\mathcal{X}$ if $m\geq n$, \emph{i.e.}, there exists a perfect matching.
\end{lemma}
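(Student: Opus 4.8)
The plan is to deduce the statement from Hall's marriage theorem, which (as found in the same reference \cite{BJM}) asserts that a bipartite graph $\mathcal{G}=(\mathcal{X},\mathcal{Y},\mathcal{E})$ admits a matching saturating $\mathcal{X}$ if and only if $|N_{\mathcal{G}}(\mathcal{X}')|\geq |\mathcal{X}'|$ holds for every subset $\mathcal{X}'\subseteq \mathcal{X}$. Thus the entire task reduces to verifying this neighbourhood (Hall) condition from the two regularity hypotheses $d(\mathcal{X})=m$ and $d(\mathcal{Y})=n$ together with the inequality $m\geq n$. Once the condition is established, the resulting $\mathcal{X}$-saturating matching is automatically perfect: since $|\mathcal{X}|=|\mathcal{Y}|$, saturating $\mathcal{X}$ matches $|\mathcal{X}|=|\mathcal{Y}|$ vertices of $\mathcal{Y}$, so every vertex of $\mathcal{Y}$ is matched as well.

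First I would fix an arbitrary subset $\mathcal{X}'\subseteq \mathcal{X}$ and count in two ways the number of edges of $\mathcal{G}$ having an endpoint in $\mathcal{X}'$; call this quantity $e(\mathcal{X}')$. On one hand, since every vertex of $\mathcal{X}$ has degree exactly $m$, each vertex of $\mathcal{X}'$ contributes exactly $m$ such edges, and these edges are distinct across distinct vertices of $\mathcal{X}'$, so $e(\mathcal{X}')=m|\mathcal{X}'|$. On the other hand, every counted edge has its $\mathcal{Y}$-endpoint in the neighbourhood $N_{\mathcal{G}}(\mathcal{X}')$ by definition of the neighbourhood, and each vertex of $N_{\mathcal{G}}(\mathcal{X}')$ has degree exactly $n$ in $\mathcal{G}$, hence is incident to at most $n$ of the counted edges; therefore $e(\mathcal{X}')\leq n\,|N_{\mathcal{G}}(\mathcal{X}')|$. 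Combining the two estimates yields $m|\mathcal{X}'|\leq n\,|N_{\mathcal{G}}(\mathcal{X}')|$.

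It then remains to invoke $m\geq n$. As $n$ is a positive integer we may divide through, obtaining $|N_{\mathcal{G}}(\mathcal{X}')|\geq \tfrac{m}{n}\,|\mathcal{X}'|\geq |\mathcal{X}'|$, which is exactly Hall's condition for the arbitrary set $\mathcal{X}'$. Applying Hall's theorem then produces a matching saturating $\mathcal{X}$, and by the remark in the first paragraph this matching is perfect, which completes the argument.

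As for difficulty, the proof is short and its only genuinely quantitative ingredient is the two-sided edge count. The step to be careful about is the upper bound $e(\mathcal{X}')\leq n\,|N_{\mathcal{G}}(\mathcal{X}')|$: it relies on the $\mathcal{Y}$-side being $n$-regular, so that no neighbour can receive more than $n$ of the counted edges, and on the fact that no edge leaves $N_{\mathcal{G}}(\mathcal{X}')$. I do not anticipate any real obstacle beyond stating Hall's theorem cleanly as the external tool and recording the harmless use of $n\geq 1$ when dividing; the hypothesis $m\geq n$ is precisely what converts the neighbour-expansion estimate into the Hall inequality $|N_{\mathcal{G}}(\mathcal{X}')|\geq|\mathcal{X}'|$.
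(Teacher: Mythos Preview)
Your argument is correct: the double edge-count gives $m|\mathcal{X}'|\le n|N_{\mathcal{G}}(\mathcal{X}')|$, and $m\ge n$ yields Hall's condition, so Hall's theorem produces an $\mathcal{X}$-saturating matching which is perfect since $|\mathcal{X}|=|\mathcal{Y}|$. The paper itself does not prove this lemma; it is quoted as a known result from \cite{BJM} and used as a black box in Step~4 of the hybrid construction, so there is no paper proof to compare against beyond noting that your derivation is the standard one.
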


In Fig. \ref{fig-flow-2}, by Lemmas \ref{lem-factor} and \ref{le-hall-theorem} we take an optimal $(G=2,L=13,K=24,F=7980,Z=3990,S=2520)$ MIMO-PDA as an example to introduce these four steps.  
\begin{figure}[http!]
	\centering
	\includegraphics[width=1\linewidth]{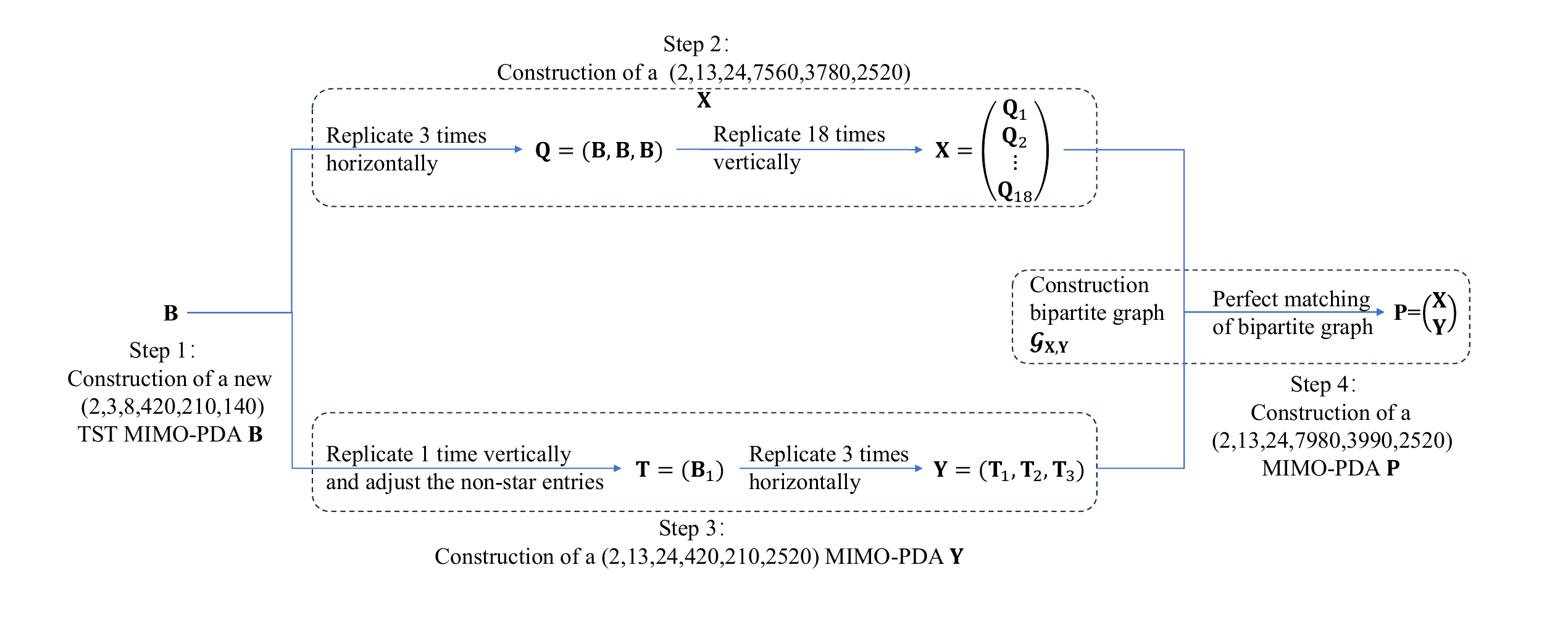}
	\caption{The ﬂow diagram of constructing $\mathbf{B}$, $\mathbf{Q}$, $\mathbf{X}$, $\mathbf{T}$, $\mathbf{Y}$ and $\mathbf{P}$ .}
	\label{fig-flow-2}
\end{figure} 

$\bullet$ {\bf Step 1} Construction of a new $(2,3,8,420,210,140)$ TST MIMO-PDA $\mathbf{B}$: By our hypothesis, we have $\tau=\lceil L/G\rceil=\lceil 13/2\rceil=7$. When $K_1=8$, $L_1=3$, $t_1=4$, we have $\tau_1=\lceil L_1/G\rceil=\lceil 3/2\rceil=2$ and $m=K/K_1=\lfloor \tau/\tau_{1}\rfloor=3$. In addition, $\tau_1|(t_1+\tau_1)$ since $t_1+\tau_1=4+2=6$. 

We now introduce the construction of $\mathbf{B}$. Its rows are indexed by $(\mathcal{T},r,l)$ where 
$\mathcal{T} \in \binom{[K_1]}{t_1} = \binom{[8]}{4}$, $r \in[\Lambda_{1}]=[\binom{K_1-t_1-1}{\tau_1-1}] =[3] $, $l \in \bigl[\frac{G}{\gcd(G,\Lambda_{2})}\bigr] = [2]$; its columns are indexed by $k \in [K_1]$. Clearly there are exactly 
$$F_1=\frac{G}{\gcd(G,\Lambda_{2})}\Lambda_{1}{K_1\choose t_1}=2\cdot{3\choose 1}\cdot{8\choose 4}=420$$
rows and $8$ columns in $\mathbf{B}$. Following the star‑entry convention in \eqref{eq-MN-array}, the entry at row $(\mathcal{T},r,l)$ and column $k$ is set to “$*$” if $k \in \mathcal{T}$. We can check that each column has exactly 
$$Z_1=\frac{G}{\gcd(G,\Lambda_{2})}\Lambda_{1}{K_1-1\choose t_1}=2\cdot{3\choose 1}\cdot{7\choose 4}=210 $$ 
stats. In the following, we only need to consider the putting vectors strategy for the non-star entries of $\mathbf{B}$, i.e., the case $k\not\in\mathcal{T}$. 

Let us consider the entries at the row $(\mathcal{T}=[4],r=1,l=1)$. By Lemma \ref{lem-factor}, we can divide all the $\tau_1=2$-subset of $\mathcal{R}=[K_1]\setminus\mathcal{T}=[8]\setminus[4]=[5:8]$ into $\Lambda_{1}={8-4-1\choose 2-1}=3$ parallels of $\mathcal{R}$ as follows. 
\begin{align}
	\mathcal{F}_{\mathcal{R},1}& = \{\{5, 6\}, \{7, 8\}\}\ \ 
	\mathcal{F}_{\mathcal{R},2} = \{\{5, 7\}, \{6, 8\}\}\ \
	\mathcal{F}_{\mathcal{R},3} = \{\{5, 8\}, \{6, 7\}\}.
	\label{eq-paralles-example-R}
\end{align} According to our hypothesis {\color{blue}$r=1$}, let us consider the first parallel $\mathcal{F}_{\mathcal{R},{\color{blue}1}}$. 
\begin{itemize}
\item[$-$] For the first element of $\mathcal{F}_{\mathcal{R},1}$, i.e., $\{5,6\}$, define  $\mathcal{S}=\mathcal{T}\cup \{5,6\}=[6]$. By Lemma \ref{lem-factor},  we can divide all the $\tau_1=2$-subset of $\mathcal{S}$ into $\Lambda_{2}={6-1\choose 2-1}=5$ parallels of $\mathcal{S}$, i.e.,  
\begin{align}
\mathcal{F}_{\mathcal{S},1}& = \{\{1, 2\}, \{3, 4\}, \{5, 6\}\}\ \ 
\mathcal{F}_{\mathcal{S},2}= \{\{1, 3\}, \{2, 5\}, \{4, 6\}\}\ \
\mathcal{F}_{\mathcal{S},3}= \{\{1, 4\}, \{2, 6\}, \{3, 5\}\}\nonumber\\
\mathcal{F}_{\mathcal{S},4}&=\{\{1, 5\}, \{2, 4\}, \{3, 6\}\}\ \
\mathcal{F}_{\mathcal{S},5}=\{\{1, 6\}, \{2, 3\}, \{4, 5\}\}.
\label{eq-paralles-example-S}
\end{align}Since $\{5,6\}$ is the third element of the first parallel, i.e., $\mathcal{F}_{\mathcal{S},{\color{red}1}}$, we put the vector $$\left(\mathcal{S},\left\lceil\frac{{\color{red}1}+(l-1)\Lambda_{2}}{G}\right\rceil\right)=\left([6],\left\lceil\frac{1+(1-1)\cdot 5}{2}\right\rceil\right)=([6],1)$$ into the entries at row $(\mathcal{T}=[4],r=1,l=1)$ and columns $k=5$, $6$ respectively. 

\item[$-$] For the second element of $\mathcal{F}_{\mathcal{R},1}$, i.e., $\{7,8\}$, define  $\mathcal{S}=\mathcal{T}\cup \{7,8\}=\{1,2,3,4,7,8\}$.  By Lemma \ref{lem-factor}, we can divide all the $\tau_1=2$-subset of $\mathcal{S}$ into $\Lambda_{2}={6-1\choose 2-1}=5$ parallels of $\mathcal{S}$, i.e.,  
\begin{align*}
\mathcal{F}_{\mathcal{S},1}& = \{\{1, 2\}, \{3, 4\}, \{7, 8\}\}\ \ 
\mathcal{F}_{\mathcal{S},2} = \{\{1, 3\}, \{2, 7\}, \{4, 8\}\}\ \
\mathcal{F}_{\mathcal{S},3} = \{\{1, 4\}, \{2, 8\}, \{3, 7\}\}\nonumber\\
\mathcal{F}_{\mathcal{S},4} &= \{\{1, 7\}, \{2, 4\}, \{3, 8\}\}\ \
\mathcal{F}_{\mathcal{S},5} = \{\{1, 8\}, \{2, 3\}, \{4, 7\}\}.
\end{align*}Since $\{7,8\}$ is the third element of the first parallel, i.e., $\mathcal{F}_{\mathcal{S},{\color{red}1}}$, we put the vector
\begin{align*}
\left(\mathcal{S},\left\lceil\frac{{\color{red}1}+(l-1)\Lambda_{2}}{G}\right\rceil\right)=\left([4]\cup\{7,8\},\left\lceil\frac{1+(1-1)\cdot 5}{2}\right\rceil\right)=([4]\cup\{7,8\},1)	
\end{align*}	
in the entries at row $(\mathcal{T}=[4],r=1,l=1)$ and columns $k=7$, $8$ respectively.
\end{itemize}    
Using the above assignment rule, we can assign all non-star entries to the rows $(\mathcal{T}, r=1, l=1)$ where $\mathcal{T}=\{1,2,3,5\},\{1,2,5,6\},\{1,2,4,6\},\{2,4,5,6\},\{3,4,5,6\}$, and thereby obtain the subarray displayed in Fig.~\ref{fig-ex1}.
\begin{figure}[http!]
\centering
\includegraphics[width=6in]{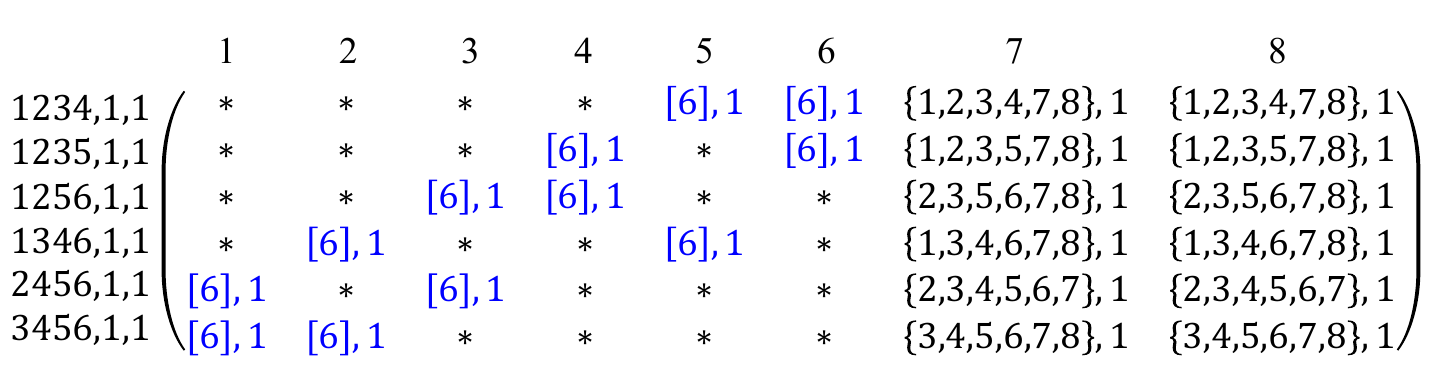}
\caption{The subarray of $\mathbf{B}$ including the row $(\mathcal{T},1,1)$ where $\mathcal{T}=1234$, $1235$, $1256$, $1346$, $2456$, $3456$.}
\label{fig-ex1}
\end{figure} We can check that the vector $([6], 1)$ highlighted in red satisfies conditions C$3$-$4$ of Definition \ref{def-MIMO-PDA}. Repeating this process for all non-star entries generates our desired array $\mathbf{B}$. Due to its $420$ rows, the complete array is omitted; only the representative subarray is provided. We can see that each $6$-subset $\mathcal{S}$ of $[K_1]$ appears in $\mathbf{B}$. In addition, the vectors containing $\mathcal{S}$ is divided into 
$$\left\lceil\frac{{\color{red}\Lambda_{2}}+(\frac{G}{\gcd(G,\Lambda_{2})}-1)\Lambda_{2}}{G}\right\rceil=
\frac{\Lambda_{2}}{\gcd(G,\Lambda_{2})}=\frac{5}{\gcd(2,5)}=5$$ 
different vectors. Hence, there are exactly $S_1=5{K_1\choose t_1+\tau_1}=5{8\choose 6}=140$ different vectors in $\mathbf{B}$. In addition, we have $\mu=\lceil G/{t_1+\tau_1-1\choose t_1}\rceil=\lceil 2/5\rceil=1$. So, $\mathbf{B}$ is a  $(2,3,8,F_1=420,Z_1=210,S_1=140)$ MIMO-PDA with consistency number $\mu=1$.

From \eqref{eq-MN-array}, the first coordinate of the vector entries in the TST MIMO-PDA introduced in \cite{TSJA} coincides with that of the vectors assigned to the non‑star entries of $\mathbf{B}$. Consequently, $\mathbf{B}$ can be obtained by modifying only the second coordinate of each vector in the array $\mathbf{A}$ defined in \eqref{eq-MN-array}. For brevity, the resulting MIMO-PDA $\mathbf{B}$ will be referred to as a new TST MIMO-PDA.

$\bullet$ {\bf Step 2} Construction of a $(2,13,24,7560,3780,2520)$ MIMO-PDA $\mathbf{X}$: We can obtain first an array $\mathbf{Q}$ by replicating $\mathbf{B}$ $m=3$ times horizontally. Clearly $\mathbf{Q}$ has $420$ rows, $mK_1=3\cdot 8=24$ columns and $140$ different vectors; each column has exactly $210$ stars, the subarray $\mathbf{Q}^{(\mathbf{e})}$ of $\mathbf{Q}$ including the rows and columns containing $\mathbf{e}$ can be written as  $\mathbf{Q}^{(\mathbf{e})}=(\mathbf{B}^{(\mathbf{e})}$ $\mathbf{B}^{(\mathbf{e})}$ $\mathbf{B}^{(\mathbf{e})})$ 
where $\mathbf{B}^{(\mathbf{e})}$ is the subarray of $\mathbf{B}$ including the rows and columns containing $\mathbf{e}$. For instance, in Fig. \ref{fig-ex3},  we have the subarray $\mathbf{Q}^{([6],1)}$ which is generated by replicating $\mathbf{B}^{([6],1)}$ $3$ times horizontally. In $\mathbf{Q}^{([6],1)}$, vector $([6],1)$ appears  exactly $G=2$ times in each column, i.e., the condition C$3$ of Definition \ref{def-MIMO-PDA} holds. Moreover, each row has exactly $12$ stars because each row of $\mathbf{B}^{([6],1)}$ has exactly $4$ stars; the star position sets differ across all rows because $\mathbf{B}$ has the consistency number $\mu=1$. Therefore, the condition C$4$ of Definition \ref{def-MIMO-PDA} holds. There are exactly $g_1=3\cdot 12=36$. Then $\mathbf{Q}$ is $36$-$(2,13,24,420,210,140)$ MIMO-PDA.

\begin{figure}[http!]
	\centering
	\includegraphics[width=6in]{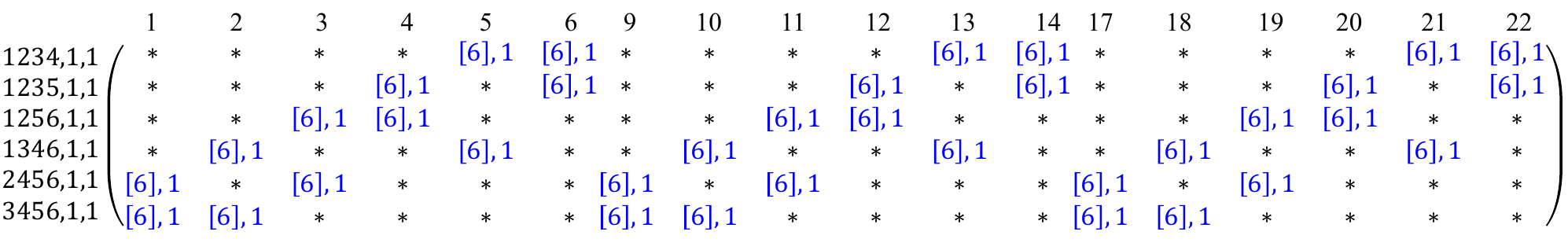}
	\caption{The subarray $\mathbf{Q}^{([6],1)}$ of $\mathbf{Q}$.}
	\label{fig-ex3}
\end{figure}
Recall that the maximum sum-DoF is $G(mt_1+\tau)=2\cdot(3\cdot 4+7) =38>36=g_1$. We aim to append another MIMO-PDA $\mathbf{Y}$ with the same number of columns and a sum-DoF of $g_2=38-36=2$ (detailed in the following step) below $\mathbf{Q}$. To ensure successful concatenation, we obtain our desired array $\mathbf{X}$ by vertically replicating $\mathbf{Q}$ $m\binom{t_1+\tau_1}{\tau_2}=3\binom{4+2}{1}=18$ times and appending the replication index to its each vector. Consequently, all entries across different copies are mutually distinct. For instance, the vector $([6],1)$ in the first replication becomes $([6],1,1)$, in the second replication it becomes $([6],1,2)$, and so forth. Since each replication is also a $36$-$(2,13,24,420,210,140)$ MIMO-PDA with consistency number $\mu=1$, $\mathbf{X}$ is a  $36$-$(2,11,24,F_{\mathbf{X}}=7560,Z_{\mathbf{X}}=3780,S_{\mathbf{X}}=2520)$ MIMO-PDA with the consistency number $\mu=1$.

$\bullet$ {\bf Step 3}  Construction of a $(2,13,24,420,210,2520)$ MIMO-PDA $\mathbf{Y}$: Let $\tau_2=\tau-m\tau_1=7-3\cdot 2=1$. Recall that the sum-DoF $g_1=12$ of $\mathbf{B}$ is generated from its $t_1+\tau_1=6$ columns. To obtain the MIMO-PDA $\mathbf{Y}$ with sum-DoF $g_2=2$, we can reorganize these columns into $(t_1+\tau_1)/\tau_2=6/1=6$ groups each of which has size of $\tau_2$ and generates sum-DoF $2$. For convenience in computing the degree of the bipartite graph introduced later, we first replicate $\mathbf{B}$ 
$\Lambda_3 = \binom{t_1+\tau_1-1}{\tau_1} = \binom{5}{0} = 1$ times. Then, the columns of each replica are partitioned into $\Lambda_3$ parallel classes of this $t_1+\tau_1=6$ column set, each consisting of $\tau_2=1$-subsets that collectively yield the sum-DoF $g_2=2$. For instance,  the vector $([6],1)$ is located in columns $1$ through $6$. By obtaining a parallel $\mathcal{U}_{[6],1}=\{\{1\},\{2\},\{3\},\{4\},\{5\},\{6\}\}$ of $[6]$, we can represent it across these columns as $([6],1,1,1)$, $([6],1,1,2)$, $([6],1,1,3)$, $([6],1,1,4)$, $([6],1,1,5)$, and $([6],1,1,6)$, where the third coordinate denotes the parallel index and the fourth indicates the element’s position within that parallel. Since each subarray $\mathbf{T}^{(\mathbf{e},i,j')}$ of $\mathbf{T}$ is also a subarray $\mathbf{B}^{(\mathbf{e})}$, $\mathbf{T}$ satisfies conditions C$3$-$4$. There are exactly $\binom{t_1+\tau_1-1}{\tau_1-1}(t_1+\tau_1)S_1/\tau_2=6\cdot 140=840$ different vectors in $\mathbf{T}$. Hence, $\mathbf{T}$ is a $g_2=2$-$(2,13,24, 420,210,840)$ MIMO-PDA with consistency number $\mu=1$.

Finally, we obtain our desired array $\mathbf{Y}$ by horizontally replicating $\mathbf{T}$ $m=3$ times and appending the replication index to its each vector. Consequently, all entries across different copies are mutually distinct. For instance, the vector $([6],1,1,1)$ in the first replication becomes $([6],1,1,1,1)$, in the second replication it becomes $([6],1,1,1,2)$, and in the third replication it becomes $([6],1,1,1,3)$. Since each replication is also a a $2$-$(2,13,8, 420,210,840)$ MIMO-PDA with consistency number $\mu=1$, $\mathbf{Y}$ is a  $2$-$(2,13,24,F_{\mathbf{Y}}=420,Z_{\mathbf{Y}}=210,S_{\mathbf{Y}}=2520)$ MIMO-PDA with the consistency number $\mu=1$.

$\bullet$ {\bf Step 4}  Construction of a  $(2,13,24,7980,3990,2520)$ MIMO-PDA $\mathbf{P}$: We will construct an array $\mathbf{P}$ by concatenating $\mathbf{X}$ and $\mathbf{Y}$ vertically. In $\mathbf{P}$, each column contains $Z =Z_{\mathbf{X}}+Z_{\mathbf{Y}}=3780+210=3990$ stars, and the total number of subpacketization is $F =F_{\mathbf{X}}+F_{\mathbf{Y}}=7560+420=7980$. In the following, we will apply the saturating matching argument from Lemma \ref{le-hall-theorem} to adjust the vectors in $\mathbf{Y}$.  

Let $\mathcal{X}$ and $\mathcal{Y}$ denote the sets of distinct vectors appearing in the arrays $\mathbf{X}$ and $\mathbf{Y}$ respectively. We can obtain a $\mathcal{G}_{\mathbf{X},\mathbf{Y}}=(\mathcal{X},\mathcal{Y},\mathcal{E})$ where a vertex $\mathbf{x}=(\mathcal{S},a,x)\in\mathcal{X}$ is adjacent with a vertex $\mathbf{y}=(\mathcal{S}',a',y_1,y_2,y_3)\in\mathcal{Y}$ if and only if  $|\mathcal{S}\cup\mathcal{S}'|=t_1+\tau_1+\tau_2$, $\mathcal{S}'\setminus\mathcal{S}=\mathcal{U}_{\mathcal{S}',y_1}(y_2)$, and $2=G\leq \langle(a-a')G\rangle_{\Lambda_{2}=5}\leq \Lambda_{2}-G=3$. Here, the notation $\mathcal{U}_{\mathcal{S}',y_1}(y_2)$ represents the $y_2$-th element of the $y_1$-th parallel of $\mathcal{S}'$. Let us consider the vertices
\begin{align*}
\mathbf{x}&=(\mathcal{T},a,i)=([6],1,1),\ ([5]\cup\{7\},1,1),\ ([5]\cup\{8\},1,1)\\
\mathbf{y}&=(\mathcal{S}',a',y_1,y_2,y_3)=([5]\cup\{7\},2,1,6,1),\ ([5]\cup\{8\},2,1,6,1),\ ([6],2,1,6,1)
\end{align*}respectively. It is clear that $|[6] \cup ([5] \cup \{7\})| = t_1 + \tau_1 + \tau_2 = 7$; the element $ ([5] \cup \{7\}) \setminus [6]= \{7\}$ is the $6$-th element in the parallel $\mathcal{U}_{[5]\cup\{7\},1}=\{\{1\},\{2\},\{3\},\{4\},\{5\},\{7\}\}$ of $[5]\cup\{7\}$, i.e., $\mathcal{U}_{[5]\cup{7},1}(6)$; $2\leq \langle(a-a')G\rangle_{\Lambda_{2}=5}=\langle(1-2)\cdot 2\rangle_{5}=3\leq 3$. By our edge difinition, the vertex $([6],1,1)$ is adjacent with the vertex $([5]\cup\{7\},2,1,6,1)$ in $\mathcal{G}_{\mathbf{X},\mathbf{Y}}$. Similarly, vertex $([6],1,1)$ is adjacent to $([5]\cup\{8\},2,1,6,1)$; vertex $([5]\cup\{7\},1,1)$ is adjacent to both $([5]\cup\{8\},2,1,6,1)$ and $([6],2,1,6,1)$; and vertex $([5]\cup\{8\},1,1)$ is adjacent to $([5]\cup\{7\},2,1,6,1)$ and $([6],2,1,6,1)$. Then, the subgraph of $\mathcal{G}_{\mathbf{X},\mathbf{Y}}$ in Fig. \ref{fig-ex5} can be obtained. 
\begin{figure}[http!]
	\centering
	\includegraphics[width=5in]{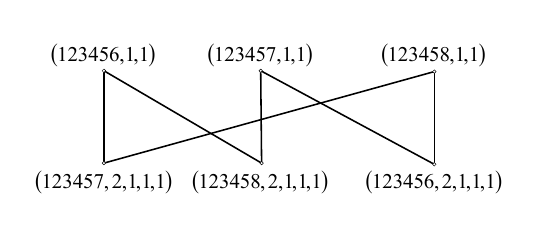}
	\caption{The induced subgraph of $\mathcal{G}_{\mathbf{X},\mathbf{Y}}$ generated by six vertex}.
	\label{fig-ex5}
\end{figure}

Let us consider the subarray $\mathbf{P}'$ which including the rows and columns containing the vertices at edge $(([6],1,1),([5]\cup\{7\},2,1,6,1))$, as listed in Fig. \ref{fig-ex4}.
\begin{figure}[http!]
	\centering
	\includegraphics[width=6in]{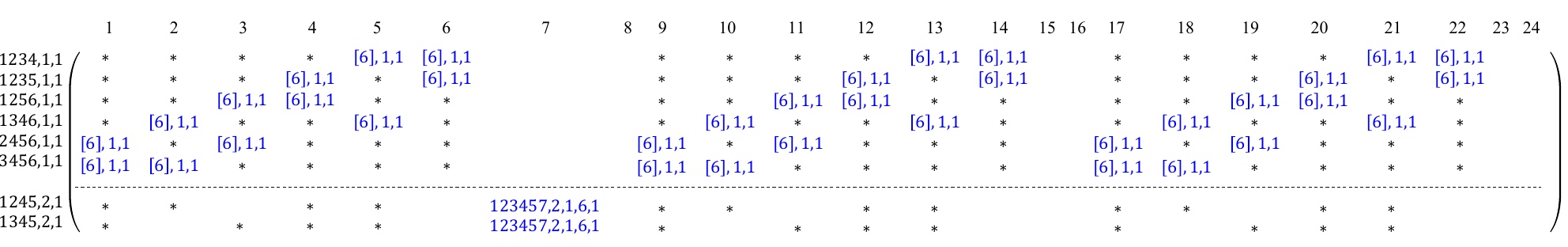}
	\caption{The subarray $\mathbf{P}'$ }.
	\label{fig-ex4}
\end{figure}

 Replacing the vertex $([5]\cup\{7\},2,1,1,1)$ with $([6],1,1)$, the modified $\mathbf{P}'$ contains exactly $7$ non-star entries per row and two occurrences of $([6],1,1)$ per column, while also satisfying Condition~C$4$. Therefore, given a perfect matching in $\mathcal{G}_{\mathbf{X},\mathbf{Y}}$, we may replace each vertex from $\mathcal{Y}$ with its matched vertex from $\mathcal{X}$, thereby completing the adjustment of the vectors in $\mathbf{Y}$.  

In the following, it is sufficient to show that there exists a perfect matching on $\mathcal{G}_{\mathbf{X},\mathbf{Y}}$. By Lemma \ref{le-hall-theorem}, we should consider the degree d$(\mathcal{X})$ and d$(\mathcal{Y})$ respectively.

Let us consider the degree of $([6],1,1)$. There are exactly $12={K_1-6\choose \tau_2}{6\choose t_2}={2\choose 1}{6\choose 1}$ distinct $t_1+\tau_1=6$-subset $\mathcal{S}'$ satisfying $|\mathcal{S}\cup\mathcal{S}'|=t_1+\tau_1+\tau_2$, i.e., 
\begin{align*}
& [5]\cup \{7\}, [5]\cup \{8\}, [4]\cup \{6,7\}, [4]\cup \{6,8\}, [3]\cup \{5,6,7\}, [4]\cup \{5,6,8\},\\
& [2]\cup \{4,5,6,7\}, [4]\cup \{4,5,6,8\}, \{1\}\cup \{3,4,5,6,7\},\{1\}\cup \{3,4,5,6,8\}, [2:7],[2:6]\cup \{8\}.
\end{align*}For each $\mathbf{S}'$, there is exactly one  parallel in which every element is a singleton, and the position of the $\tau_2=1$-subset $\mathcal{S}'\setminus\mathcal{S}$ is fixed in this parallel.

Since $\mathbf{T}$ is replicated exactly $m=3$ times, and given the integer $a=1$, there are exactly $2=\frac{\Lambda_{2}-2G}{\gcd(\Lambda_{2},G)}+1$ $a'$ satisfying $2\leq \langle(a-a')G\rangle_{5}\leq 3$. Consequently, the degree of $([6],1,1)$ equals $12\cdot 1\cdot 3\cdot 2=72$. By the same computing method, every other vertex in $\mathcal{X}$ also has degree $72$. 

Conversely, for any vertex $\mathbf{y}$, e.g., $([5]\cup{7}, a', y_1=1, y_2=6, c)$, we have $\mathcal{U}_{[5]\cup{7},1}(6)={7}$. There are exactly $2 = \binom{K_1-6}{\tau_2} = \binom{2}{1}$ subsets $\mathcal{S}$ satisfying $|\mathcal{S} \cup \mathcal{S}'| = t_1 + \tau_1 + \tau_2$, namely $[6]$ and $[5]\cup\{8\}$. Since $\mathbf{X}$ is obtained by replicating $\mathbf{Q}$ $18$ times, and for a given integer $a'=1$, there are exactly $2 = \frac{\Lambda_{2} - 2G}{\gcd(\Lambda_{2}, G)} + 1$ values of $a$ such that $2 \le \langle (a-a')G \rangle_5 \le 3$. It  follows that the degree of $\mathbf{y}$ equals $2 \times 18 \times 2 = 72$. Using the same computing method, every other vertex in $\mathcal{Y}$ also has degree $72$. We have d$(\mathcal{X})=$d$(\mathcal{Y})$. 

Finally, by using a perfect matching to adjust the vectors in $\mathbf{Y}$, we obtain the desired $38-(2,13,24,7980,3990,2520)$ MIMO-PDA $\mathbf{P}$ with the sum-DoF $g=g_1+g_2=36+2=38$ and the consistency number $\mu=1$.

Based on $\mathbf{P}$ and by Theorem \ref{th-Fundamental}, we can obtain a 
$(G,L,mK_1,M,N)=(2,13,24,M,N)$ MIMO coded caching scheme with the memory ratio $M/N=0.5$, the optimal sum-DoF $38$, and the subpacketization $F=7980$. In this example, the TST MIMO-PDA in \cite{MMA} also achieves the optimal sum-DoF $38$ while having the subpacketization $1784742960$. Clearly our subpacketization is much smaller than that of the TST MIMO-PDA.

\section{CONCLUSION}
In this paper, we first introduced a MIMO-PDA to characterize the placement and delivery strategies simultaneously of a $(G,L,K,M,N)$ MIMO coded caching scheme under the uncoded placement and one-shot ZF delivery. It is worth noting that all the existing schemes can be represented by MIMO-PDA. Then the upper bound on sum-DoF which is consistent with the optimal problem proposed by \cite{MMA} was derived. According to our upper bound, we obtained two classes of achievable MIMO-PDAs. Finally, by theoretical and numerical comparisons, we showed that our schemes significantly reduced the subpacketization compared to the existing schemes.

\begin{appendices}

\section{The proof of Rank$(\mathbf{V}_i)= g_i$}
\label{app-proof-rank}
We will use the following Schwartz-Zippel (SZ) Lemma in \cite{DL} to show Rank$(\mathbf{V}^{(s)}_i)= g_i$ where $i\in[r]$. We first consider the case that $\bar{\mathbf{H}}^{(s)}_{i,j}$ where $j\in[l_{i,j}-1]$ in \eqref{eq-unique-vector} has exactly $L-\rho$. 

In the following, we first show that given a full rank $\mathbf{V}^{(s)}_i$, we can find $r$ non-zero channel matrices $\mathbf{H}_{k_i}$, $i\in[r]$ such that \eqref{eq-unique-vector} is satisfied for each $j\in[g_i]$. We assume that $\mathbf{V}^{(s)}_i(\cdot,j)$ is the $j$-th unit vector, i.e., with $1$ at the $j$-th position and $0$ elsewhere. Then, from \eqref{eq-unique-vector} we have 
\begin{align*}
\bar{\mathbf{H}}^{(s)}_{i,1}(\cdot,1)=\mathbf{0},\ \bar{\mathbf{H}}^{(s)}_{i,2}(\cdot,2)=\mathbf{0},\ \ldots,\  \bar{\mathbf{H}}^{(s)}_{i,g_i}(\cdot,g_i)=\mathbf{0}.
\end{align*}This implies that 
\begin{align}
\label{eq-zero} 
\begin{cases}
\mathbf{H}^{(s)}_{i,1,1}(\cdot,1)=\mathbf{0}\\
\mathbf{H}^{(s)}_{i,1,2}(\cdot,1)=\mathbf{0}\\
\ \ \ \ \ \ \ \ \ \vdots\\
\mathbf{H}^{(s)}_{i,1,l_{i,1}-1}(\cdot,1)=\mathbf{0}
\end{cases}\ \ 
\begin{cases}
	\mathbf{H}^{(s)}_{i,2,1}(\cdot,2)=\mathbf{0}\\
	\mathbf{H}^{(s)}_{i,2,2}(\cdot,2)=\mathbf{0}\\
\ \ \ \ \ \ \ \ \ 	\vdots\\
	\mathbf{H}^{(s)}_{i,2,l_{i,2}-1}(\cdot,2)=\mathbf{0}
\end{cases}\ \ \cdots\ \ \ 
\begin{cases}
	\mathbf{H}^{(s)}_{i,g_i,1}(\cdot,g_i)=\mathbf{0}\\
	\mathbf{H}^{(s)}_{i,g_i,2}(\cdot,g_i)=\mathbf{0}\\
\ \ \ \ \ \ \ \ \ 	\vdots\\
	\mathbf{H}^{(s)}_{i,g_i,l_{i,g_i}-1}(\cdot,g_i)=\mathbf{0}
\end{cases}.
\end{align}By condition C$4$-b), i.e., at most $\rho$ rows contain the same non‑star position set, the channel matrix $\mathbf{H}^{(s)}_{i,j}$ in \eqref{eq-zero}, where $j\in[g_i]$ has at most $\rho$ zero columns; without loss of generality we assume $\mathbf{H}^{(s)}{i,j}(\cdot,[\rho])=\mathbf{0}$.
Consider now the submatrices $\mathbf{H}^{(s)}_{i,j}(\cdot,[\rho+1:g_i])$ and $\mathbf{H}^{(s)}_{i,j}(\cdot,[g_i+1:L])$, which together form the square $\mathbf{H}^{(s)}_{i,j}(\cdot,[\rho+1:L])$. In addition, by condition C$4$-b) each column of $\mathbf{H}^{(s)}_{i,j}(\cdot,[\rho+1:g_i])$ possesses at least $G$ consecutive entries that can be assigned nonzero i.i.d. values from $\mathbb{C}$, while all entries of $\mathbf{H}^{(s)}_{i,j}(\cdot,[g_i+1:L])$ are likewise chosen i.i.d. from $\mathbb{C}$. Consequently, the combined square $\mathbf{H}^{(s)}_{i,j}(\cdot,[\rho+1:L])$ is invertible. In other word, the $j$-th column of matrix $\bar{\mathbf{H}}^{(s)}_{i,j}$ is a zero column, and $\bar{\mathbf{H}}^{(s)}_{i,j}$ has a full-rank submatrix $\bar{\mathbf{H}}^{(s)}_{i,j}(\cdot,[\rho+1:L])$, satisfying $\bar{\mathbf{H}}^{(s)}_{i,j}\mathbf{V}^{(s)}_{i}(\cdot,j)=\mathbf{0}$. Then, we have  $\mathbf{V}^{(s)}([g_i],[g_i])$ is a unit matrix. So, $\mathbf{V}^{(s)}_{i}$ is not a  non-zero polynomial, by the following SZ Lemma \cite{DL}, the matrix $\mathbf{V}^{(i)}$ is full column rank, i.e., Rank$(\mathbf{V}_{i})= g_i$. Then our proof is complete. 
\begin{lemma}[Schwartz, Zippel \cite{DL}]
	\label{le-Sc-Zi}
	Let $f\in \mathcal{F}[x_1,x_2,\ldots,x_n]$ be a non-zero polynomial of total degree $d\geq0$ over a field $\mathcal{F}$. Let $\mathcal{S}$ be a finite subset of $\mathcal{F}$ and let $r_1$, $r_2$, $\ldots$, $r_n$ be selected at random independently and uniformly from $\mathcal{S}$. Then, Pr$(f(r_0,r_1,\ldots,r_{n-1})=0)\leq \frac{d}{|\mathcal{S}|}$.
\end{lemma}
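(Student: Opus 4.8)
The plan is to prove the bound by induction on the number of variables $n$, isolating a single variable and splitting the analysis according to whether an associated leading coefficient vanishes. The whole argument rests on one elementary fact over a field: a nonzero univariate polynomial of degree at most $d$ has at most $d$ roots. I would use this both to anchor the induction and to bound the residual univariate behavior at each step.

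For the base case $n=1$, since $r_1$ is drawn uniformly from $\mathcal{S}$ and at most $d$ of the $|\mathcal{S}|$ admissible values can be roots of the nonzero degree-$d$ polynomial $f$, one gets $\Pr(f(r_1)=0)\leq d/|\mathcal{S}|$ immediately. For the inductive step, I would regard $f$ as a polynomial in the last variable $x_n$ with coefficients in $\mathcal{F}[x_1,\ldots,x_{n-1}]$, writing $f=\sum_{i=0}^{k}x_n^{i}\,g_i(x_1,\ldots,x_{n-1})$, where $k$ is the highest power of $x_n$ that actually appears so that $g_k\not\equiv 0$. The crucial degree bookkeeping is $\deg g_k\leq d-k$. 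Applying the induction hypothesis to the $(n-1)$-variable polynomial $g_k$ gives $\Pr\!\big(g_k(r_1,\ldots,r_{n-1})=0\big)\leq (d-k)/|\mathcal{S}|$. I would then condition on this leading coefficient: on the complementary event $g_k(r_1,\ldots,r_{n-1})\neq 0$, the restriction $f(r_1,\ldots,r_{n-1},x_n)$ is a genuinely nonzero univariate polynomial in $x_n$ of degree exactly $k$, so by the base-case root count its probability of vanishing at the independently drawn $r_n$ is at most $k/|\mathcal{S}|$. Combining the two events through the law of total probability yields $\Pr(f=0)\leq (d-k)/|\mathcal{S}|+k/|\mathcal{S}|=d/|\mathcal{S}|$, closing the induction.

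The main point that demands care is the use of independence in the inductive step: because $r_n$ is drawn independently of $r_1,\ldots,r_{n-1}$, after conditioning on the (nonzero) value of $g_k$ the univariate root bound in $x_n$ still applies to a uniformly random $r_n$, and the two contributions may be added. The only other subtlety is the degree accounting $\deg g_k\leq d-k$, which is precisely what makes the two terms sum to exactly $d/|\mathcal{S}|$ rather than something larger; getting this exponent right is what keeps the bound tight across the recursion.
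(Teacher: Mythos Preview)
Your proof is correct and is the standard inductive argument for the Schwartz--Zippel lemma. Note, however, that the paper does not supply its own proof of this statement: Lemma~\ref{le-Sc-Zi} is simply quoted from the cited reference and used as a black box in Appendix~\ref{app-proof-rank}, so there is no ``paper's proof'' to compare against. Your argument is exactly the classical one and would serve perfectly well if a self-contained proof were desired.
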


\section{proof of Theorem \ref{th-DoF}}\label{proof-DoF}
Given a $(G,L,K,F,Z,S)$ MIMO-PDA $\mathbf{P}$, let's assume that each integer $s\in[S]$ occurs $g^{(s)}=g^{(s)}_1+g^{(s)}_2+\cdots+g^{(s)}_{r_s}$ times in $\mathbf{P}$, from Condition C$3$ of Definition \ref{def-MIMO-PDA} we know that where $g_i\leq G, i\in[r_s]$, denoted by 
\begin{align*} 
\mathbf{P}(f_{1,1},k_{1}), \ldots,\mathbf{P}(f_{1,g_1},k_{1}), ,\mathbf{P}(f_{2,1},k_{2}), \ldots,,\mathbf{P}(f_{r,g_{r_s}},k_{r}). 
\end{align*} We can obtain the subarray $\mathbf{P}^{(s)}$ with $r_s$ columns, and  let $f_{i,j}$ and $k_i$, $j\in[g_i]$, $i\in [r]$ represent the row indices and column indices of $\mathbf{P}^{(s)}$, respectively. We say that a star is used by an integer $s\in[S]$ if it appears in a subarray $\mathbf{P}^{(s)}$ of $\mathbf{P}$. For each subarray $\mathbf{P}^{(s)}$, $s\in[S]$, we assume that there are $e_{i,j}$ integer entries in the row $f_{i,j}\in[F]$, and there are $r_s-e_{i,j}$ stars used by the integer $s$ of $\mathbf{P}^{(s)}(f_{i,j},k_i)$. Then the number of stars used by all the integer $s$'s in $\mathbf{P}^{(s)}$ is exactly $\sum_{i=1}^{r_s}\sum_{j=1}^{g_i^{(s)}}(r_s-e_{i,j})$, and the total number of stars used in all $\mathbf{P}^{(s)}$, $s\in [S]$, is
\begin{eqnarray*} M&=&\sum_{s=1}^S\sum_{i=1}^{r_s}\sum_{j=1}^{g_i^{(s)}}(r_s-e_{i,j})=\sum_{s=1}^Sr_s(g_1^{(s)}+g_2^{(s)}+\cdots +g_{r_s}^{(s)} )-\sum_{s=1}^S\sum_{i=1}^{r_s}\sum_{j=1}^{g_i^{(s)}}e_{i,j}\\
&=&\sum_{s=1}^Sr_sg^{(s)} -\sum_{s=1}^S\sum_{i=1}^{r_s}\sum_{j=1}^{g_i^{(s)}}e_{i,j}.
\end{eqnarray*}
Next, we consider the array $\mathbf{P}$ and assume that each row $j\in [F]$ has $e'_j$ integer entries, then the times of all stars used by the integer entries in $j$-th row is at most $e'_j(K-e'_j)$. So the total times of all stars used in $\mathbf{P}$ is at most $M'=\sum_{j=1}^Fe'_j(K-e'_j)$. Clearly, $M \leq M'$, i.e.,
\begin{eqnarray}\label{num-star}
\sum_{s=1}^Sr_sg^{(s)} -\sum_{s=1}^S\sum_{i=1}^{r_s}\sum_{j=1}^{g_i^{(s)}}e_{i,j} \leq \sum_{j=1}^Fe'_j(K-e'_j)\label{1}.
\end{eqnarray}
Since $n=(F-Z)K$ is the total number of integers in $\mathbf{P}$, we have $n=\sum_{s=1}^Sg^{(s)} =\sum_{j=1}^F e'_j$. Based on Condition C$3$ of Definition \ref{def-MIMO-PDA} we have
$$\sum_{s=1}^Sr_sg^{(s)} \geq \sum_{s=1}^S\frac{g^{(s)}}{G}g^{(s)}=\frac{1}{G}\sum_{s=1}^S(g^{(s)})^2.$$
Thus from \eqref{num-star} we get
\begin{eqnarray*}
\frac{1}{G}\sum_{s=1}^S(g^{(s)})^2 +\sum_{j=1}^{F}(e'_j)^2\leq \sum_{s=1}^S\sum_{i=1}^{r_s}\sum_{j=1}^{g_i^{(s)}}e_{i,j} +\sum_{j=1}^{F}Ke'_j.
\end{eqnarray*}
In addition, by the convexity and
$\sum_{i=1}^{r_s}\sum_{j=1}^{g_i^{(s)}}e_{i,j}\leq g^{(s)}\lceil\frac{L}{G}\rceil$ from C$4$ of Definition \ref{def-MIMO-PDA}, we can obtain
$$\sum_{s=1}^{S}(g^{(s)})^2\geq \frac{1}{S}(\sum_{s=1}^{S}g^{(s)})^2=\frac{n^2}{S},\ \ \ \ \sum_{j=1}^{F}(e'_j)^2\geq \frac{1}{F}(\sum_{s=1}^{S}e'_j)^2=\frac{n^2}{F}.$$
Then
\begin{eqnarray*}
\frac{n^2}{GS}+\frac{n^2}{F}\leq \sum_{s=1}^{S}g^{(s)}\left\lceil\frac{L}{G}\right\rceil+nK = n\left\lceil\frac{L}{G}\right\rceil+nK,
\end{eqnarray*} i.e., $S\geq \frac{nF}{FG\lceil\frac{L}{G}\rceil+GKZ}$.
Thus we get the sum-DoF of $\frac{K(F-Z)}{S}\leq \frac{FG\lceil\frac{L}{G}\rceil+GKZ}{F}=\frac{GKZ}{F}+G\lceil\frac{L}{G}\rceil$, where the equation holds if and only if $e_{i,j}=\lceil\frac{L}{G}\rceil$, $g^{(1)}=g^{(2)}=\cdots=g^{(S)}=\frac{n}{S}$ and $e'_1=e'_2=\cdots=e'_F$. Then the proof is completed.

\section{The proof of Theorem \ref{th-latin}}	
\label{app-proof-latin}
According to the aforementioned constructing idea, which includes placing the stars under cyclic wrap-around topology in each row of a square with size $K$,  replicating this square vertically $G$ times and adjusting the integers in each column, the obtained $KG \times K$ array $\mathbf{A}=(\mathbf{A}((j,i),k))_{j,k\in [K], i\in[G]}$ can be represented as follows. For any integers $j,k\in[K]$ and $g\in [G]$, the entry $a_{(j,i),k}$ is defined according to the following rule: 
\begin{align}
	\label{eq-Square-array}
	\mathbf{A}((j,i),k)=
	\begin{cases}
		* & \text{if } k \in \mathcal{I}_{j},\\[2pt]
		\Big\lceil \frac{\lambda_{(j,i),k}}{G}\Big\rceil & \text{otherwise},
	\end{cases}
\end{align}
where $\mathcal{I}_{j}=\{ j,\langle j+1\rangle_{K},\ldots,\langle j+t-1\rangle_{K} \},
$ and $\lambda_{(j,i),k}$ denotes the order (from top to down) of the non-star entry in the $k$-th column.

Now, let us check the conditions of Definition \ref{def-MIMO-PDA}.  From \eqref{eq-Square-array}, each column has exactly $Z=Gt$ stars; there are exactly $S=K-t$ different integers in A since each column has exactly $G(K-t)$ non-star entries;  Condition C$3$ and C$4$-a) always hold since each integer occurs exactly $G$ times in each column. In addition, each integer occurs at most $K$ consecutive rows of $\mathbf{A}$ and occurs in all the columns of  $\mathbf{A}$. For any two different integers, $j_1,j_2\in[K]$, we have $\mathcal{I}_{j_1}\neq\mathcal{I}_{j_2}$ which implies that any $K$ consecutive rows of $\mathbf{A}$ have different integer entry position sets. Then, the Condition C$4$-b) holds. So, $\mathbf{A}$ is a $(G,L,K,GK,Gt,K-t)$ MIMO-PDA. The proof is completed.

\section{Proof of Theorem \ref{th-grouping}}
\label{proof-grouping}
Assume that $\mathbf{Q}$ is an optimal $(G,L_1,K_1,F,Z,S)$ MIMO-PDA with consistency number $\mu$ and satisfying that each row has exactly $K_1Z_1/F_1$ stars. For any positive integers $m$ and $L$ satisfying $m\lceil L_1/G\rceil=\lceil L/G\rceil$ and $\mu\leq\langle L\rangle_G$, we will construct an optimal $(G,L,K=mK_1,F,Z,S)$ MIMO-PDA $\mathbf{P}$ for any positive integers $m$ and $L$  by replicating $\mathbf{Q}$ $m$ times horizontally. It is straightforward to see that $\mathbf{P}$ satisfies Conditions C$1$, C$2$, C$3$ of Definition \ref{def-MIMO-PDA}. We then focus on Condition C$4$ of Definition \ref{def-MIMO-PDA}.

Finally, we verify the condition C$4$ of Definition \ref{def-MIMO-PDA}. First, we claim that each row of $\mathbf{Q}$ contains exactly $t_1=K_1Z_1/F_1$ stars. Suppose, to the contrary, that some row has fewer than $t$ stars. Consider an integer $s$ appearing in that row. By the property of MIMO-PDA, $s$ occurs exactly $G(t_1+\lceil L_1/G\rceil)$ times in $\mathbf{Q}$, and each column containing $s$ exactly $G$ times by condition C$3$. Hence $s$ must appers in at least $t_1+\lceil L_1/G\rceil$ columns. Moreover, every row of $\mathbf{P}^{(s)}$ has at most $\lceil L_1/G\rceil$ integer entries, which implies that each row of $\mathbf{P}^{(s)}$ has at least $t_1$ star entries. This contradicts the assumption that the original row had fewer than $t_1$ stars. Then, $\mathbf{Q}^{(s)}$ has exactly $t_1+\lceil L_1/G\rceil$ columns where each row of $\mathbf{Q}^{(s)}$ has $\lceil L_1/G\rceil$ integer entries and $t_1$ stars. So in the subarray of $\mathbf{P}$, which includes the rows and columns containing $s$, has exactly $m(t_1+\lceil L_1/G\rceil)$ columns where each row has $m\lceil L_1/G\rceil$ integer entries and $mt_1$ stars. Hence, $\mathbf{P}$ satisfies Condition C$4$ of Definition \ref{def-MIMO-PDA}. This implies that $\mathbf{P}$ is an $(G,L,mK_1,F_1,Z_1,S_1)$ MIMO-PDA. 

Recall that $\lceil L/G\rceil=m\lceil L_1/G\rceil$, the sum-DoF of $\mathbf{P}$ is $m(Gt+G\lceil L_1/G\rceil)=G(mt+\lceil L/G\rceil)$, which is the objective sum-DoF in Theorem~\ref{th-DoF}.

\section{Proof of Theorem~\ref{th-MAPDA-hypergraph}	}
\label{proof-regular}
For any positive integers $K_1$, $t_1$, $G$, $L_1$, and $L$, let $\tau_1=\lceil L_1/G\rceil$, $\tau=\lceil L/G\rceil$, $m=\lfloor \frac{\tau}{\tau_1}\rfloor\geq 1$, and $\tau_2=\tau-m\tau_1$. Assume that $\tau_1|\{t_1$, $K_1\}$, $t_1+\tau_1<K_1$, $\tau_2|(t_1+\tau_1)$, and $2G\leq {t_1+\tau_1-1\choose t_1}$. We will construct an optimal $(G$, $L$, $K=mK_1$, $F=\frac{G\Lambda_{1}\Lambda_{3}}{\gcd(G,\Lambda_{2})}\left(\frac{m(t_1+\tau_1)}{\tau_2}+1 \right){K_1\choose t_1}$,
$Z=\frac{G\Lambda_{1}\Lambda_{3}}{\gcd(G,\Lambda_{2})}\left(\frac{m(t_1+\tau_1)}{\tau_2}+1 \right){K_1-1\choose t_1-1}$,
$S=\frac{m\Lambda_{2}\Lambda_{3}}{\gcd(G,\Lambda_{2})}\frac{t_1+\tau_1}{\tau_2}{K_1\choose t_1+ \tau_1})$ MIMO-PDA where $\Lambda_{1}={K_1-t_1-1\choose  \tau_1-1}$,  $\Lambda_{2}={t_1+\tau_1-1\choose  \tau_1-1}$, and $\Lambda_{3}={t_1+\tau_1-1\choose  \tau_2-1}$.

The main idea is as follows. Using the Baranyai's Theorem \cite{Baranyai1975}, we first construct a new TST MIMO-PDA $\mathbf{B}$ by reputing a new integer in the second coordinate of each non-star entry of TSTS MIMO-PDA defined in \eqref{eq-MN-array}. By replicating $\mathbf{B}$ horizontally and vertically with appropriate integer adjustments, we then construct two MIMO-PDAs, $\mathbf{X}$ and $\mathbf{Y}$, characterized by the same number of users $K$ and the same integer set $[S]$, but with distinct sum-DoF values: $g_{\mathbf{X}} = mG\left(t_1 + \tau_1 \right)$ and $g_{\mathbf{Y}} = G\tau_2$, respectively. Finally, using the perfect matching Lemma in \cite{BJM} we obtain the desired $g=g_{\mathbf{X}}+g_{\mathbf{Y}}=G(mt_1+\tau)$-MIMO-PDA $\mathbf{P}$ by vertically concatenating $\mathbf{X}$ and $\mathbf{Y}$ and adjusting the integers in $\mathbf{Y}$. Hence, our construction consists of the following four subsections.   
\subsection{Construction of a new TST MMIMO-PDA $\mathbf{B}$}
\label{sub-new-TSI} 

For any $(K_1 - t_1)$-subset $\mathcal{R}$, by setting $\alpha = \tau_1$ and $v=|\mathcal{R}|$ in Lemma~\ref{lem-factor}, we have $\Lambda_{1} = \binom{K_1 - t_1 - 1}{\tau_1 - 1}$ 
parallel classes on $\mathcal{R}$, denoted $\mathcal{F}_{\mathcal{R},1}, \mathcal{F}_{\mathcal{R},2}, \dots, \mathcal{F}_{\mathcal{R},\Lambda_{1}}$. Similarly, for any $(\tau_1 + t_1)$-subset $\mathcal{S}$, by Lemma~\ref{lem-factor} with $\alpha = \tau_1$ and $v=|\mathcal{S}|$, we also have  $\Lambda_{2} = \binom{\tau_1 + t_1 - 1}{\tau_1 - 1}$ parallel classes on $\mathcal{S}$, denoted $\mathcal{F}_{\mathcal{S},1}, \mathcal{F}_{\mathcal{S},2}, \dots, \mathcal{F}_{\mathcal{S},\Lambda_{2}}$. Using these two families,  we construct a new $g_{1}=G(t_{1}+\tau_{1})$--$(G,L_{1},K_{1},F_{1},Z_{1},S_{1})$ TST MIMO-PDA  
$\mathbf{B}=(\mathbf{B}((\mathcal{T},r,l),k))$, where the parameters are  
\begin{align*}
F_{1}= \binom{K_{1}}{t_{1}}\frac{G\Lambda_{1}}{\gcd(G,\Lambda_{2})},\ Z_{1}= \binom{K_{1}-1}{t_{1}-1}\frac{G\Lambda_{1}}{\gcd(G,\Lambda_{2})},\ 
S_{1}=\frac{\Lambda_{2}}{\gcd(G,\Lambda_{2})}\binom{K_{1}}{t_{1}+\tau_{1}},	
\end{align*}the indexing sets are $\mathcal{T}\in\binom{[K_{1}]}{t_{1}}$, $r\in[\Lambda_{1}]$, $l\in[\frac{G}{\gcd(G,\Lambda_{2})}]$, and $k\in[K_{1}]$, and the entries of $\mathbf{B}$ are defined as follows.  
\begin{align}
	\label{eq-MN-array-hyger}
	\mathbf{B}((\mathcal{T},r,l),k)=
	\begin{cases}
		*, & \text{if } k\in\mathcal{T},\\[4pt]
\bigl(\mathcal{S}, \lceil\frac{d+(l-1)\Lambda_{2}}{G}\rceil\bigr), & \text{if } k\in\mathcal{F}_{\mathcal{R},r}(j) \text{ for some } j,
\end{cases}
\end{align}
where $\mathcal{F}_{\mathcal{R},r}(j)$ denotes the $j$-th element of the $r$-th parallel class of the set $\mathcal{R}=[K_{1}]\setminus\mathcal{T}$, each being a $\tau_{1}$-subset, $\mathcal{S}:=\mathcal{T}\cup\mathcal{F}_{\mathcal{R},r}(j)$, and the integer $d\in[\Lambda_{2}]$ is uniquely determined by the condition $\mathcal{F}_{\mathcal{R},r}(j)\in\mathcal{F}_{\mathcal{S},d}$.  

For instance, when $K_1=8$, $t_1 =4$, $L_1 = 3$, and $G =2$, we have $\tau_1 =\lceil L_1/G\rceil=2$, $t_1 + \tau_1 = 6$ and $K_1-\tau_1=4$ which satisfy the conditions in Lemma \ref{lem-factor}. Let us consider the entries at the row $(\mathcal{T}=[4],r=1,l=1)$ and $k=5$. Clearly, the entry $\mathbf{B}(([4],1,1),5)$ is a non-star entry since $5\not\in[4]$ holds. We have  $\mathcal{R}=[K_1]\setminus\mathcal{T}=[8]\setminus[4]=[5:8]$. By Lemma \ref{lem-factor}, we can divide all the $\tau_1=2$-subset of $\mathcal{R}$ into $\Lambda_{1}={8-4-1\choose 2-1}=3$ parallels of $\mathcal{R}$ in \eqref{eq-paralles-example-R}, where $\mathcal{F}_{\mathcal{R},r=1} = \{\{5, 6\}, \{7, 8\}\}$. Clearly, $k=5$ is the first element of $\mathcal{F}_{\mathcal{R},r=1}$, i.e., $j=1$. Then $\mathcal{S}=\mathcal{T}\cup \mathcal{F}_{\mathcal{R},r=1}(1)=[4]\cup[5:6]=[6]$. By Lemma \ref{lem-factor},  we can divide all the $\tau_1=2$-subset of $\mathcal{S}$ into $\Lambda_{2}={6-1\choose 2-1}=5$ parallels of $\mathcal{S}$ in \eqref{eq-paralles-example-S}, where $\mathcal{F}_{\mathcal{S},1}=\{\{1, 2\}, \{3, 4\}, \{5, 6\}\}$. Clearly, $\mathcal{F}_{\mathcal{R},r=1}(1)=\{5,6\}\in\mathcal{F}_{\mathcal{S},1}$. This implies $d=1$. Hence, the entry $\mathbf{B}(([4],1,1),5)=([6],\lceil\frac{1+(1-1)\cdot 5}{2}\rceil)=([6],1)$.

\subsection{Construction of a MIMO-PDA $\mathbf{X}$}
\label{sub-first-array}
By the proof of Theorem \ref{th-grouping}, for any positive integers $m$, we can obtain an optimal $mg_1$-$(G$, $L'$, $K=mK_1$, $F_1$, $Z_1$, $S_1)$ TST MIMO-PDA $\mathbf{Q}$ where $m\tau_1=\lceil\frac{L'}{G}\rceil$ and where the consistency number $\mu=\lceil G/{t_1+\tau_1-1\choose t_1}\rceil\leq \langle L'\rangle_G$. 

In order to make the set of vectors in $\mathbf{X}$ the same as that in $\mathbf{Y}$ constructed in the following subsection,
we construct $\mathbf{X}$ by replicating $\mathbf{Q}$ $m{t_1+\tau_1\choose  \tau_2}$ times vertically and then appending the replication index to its each vector. In other words, our desired $m\binom{t_1+\tau_1}{\tau_2}F_1\times mK_1$ array $\mathbf{X}$ can be defined as
\begin{equation}
	\label{eq-P-Q}
	\mathbf{X}=
	\begin{pmatrix}
		\mathbf{Q}_1 \\
		\mathbf{Q}_2 \\
		\vdots \\
		\mathbf{Q}_{m\binom{t_1+\tau_1}{\tau_2}}
	\end{pmatrix},
\end{equation}
where for each $x\in\bigl[ m\binom{t_1+\tau_1}{\tau_2} \bigr]$, the sub‑array $\mathbf{Q}_{x}=\mathbf{Q}_{x}((\mathcal{T},r,l),k)$ has the indexing sets  $\mathcal{T}\in\binom{[K_1]}{t_1}$, $r\in[\Lambda_{1}]$, $l\in\bigl[\frac{G}{\gcd(G,\Lambda_{2})}\bigr]$, $k\in[mK_1]$, and its entries are given by
\begin{equation}
	\label{eq-array-i}
	\mathbf{Q}_{x}((\mathcal{T},r,l),k)=
	\begin{cases}
		*, & \text{if } 	\mathbf{B}((\mathcal{T},r,l),\langle k\rangle_{K_1})=*,\\[4pt]
		(\mathbf{B}((\mathcal{T},r,l),\langle k\rangle_{K_1}),x), & \text{otherwise}.
	\end{cases}
\end{equation}
Here $\mathbf{B}(\cdot)$ denotes the corresponding entry of the base TST MIMO-PDA $\mathbf{B}$ defined in \eqref{eq-MN-array-hyger}. For instance, the vector $([6],1)$ of $\mathbf{B}$ in $\mathbf{Q}_1$ becomes $([6],1,1)$, in $\mathbf{Q}_2$ it becomes $([6],1,2)$, and so forth. We can check that $\mathbf{X}$ contains totally $m{t_1+\tau_1\choose  \tau_2}S_1$ different vectors, and each column of  $\mathbf{X}$  has $m{t_1+\tau_1\choose  \tau_2}Z_1$ stars. This is because, each array $\mathbf{Q}_i$ where $i\in [m{t_1+\tau_1\choose  \tau_2}]$ contains $S_1$ different vectors and each of its columns has $Z_1$ stars. Thus Conditions C$1$ and C$2$ of Definition \ref{def-MIMO-PDA} hold. Furthermore, Conditions C$3$ and C$4$ of Definition \ref{def-MIMO-PDA} hold since each array $\mathbf{Q}_i$ satisfies C$3$ and C$4$. So $\mathbf{X}$ is an $g_{\mathbf{X}}=mg_1$-$(L,mK_1$, $F_{\mathbf{X}}=m{t_1+\tau_1\choose  \tau_2}F_1$, $Z_{\mathbf{X}}=m{t_1+\tau_1\choose  \tau_2}Z_1$, $S_{\mathbf{X}}=m{t_1+\tau_1\choose  \tau_2}S_1)$ MIMO-PDAs. 

\subsection{Construction of a MIMO-PDA $\mathbf{Y}$}
\label{sub-sec-array}

To obtain the MIMO-PDA $\mathbf{Y}$ with sum-DoF $g_{\mathbf{Y}}=G(mt_1+\tau)-g_{\mathbf{X}}$, we can reorganize these columns into $(t_1+\tau_1)/\tau_2$ groups each of which has size of $\tau_2$ and generates sum-DoF $g_{\mathbf{Y}}$. 

For convenience in computing the degree of the bipartite graph introduced later, we first replicate $\mathbf{B}$ 
$\Lambda_3 = \binom{t_1+\tau_1-1}{\tau_1} $ times. Then, the columns of each replica are partitioned into $\Lambda_3$ parallel classes of this $t_1+\tau_1$ column set, each consisting of $\tau_2$-subsets that collectively yield the sum-DoF $g_{\mathbf{Y}}$. Specifically, setting $\alpha=\tau_2$ and $v=t_1+\tau_1$ in Lemma~\ref{lem-factor}, every $(t_1+\tau_1)$-subset $\mathcal{S}$ admits $\Lambda_3=\binom{t_1+\tau_1-1}{\tau_2-1}$ parallel classes, which we denote by $\mathcal{U}_{\mathcal{S},1},\dots,\mathcal{U}_{\mathcal{S},\Lambda_3}$ to avoid confusion with the earlier notation $\mathcal{F}_{\mathcal{S},d}$. Define the $\Lambda_3F_1\times K_1$ array
\begin{equation}
	\label{eq-P-T}
	\mathbf{T}=
	\begin{pmatrix}
		\mathbf{B}_1 \\ 
		\mathbf{B}_2 \\ 
		\vdots \\
		\mathbf{B}_{\Lambda_3}
	\end{pmatrix}
\end{equation}
where, for each $y_1\in[\Lambda_3]$, the sub‑array $\mathbf{B}_{y_1}=\bigl(\mathbf{B}_{y_1}((\mathcal{T},r,l),k)\bigr)$ uses the same row index $(\mathcal{T},r,l)$ and column index $k\in[K_1]$ as $\mathbf{B}$. Its entries are
\begin{equation}
	\label{eq-array-B-i}
\mathbf{B}_{y_1}((\mathcal{T},r,l),k)=
	\begin{cases}
		*, & \text{if }\mathbf{B}((\mathcal{T},r,l),k)=*,\\[6pt]
		\bigl(\mathbf{B}((\mathcal{T},r,l),k), y_1, y_2\bigr), & \text{if } k\in\mathcal{U}_{\mathcal{S},y_1}(y_2) \text{ for some } y_2.
	\end{cases}
\end{equation}

Recall that the vector $(\mathcal{S},\lceil\tfrac{d+(l-1)\Lambda_{2}}{G}\rceil)$ in the base array $\mathbf{B}$ occurs in exactly $t_1+\tau_1$ columns, i.e., the column set $\mathcal{S}$, with $G$ identical non-star entries per column. In addition,  $\mathcal{U}_{\mathcal{S},y_1}(y_2)$ denotes the $y_2$-th element of the $y_1$-th parallel of $\mathcal{S}$. This implies that the columns of $\mathcal{S}$ are partitioned into $(t_1+\tau_1)/\tau_2$ disjoint groups each of which has $\tau_2$ columns, and the columns containing the entry $(\mathbf{B}((\mathcal{T},r,l),k), y_1, y_2)$ are exactly those in $\mathcal{U}_{\mathcal{S},y_1}(y_2)$. Therefore, every vector in $\mathbf{B}_{y_1}$ where $y_1\in[\Lambda_{3}]$ satisfies Conditions~C3 and C4 of Definition~\ref{def-MIMO-PDA}, and consequently so does $\mathbf{T}$. Since $\mathbf{B}$ contains exactly $S_1$ distinct vectors, each $\mathbf{B}_{y_1}$ contains $\frac{(t_1+\tau_1)}{\tau_2}S_1$ distinct appended vectors. Hence, $\mathbf{T}$ contains exactly $\frac{(t_1+\tau_1)}{\tau_2}\Lambda_3 S_1$ distinct appended vectors. This implies that $\mathbf{T}$ is a $G\tau_2$-$(G,L'$, $K_1$, $\Lambda_3 F_1$, $\Lambda_3 Z_1$, $\frac{(t_1+\tau_1)}{\tau_2}\Lambda_3 S_1)$ MIMO-PDA.

From \eqref{eq-P-T} and \eqref{eq-array-B-i}, we label each row of $\mathbf{T}$ by the vector $(\mathcal{T},r,l,y_1)$, where $\mathcal{T}\in\binom{[K_1]}{t_1}$, $r\in[\Lambda_{1}]$, $l\in[G/\gcd(G,\Lambda_{2})]$, and $y_1\in[\Lambda_3]$. We can obtain the desired MIMO-PDA $$\mathbf{Y}= (\mathbf{T}_1,\mathbf{T}_2,\ldots,\mathbf{T}_m).$$ For each $y_3\in[m]$, the $\Lambda_3 F_1\times K_1$ array $\mathbf{T}_{y_3}=(\mathbf{T}_{y_3}((\mathcal{T},r,l,y_1),k))$ where $\mathcal{T}\in\binom{[K_1]}{t_1}$, $r\in[\Lambda_{1}]$, $l\in[G/\gcd(G,\Lambda_{2})]$, $y_1\in\bigl[\Lambda_3 \bigr]$, $k\in[K_1]$, and its entries are defined as
\begin{equation}
	\label{eq-array-T-c}
\mathbf{T}_{y_3}((\mathcal{T},r,l,y_1),k)=
	\begin{cases}		
		*& \mathbf{T}((\mathcal{T},r,l,y_1),k)=*,\\
		(\mathbf{T}((\mathcal{T},r,l,y_1),k),y_3),& \text{otherwise},
	\end{cases}
\end{equation}Similarly, we can check that $\mathbf{Y}$ is a $g_{\mathbf{Y}}=G\tau_2$-$(G,L,K=mK_1,\Lambda_3 F_1,\Lambda_3 Z_1,\frac{(t_1+\tau_1)}{\tau_2}m\Lambda_3 S_1)$ MIMO-PDA.

\subsection{Construction of an MAPDA $\mathbf{P}$}
\label{sub-matching}
In this subsection, we will obtain an array $\mathbf{P}$ by concatenating $\mathbf{X}$ and $\mathbf{Y}$ vertically and adjusting the vectors in $\mathbf{Y}$ using the perfect matching Lemma in \cite{BJM}.  

Let $\mathcal{X}$ and $\mathcal{Y}$ denote the sets of distinct vectors appearing in the arrays $\mathbf{X}$ and $\mathbf{Y}$ respectively. From \eqref{eq-array-i}, the vertex in $\mathcal{X}$ can be represented as 
\begin{align}
	\label{eq-x}
	\mathbf{x}=\left(\mathcal{S}, a=\left\lceil\frac{d+(l-1)\Lambda_{2}}{G}\right\rceil,x\right)
\end{align} where $x\in\bigl[m\binom{t_1+\tau_1}{\tau_2}\bigr]$, $d\in[\Lambda_{2}]$, $l\in\bigl[\frac{G}{\gcd(G,\Lambda_{2})}\bigr]$, and $\mathcal{S}\in{[K_1]\choose t_1+\tau_{1}}$. From \eqref{eq-array-B-i}, the vertex in $\mathcal{Y}$ can be represented as 
\begin{align}
	\label{eq-y}
	\mathbf{y}=\left(\mathcal{S}', a'=\left\lceil\frac{d'+(l'-1)\Lambda_{2}}{G}\right\rceil,y_1,y_2,y_3\right)
\end{align} where $y_1\in[\Lambda_3]$, $y_2\in[\frac{t_1+\tau_1}{\tau_1}]$, $d'\in[\Lambda_{2}]$, $l'\in\bigl[\frac{G}{\gcd(G,\Lambda_{2})}\bigr]$, and $\mathcal{S}'\in{[K_1]\choose t_1+\tau_{1}}$.

Recall that the subarray $\mathbf{X}^{(\mathbf{x})}$ of $\mathbf{X}$ has exactly $m(t_1+\tau_1)$ columns, i.e., the column set $\cup_{i=0}^{m-1}(\mathcal{S}+iK_1)$, and each columns has exactly $G$ entries containing $\mathbf{x}$, and the subarray $\mathbf{Y}^{(\mathbf{y})}$ of $\mathbf{Y}$ has exactly $\tau_2$ columns, i.e., the column set $\mathcal{U}_{\mathcal{S}',y_1}(y_2)+(y_3-1)K_1$, and each columns has exactly $G$ entries containing $\mathbf{y}$.  

According to the aforementioned introduction, we can construct a bipartite graph $\mathcal{G}_{\mathbf{X},\mathbf{Y}}=(\mathcal{X},\mathcal{Y};\mathcal{E})$ where a vertex $\mathbf{x}\in\mathcal{X}$ is adjacent with a vertex $\mathbf{y}\in\mathcal{Y}$ if and only if all of the following conditions hold:
\begin{align}
\label{eq-conditions}	
|\mathcal{S}\cup\mathcal{S}'|=t_1+\tau_1+\tau_2, \ \mathcal{S}'\setminus\mathcal{S}=\mathcal{U}_{\mathcal{S}',y_1}(y_2),\ \text{and}\ 
G\leq \langle(a-a')G\rangle_{\Lambda_{2}}\leq \Lambda_{2}-G.
\end{align} For any vertices $\mathbf{x}\in\mathcal{X}$ and $\mathbf{y}\in\mathcal{Y}$, assume that $\mathbf{x}$ is adjacent with $\mathbf{y}$, i.e., they satisfy the conditions in \eqref{eq-conditions}.

Let $\mathbf{P}'$ be the subarray of $\mathbf{P}$ including the row and columns containing the vectors $\mathbf{x}$ and $\mathbf{y}$. Replacing the vertex $\mathbf{y}$ with $\mathbf{x}$, let us consider the modified $\mathbf{P}'$, denoted by $\mathbf{P}^{(\mathbf{x})}$. By the hypothesis $\mathcal{S}\setminus\mathcal{S}=\mathcal{U}_{\mathcal{S}',y_1}(y_2)$, $\mathbf{P}'$ has exactly $m(t_1+\tau_1)+\tau_2$ columns. Recall that each row of 
 $\mathbf{X}^{(\mathbf{x})}$ has exactly $mt_1$ stars entries, i.e., each row has exactly $m\tau_1$ non-star entries. Let $\mathbf{y}=(\mathcal{S}',a',y_1,y_2,y_3)$ and let $\mathbf{Y}^{(\mathcal{S}',y_3)}$ denote the subarray consisting of all rows and columns in which the vector has first coordinate $\mathcal{S}'$ and fourth coordinate $y_3$. Since $\mathbf{Y}$ is generated by the base array $\mathbf{B}$, each row of $\mathbf{Y}^{(\mathcal{S}',y_3)}$ has exactly $t_1$ stars, i.e., each row has exactly $\tau_1$ non-star entries. Clearly $\mathbf{Y}^{(\mathbf{y})}$ is the subarray of $\mathbf{Y}^{(\mathcal{S}',y_3)}$. By the hypothesis $|\mathcal{S}\cup\mathcal{S}'|=t_1+\tau_1+\tau_2$, each row has exactly of $\mathbf{P}'$ has exactly $mt_1$ star entries, i.e., each row has exactly $m\tau_1+\tau_2=\lceil L/G\rceil$ non-star entries. Then the condition C$4$-a) holds.

 Let us consider the condition C$4$ of Definition \ref{def-MIMO-PDA}.  By the definition of $a$ and $a'$, there exit $G$ integers $\mathcal{D}=\{d_1$, $d_2$, $\ldots$, $d_G\}$, and $G$ integers $\mathcal{D}'=\{d_1'$, $d_2'$, $\ldots$, $d_G'\}$ such that for each integer $i\in[G]$ the following equations hold.  
 \begin{align*}
 	a=\left\lceil\frac{d_i+(l-1)\Lambda_2}{G}\right\rceil\ \ \ \text{and}\ \ \ a'=\left\lceil\frac{d_i'+(l'-1)\Lambda_2}{G}\right\rceil.
 \end{align*} Given the parallels of $\mathcal{S}$, the corresponding parallels of $\mathbf{S}'$ can be obtained as follows: the parallel $\mathcal{F}_{\mathcal{S}',d'}$ is obtained by replacing all the integers in $\mathcal{S}$ and not in $\mathcal{S}'$ by the integers in $\mathcal{S}'$ and not in $\mathcal{S}$. When $d\neq d'$, the intersection of the parallels $\mathcal{F}_{\mathcal{S},d}$ and $\mathcal{F}_{\mathcal{S},d'}$ is empty. In this case, the intersection of the parallels $\mathcal{F}_{\mathcal{S},d}$ and $\mathcal{F}_{\mathcal{S}',d'}$ is empty too. In addition, each parallel of $\mathcal{S}$ represent the $\frac{t_1+\tau_1}{\tau_1}$ star position sets of $\mathbf{X}^{(\mathbf{x})}$ and $\mathbf{Y}^{(\mathcal{S},y_3)}$. So, when $\mathcal{D}\cap\mathcal{D}'=\emptyset$, there is no common star position set in $\mathbf{X}^{(\mathbf{x})}$ and $\mathbf{Y}^{(\mathcal{S},y_3)}$. In fact, when 
 $G\leq \langle(a-a')G\rangle_{\Lambda_{2}}\leq \Lambda_{2}-G$ holds, $\mathcal{D}\cap\mathcal{D}'=\emptyset$ always holds. For the detail proof, please see Appendix \ref{app-empty}. This implies that all the rows of $\mathbf{P}'$ has different star position, i.e., $\mu=1$. Then the condition C$4$-b) of Definition \ref{def-MIMO-PDA} holds.

In the following, let us count the degree $\text{d}(\mathbf{x})$ and $\text{d}(\mathbf{y})$ in $\mathcal{G}_{\mathbf{X},\mathbf{Y}}$. It is not difficult to count that there are exactly $${t_1+\tau_1 \choose t_1+\tau_1-\tau_2}{K_1-t_1+\tau_1\choose \tau_2}$$ subsets $\mathcal{S}'$ satisfying $|\mathcal{S}\cup\mathcal{S}'|=t_1+\tau_1+\tau_2$, and $\frac{\Lambda_{2}-2G}{\gcd(\Lambda_{2},G)}+1$ different integers $a'$ in $[\frac{\Lambda_{2}}{\gcd(\Lambda_{2},G)}]$ satisfying the third condition in \eqref{eq-conditions}.
Recall that the integer $y_3$ can run all the values of $[m]$. There are exactly $$\text{d}(\mathbf{x})={t_1+\tau_1 \choose t_1+\tau_1-\tau_2}{K_1-t_1+\tau_1\choose \tau_2}\left(\frac{\Lambda_{2}-2G}{\gcd(\Lambda_{2},G)}+1\right)m$$ vector $\mathbf{y}$ satisfying the conditions in \eqref{eq-conditions}. 

Conversely, it is not difficult to count that there are exactly ${t_1+\tau_1 \choose t_1+\tau_1-\tau_2}{K_1-t_1+\tau_1\choose \tau_2}$ subsets $\mathcal{S}$ satisfying $|\mathcal{S}\cup\mathcal{S}'|=t_1+\tau_1+\tau_2$ and $\frac{\Lambda_{2}-2G}{\gcd(\Lambda_{2},G)}+1$ different integers $a$ in $[\frac{\Lambda_{2}}{\gcd(\Lambda_{2},G)}]$ satisfying the third condition in \eqref{eq-conditions}. Since each $\mathbf{Q}$ is replicated $m\binom{t_1+\tau_1}{\tau_2}$ times to construct the array $\mathbf{X}$ in \eqref{eq-P-Q}, there are exactly
$$\text{d}(\mathbf{x})={t_1+\tau_1 \choose t_1+\tau_1-\tau_2}{K_1-t_1+\tau_1\choose \tau_2}\left(\frac{\Lambda_{2}-2G}{\gcd(\Lambda_{2},G)}+1\right)m\binom{t_1+\tau_1}{\tau_2}$$ 
vector $\mathbf{x}$ satisfying the conditions in \eqref{eq-conditions}. Since $\mathbf{x}$ and $\mathbf{y}$ is any vertices of $\mathcal{X}$ and $\mathcal{Y}$ respectively, we have d$(\mathcal{X})=$d$(\mathbf{x})$ and d$(\mathcal{Y})=$d$(\mathbf{y})$. Since $\text{d}(\mathcal{X})\leq \text{d}(\mathbf{Y})$, by Lemma \ref{lem-factor} there exists a matching with $|\mathcal{X}|$ edges. Since $|\mathcal{X}|=|\mathcal{Y}|$, the bipartite graph $\mathcal{G}_{\mathbf{X},\mathbf{Y}}$ has a perfect matching.

\section{The proof of $\mathcal{D}\cap\mathcal{D}'=\emptyset$}
\label{app-empty}From \eqref{eq-MN-array-hyger}, we have 
\begin{align*}
\mathcal{D}&=\{\langle(a-1)G+1\rangle_{\Lambda_{2}},\langle(a-1)G+2\rangle_{\Lambda_{2}},\ldots,\langle aG\rangle_{\Lambda_{2}}\}\\
\mathcal{D}'&=\{\langle(a'-1)G+1\rangle_{\Lambda_{2}},\langle(a'-1)G+2\rangle_{\Lambda_{2}},\ldots,\langle a'G\rangle_{\Lambda_{2}}\}
\end{align*}Assume that there exist two integers $i$, $j\in[G]$, satisfying $\langle(a-1)G+i\rangle_{\Lambda_{2}}=\langle(a'-1)G+j\rangle_{\Lambda_{2}}$ which implies that
\begin{align}
\label{eq-eqaulity}
\langle(a'-a)G\rangle_{\Lambda_{2}}=\langle(i-j)G\rangle_{\Lambda_{2}}.
\end{align}Since $i - j \in [-(G-1),G-1]$, \eqref{eq-eqaulity} holds if and only if $\langle(a'-a)G\rangle_{\Lambda_{2}}\in [-(G-1),G-1]$. This impossible since 
$G\leq \langle(a-a')G\rangle_{\Lambda_{2}}\leq \Lambda_{2}-G$.

\end{appendices}

\bibliographystyle{IEEEtran}
\bibliography{reference}

\end{document}